\date{}
\author{
Ali Tajer\thanks{ECSE Department, Rensselaer Polytechnic Institute, Troy, NY 12180.} \and Javad Heydari\thanks{Advanced AI Lab, LG Electronics USA, Santa Clara, CA 95054.} \and H. Vincent Poor\thanks{EE Department, Princeton University, Princeton, NJ  08540.}}
\newtheorem{theorem}{Theorem}
\newtheorem{lemma}{Lemma}
\newtheorem{definition}{Definition}
\newtheorem{corollary}{Corollary}
\newtheorem{example}{Example}
\def \F {{\cal F}}
\def \cT {{\cal T}}
\def \cS {{\cal S}}
\def \X {{\cal X}}
\def \M {{\cal M}}
\def \A{{\cal A}}
\def \G{{\cal G}}
\def \L{{\cal L}}
\def \N{{\cal N}}
\def \cP{{\cal P}}
\def \V{{V}}
\def \E{{E}}
\def\cH{{\cal H}}
\def\e{{\rm e}}
\def \sH{{\sf H}}
\def \sT{{\sf T}}
\def \H{{\sf H}}
\def \bSigma{{\mathbf \Sigma}}
\newcommand\independent{\protect\mathpalette{\protect\independenT}{\perp}}
\def\independenT#1#2{\mathrel{\rlap{$#1#2$}\mkern2mu{#1#2}}}
\newcommand{\dff}{\stackrel{\scriptscriptstyle\triangle}{=}}
\newcommand{\bbe}{\mathbb{E}}
\def \P{\mathbb{P}}
\def \med{\;|\;}
\def \d{{\delta}}
\def \der{{\rm d}}
\DeclareMathOperator*{\argmax}{arg\,max}
\newcommand\Prob{{{\cal P}({\alpha},{\beta})}}
\begin{document}

\title{\Large \bf Active Sampling for the Quickest Detection of Markov Networks}

\maketitle

\allowdisplaybreaks

\begin{abstract}
\textcolor{black}{Consider $n$ random variables forming a Markov random field (MRF). The true model of the MRF is unknown, and it is assumed to belong to a binary set. The objective is to sequentially sample the random variables (one-at-a-time) such that the true MRF model can be detected with the fewest number of samples, while in parallel, the decision reliability is controlled. The core element of an optimal decision process is a rule for selecting and sampling the random variables over time. Such a process, at every time instant and adaptively to the collected data, selects the random variable that is expected to be most informative about the model, rendering an overall minimized number of samples required for reaching a reliable decision. The existing studies on detecting MRF structures generally sample the entire network at the same time and focus on designing optimal detection rules without regard to the data-acquisition process. This paper characterizes the sampling process for general MRFs, which, in conjunction with the sequential probability ratio test, is shown to be optimal in the asymptote of large $n$. The critical insight in designing the sampling process is devising an information measure that captures the decisions' inherent statistical dependence over time. Furthermore, when the MRFs can be modeled by acyclic probabilistic graphical models, the sampling rule is shown to take a computationally simple form. Performance analysis for the general case is provided, and the results are interpreted in several special cases: Gaussian MRFs, non-asymptotic regimes, connection to Chernoff's rule to controlled (active) sensing,  and the problem of cluster detection.}


\end{abstract}



\section{Introduction}

{\renewcommand{\thefootnote}{}\footnotetext{\noindent This paper was presented in part at the 2015 Annual Allerton Conference on Communication, Control, and Computing.}}


\subsection{Overview}
\textcolor{black}{Driven by advances in information sensing and acquisition, many application domains have evolved towards interconnected networks of information sources in which large-scale and complex data is constantly generated and processed for various inferential and decision-making purposes. Induced by their physical couplings, such information sources generate data streams that often bear strong statistical dependence structures. Probabilistic graphical models, in general, and Markov random fields (MRFs), in particular, provide effective analytical frameworks for encoding the statistical relationship among the datasets generated by different agents in a network~\cite{Ising,PGMbook,Texture,SAR}. 
}

\textcolor{black}{
Forming inferential decisions in an MRF  strongly hinges on determining the dependence structure embedded in the MRF. There are two distinct aspects to determining an MRF structure: selecting (estimating) versus differentiating (detecting) the models. In {\bf model selection} (structure learning), the objective is to sample the random variables that form an MRF, and select (estimate) the edge set of the graphical model associated with the MRF~(a representative list includes~\cite{friedman:08,WR1, WR2, WR3,cai:11,cai:11b,Anandkumar2012,danaher:14,cohn1996active,tong:01,tong:01b,he2008active,vats:14,dasarathy:16}). While the problem of graph structure learning is NP-hard in its general form~\cite{David},  it becomes feasible under proper restrictions on the structure of the graph, e.g., limiting the graph to the classes of sparsely-connected graphs, edge-bounded graphs, and degree-bounded graphs. There is a rich literature investigating the algorithmic and information-theoretic aspects of structure learning, especially for Gaussian and Ising graphical models. The existing studies can be distinguished based on the sampling mechanisms that they adopt. Broadly, there exists two distinct approaches to sampling: (i) pre-specific sampling, in which sampling is agnostic to the data and follows pre-specified rules~\cite{friedman:08,WR1, WR2, WR3,cai:11,cai:11b,Anandkumar2012,danaher:14}, and (ii) active sampling, in which the sampling decisions are data-driven and they are updated dynamically as the data is collected~\cite{cohn1996active,tong:01,tong:01b,he2008active,vats:14,dasarathy:16}. In active sampling methods, sampling and model selection processes are inherently coupled, and the emphasis is on co-designing these two processes. In contrast, when the sampling mechanism is pre-specified, the sampling and model selection processes are decoupled, and the emphasis is placed on forming reliable decisions given a set of samples. 
}

\textcolor{black}{
In contrast to model selection, in {\bf model detection}, the unknown model of an MRF is assumed to belong to a finite set of known models, and the objective is to sample the random variables in order to identify the true model. MRF model detection, in its simplest form, is used for deciding whether a given set of random variables are independent, which is referred to as testing against independence. More generally, dependence model detection is the process of deciding in favor of one dependence model against a group of alternative ones (a representative list of relevant literature includes~\cite{Ku:SP06,Anandkumar:IT09,Solo:ICASSP10,AriasCastro2012,Liang:nonparametric,Arias:2015,Berthet:2013,Hero:2012, xia:17}). The existing studies on MRF model detection adopt pre-specified sampling mechanisms and focus on forming detection rules. Furthermore, existing studies primarily investigate Gaussian MRFs.}

\textcolor{black}{
In this paper, we investigate  {\sl active} sampling for model detection in general MRFs. The objectives are (i) establishing the fundamentally minimum number of samples required for forming decisions with target reliability, and  (ii) characterizing the attendant sampling and detection rules. Characterizing an optimal active sampling algorithm that can detect the model of an MRF with the minimal number of samples is especially imperative as MRF's size or dimension grow, in which case sampling incurs substantial communication, sensing, and decision delay costs.  An active sampling process in an MRF is specified by the aggregate number of samples to be collected as well as the order in which they are collected.} When the order is pre-specified, determining the optimal sampling strategy reduces to minimizing the (average) number of samples. This can be effectively facilitated via sequential hypothesis testing, which is well-investigated. In sequential hypothesis testing, the samples are collected sequentially according to a {\sl pre-specified} order, and the sampling strategy dynamically decides whether to take more samples or to terminate the process and form a decision~\cite{Wald1945,PoorQuick,MhypI,Tartakovski98}. However, incorporating dynamic decisions about the order of sampling introduces a new dimensiont to decision-making, which is less-investigated. Forming such dynamic decisions that pertain to data acquisition naturally arises in a broad range of applications such as sensor management~\cite{sensor}, inspection, and classification~\cite{Classification}, medical diagnosis~\cite{Medical}, cognitive science~\cite{Shenoy}, generalized binary search~\cite{Nowak:IT11}, and channel coding with feedback~\cite{Burnashev}, to name a few.


\newpage



\subsection{Related Literature}

{\bf Controlled (active) sensing for detection.} One directly applicable approach to treat such coupled sampling and decision-making process is {\sl controlled sensing}, originally developed by Chernoff for binary composite hypothesis testing through incorporating a controlled information gathering process that dynamically decides about taking one of a finite number of possible actions at each time~\cite{Chernoff1959}. Under the assumption of uniformly distinguishable hypotheses and having independent control actions, Chernoff's rule decides in favor of the action with the best {\sl immediate} return according to proper information measures and achieves optimal performance in the asymptote of a diminishing rate of erroneous decisions. Chernoff's rule, specifically, at each time, identifies the most likely true hypothesis based on the collected data and takes the action that reinforces the decision. 

Extensions of the Chernoff's rule to various settings are studied in~\cite{Bessler,Albert1961,Box,Meeter}. Specifically, studies in~\cite{Bessler} and~\cite{Albert1961} investigate the extension of Chernoff's rule to accommodate an infinite number of available actions and an infinite number of hypotheses, and~\cite{Box} and~\cite{Meeter} provide alternative rules that are empirically shown to outperform the Chernoff rule in the non-asymptotic regimes. Recent advances in controlled sensing that are relevant to the scope of this paper include~\cite{Atia, Nitinawarat:SA15,Zhao:IT15,switchcost}. In~\cite{Atia}, Chernoff's rule is modified to relax the assumption that the hypotheses should be uniformly distinguishable in the multi-hypothesis setting. In this modified rule, a randomization policy is introduced into the selection rule such that at certain time instants it ignores the Chernoff rule and randomly selects one action according to a uniform distribution. This rule is shown to admit the same asymptotic performance as the Chernoff rule. The results are extended to the setting in which the available data belongs to a discrete alphabet and follows a stationary Markov model~\cite{Nitinawarat:SA15}. An application of the Chernoff rule to anomaly detection in a dataset is investigated in~\cite{Zhao:IT15}, where it is shown that when facing a finite number of sequences consisting of an anomalous one, the Chernoff rule is asymptotically optimal even without assuming that the hypotheses are distinguishable, or exerting randomized actions. The study in~\cite{switchcost} imposes a cost on switching among different actions and offers a modification of the Chernoff rule, which randomly decides between repeating the previous action, and a new action based on Chernoff's rule. It achieves the same asymptotic optimality property as the Chernoff rule. Similarly, the Chernoff rule is also applied to sparse signal recovery~\cite{sparse}, sequential estimation~\cite{Atia:GlobSIP13}, and classification problems~\cite{Kalman:class,Ligo} often resulting in considerable performance gains.

Besides  Chernoff's rule and its variations, there exist alternative strategies admitting certain optimality guarantees. In pioneering studies,~\cite{Schwarz} and~\cite{Kiefer1963} offer a strategy that initially takes a number of samples according to a pre-designated rule in order to identify the true hypothesis, after which it selects the action that maximizes the information under the identified hypothesis. The study in~\cite{Lalley1986} proposes a heuristic strategy and characterizes the deviation of its average delay from the optimal rule. More studies have investigated the Bayesian setting~\cite{Naghshvar2013,Tara,Wang:Arxiv15,Zhao:SP14,Zhao:SP15}. The study in~\cite{Naghshvar2013} considers a sequential multi-hypothesis testing problem with multiple control actions for which the optimal strategy is the solution to dynamic programming that is computationally intractable. Hence, it designs two heuristic policies and investigates their non-asymptotic and asymptotic performances. For the same problem, performance bounds and the gains of sequential sampling and optimal data-adaptive selection rules are analyzed in the asymptote of the large cost of erroneous decisions~\cite{Tara}. The study in~\cite{Wang:Arxiv15} restricts the smaples to be generated by the exponential family distributions and shows that the dimension of the sufficient statistic space is less than both the number of parameters governing the exponential family and the number of hypotheses. Hence, the exactly optimal policy can be characterized by only moderate computational complexity. Other heuristic approaches for anomaly detection are also investigated in~\cite{Zhao:SP14} and~\cite{Zhao:SP15}, which select the action with the minimum immediate effect on the total Bayesian cost and are shown to achieve the same optimality guarantees suggested by Chernoff~\cite{Chernoff1959}.

Despite their discrepancies in settings and approaches, all the studies above on controlled sensing assume that the available actions are {\sl independent} or follow a first-order stationary Markov process. This is in contrast to the setting of this paper, in which the correlation structure in the generated data under one hypothesis or both induces co-dependence among the control actions. In this paper, we devise a sequential sampling strategy for detecting Markov networks, in which the correlation model plays a significant role in forming the sampling decisions. Specifically, the devised selection rule, unlike the Chernoff rule, incorporates the correlation structure into the decision-making via accounting for the impact of each action on the future ones and selecting the one with the largest {\sl expected} information under the most likely true hypothesis. The associated optimality guarantees are established, and the specific results for the special case of Gaussian distributions are characterized. The gains of the proposed selection rule are also delineated analytically and by numerical evaluations.\vspace{.05 in}


\noindent \textcolor{black}{{\bf Active learning for model selection.} Unlike for model detection, active sampling for model selection (structure learning) is investigated in more depth~\cite{cohn1996active,vats:14,dasarathy:16,tong:01,tong:01b,he2008active}. In~\cite{cohn1996active}, a model selection problem in a supervised setting is considered, to which active learning is applied in order to identify the set of training examples that should be used to minimize the integrated variance of the model. The studies in~\cite{vats:14} and~\cite{dasarathy:16} propose active learning algorithms for selecting the structures of MRFs. Their main distinction from our work is that they are concerned with a structure learning problem, while in this paper, the true model is selected from a finite set of candidate models. In~\cite{tong:01,tong:01b,he2008active}, active learning over Bayesian networks. In~\cite{tong:01}, it is assumed that the graphical model underlying the Bayesian network is known, and the objective is estimating network parameters. The studies in~\cite{tong:01b,he2008active,murphy:01} are concerned with learning the connectivity structure, the parameters, and the direction of the causal relationship among the nodes.}

\textcolor{black}{In another related direction, model selection is performed via estimating the covariance matrix (or its inverse) of the data~\cite{xia:17,cai:11,cai:11b,danaher:14,friedman:08}. In~\cite{xia:17}, the temporal correlation of the data is treated as a nuisance parameter, and by making a Gaussian assumption, a sufficient test statistic, as well as a test procedure, is proposed to identify all the non-zero elements of the precision matrix with guaranteed performance. Estimation of sparse covariance matrices via adaptive thresholding is considered in~\cite{cai:11}. By adapting the threshold to the variability of individual entries in a data-driven setting, it is shown that, compared to the commonly used universal thresholding estimators, these estimators achieve the optimal rate of convergence over a large class of sparse covariance matrices under the spectral norm and enjoy excellent performance both theoretically and numerically. In~\cite{cai:11b,danaher:14,friedman:08} estimation of sparse precision matrices is considered, and $\ell_1$ minimization is used to solve the problem by using the estimated covariance matrix of the given data. All the studies above consider the problem in the fixed sample-size setting, and their application to model selection is limited to Gaussian distributions.
}


\section{Data Model and Problem Formulation}
\label{sec:problem_formulation}

\subsection{Notation}

 \textcolor{black}{
Throughout the paper $(\Omega,\F, \mathbb{P})$ is a probability space on which all the probability measures are defined. In this space, consider $n$ random variables $\X\dff\{X_1,\dots,X_n\}$ forming a Markov random field (MRF) with respect to an undirected graph $\G(\V,\E)$ with nodes $\V\dff\{1,\dots,n\}$ and the edge set $\E\subseteq\V\times\V$. Throughout the paper, for any given set $A\subseteq \{1,\dots,n\}$ we define $X_A\dff \{X_i\;:\; i\in A\}$. Random variables $\X$ satisfy the global Markov property, that is  any two disjoint subsets of random variables are conditionally independent given a separating set, i.e., 
\begin{align}
X_A \independent X_B \mid X_C \ ,
\end{align}
where $C$ separates disjoint $A$ and $B$ such that every path between a node in $A$ and a node in $B$ passes through at least one node in $C$. One immediate result of the global Markov property is the pair-wise Markov property, i.e., 
\begin{align}
\forall \;\; (i,j)\notin\E \quad \Leftrightarrow \quad X_i \independent X_j \mid X_{\V\setminus\{i,j\}}\ .
\end{align}
The model of the underlying $\X$ is unknown, and it is assumed to obey one of the two possible known models. Detecting the MRF model can be formalized as the solution to the binary hypothesis test:
\begin{equation}\label{eq:hyp:model}
\H_0\ :\; (X_1,\dots, X_n)\sim \P_0 \qquad \mbox{versus} \qquad \H_1\ :\;  (X_1,\dots, X_n)\sim \P_1\ ,
\end{equation}
where $P_0$ and $P_1$ denote the two known and completely distinct probability measures governing the two models. We denote the  undirected dependency graphs associated with joint measures $P_0$ and $P_1$ by $\G_0(\V,\E_0)$ and $\G_1(\V,\E_1)$, respectively. Figure~\ref{fig:model} depicts the graphs associated with the precision matrices of a a dichotomous Gaussian MRF model, in which the edges encode the conditional dependency structures.} For convenience in notations, we assume that the distributions of the random variables under each hypothesis $\ell\in\{0,1\}$ are absolutely continuous with respect to a common distribution and have well-defined probability density functions (pdfs). For every non-empty set $A\subseteq\V$, we denote the joint pdf of ${X}_A$ under $\H_\ell$ by $f_\ell(\cdot;A)$. We also define $\sT\in\{\H_0,\H_1\}$ as the true hypothesis and denote the prior probability that hypothesis $\H_\ell$ is true by $\epsilon_\ell$, where $\epsilon_0+\epsilon_1=1$.

\begin{figure}[t]
\centering
\includegraphics[width=5in]{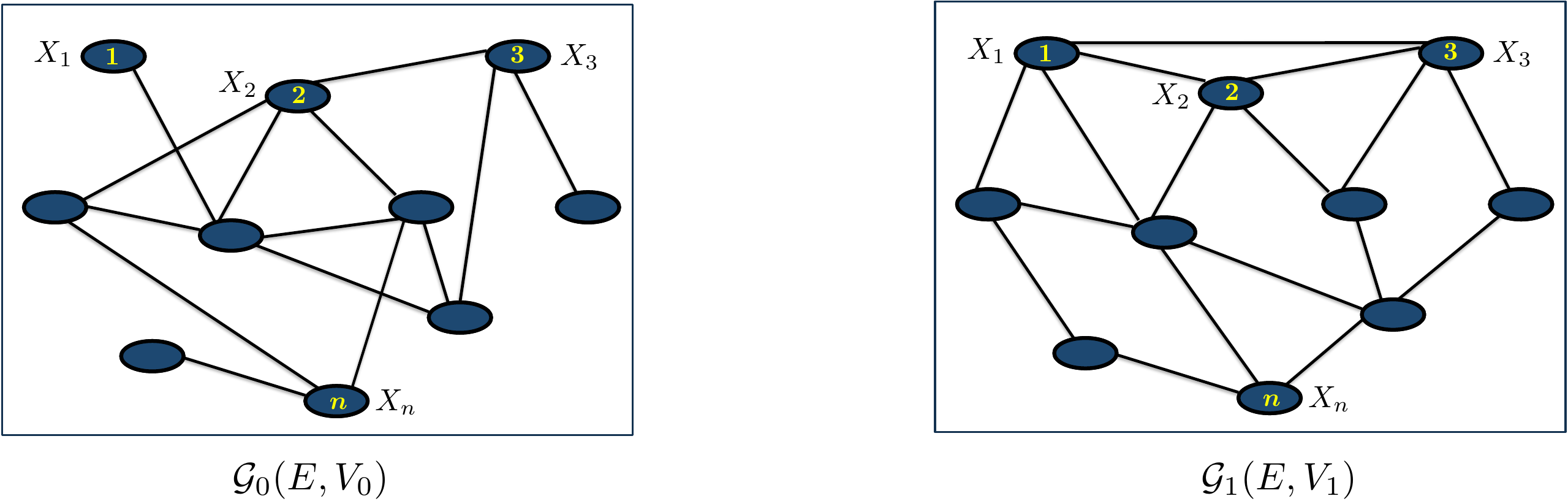}
\renewcommand{\figurename}{Fig.}
\caption{Data model with two different correlation structures.}
\label{fig:model}
\end{figure}

\subsection{Sampling Model}

We consider a fully sequential data acquisition mechanism, in which we select and sample one node at-a-time. The objective is to identify an optimal sequence of nodes, such that with the minimum number of samples, the true model $\sT\in\{\H_0,\H_1\}$ can be discerned. Samples are collected sequentially, such that at any time $t$ and based on the information accumulated up to that time, the sampling procedure takes one of the following actions.
\begin{enumerate}
 \item[A$_1$)] {\sl Exploration:} Due to lack of sufficient confidence, making any decision is deferred, and one more sample is taken from another node in the graph. Under this action, the node to be selected should be specified.
 \item[A$_2$)]{\sl Detection:} Data collection process is terminated, and a reliable decision about the true model of the graph is formed. Under this action, the stopping time and the final decision rule upon stopping will be specified.
\end{enumerate}
The sampling process can be expressed uniquely by the data-adaptive rule for selecting the nodes over time, the stopping rule, the final detection decision rule. \textcolor{black}{To formalize the information-gathering process (exploration), we define  $\psi_n: \{1,\dots,n\}\rightarrow \V$, where $\psi_n(t)$ returns the index of the node observed at time $t$.} Accordingly, we define $\psi_n^t$ as the {\sl ordered} set of nodes selected and sampled up to time $t$, i.e., $\psi_n^t\dff \{\psi_n(1),\dots,\psi_n(t)\}$. 
We also define $\varphi_n^t$ as the set of nodes that are remained unobserved prior to time $t$ and can be observed at $t$, i.e., $\varphi_n^t\dff \V\backslash \psi_n^{t-1}$. We denote sample collected at time $t$ by $Y_t\dff X_{\psi_n(t)}$, and denote the sequence of samples accumulated up to time~$t$ by ${Y}^t\dff\big(Y_1,\dots,Y_t\big)$. The information accumulated sequentially generate a $\sigma$-algebra of $\F$ denoted by  $\{\F_t:\;t=1,2,\dots\}$, where
\begin{equation}\label{eq:F}
\F_t\dff\sigma(Y^t;\psi_n^t)\ .
\end{equation}
We define $\tau_n\in\mathbb{N}$ as the Markov stopping time with respect to the family $\{\F_t\}$, at which the sampling process is terminated and a decision is formed. We also define $\delta_n\in\{0,1\}$ as an $\F_t$-measurable function as the terminal decision rule, where $\delta_n=\ell$ indicates accepting hypothesis $\H_\ell$, for $\ell\in\{0,1\}$. 
Based on the decision rules defined above, we define the tuple $\Phi_n\dff(\tau_n,\delta_n,\psi_n^{\tau_n})$ to uniquely specify the sampling strategy\footnote{We remark that the subscript $n$ is included in all decision rules to signify the effect of network size.} and the decision rules involved. Finally, we define two information measures that are instrumental in formalizing and analyzing various decision rules throughout the paper. Specifically, for any given $\psi^t_n$ and $A\subseteq\{1,\dots, n\}\backslash\psi^t_n$ we define
\begin{align}\label{eq:J0}
\mathscr{J}_0 (A,\psi^t_n) & \dff D_{\rm KL}\big(f_0(X_A; A \med \F_t)\; \|\; f_1(X_A; A \med \F_t)\big)\ ,\\
\label{eq:J1}
\mbox{and}\quad {\mathscr J}_1 (A,\psi^t_n) & \dff D_{\rm KL}\big(f_1(X_A; A \med \F_t)\; \|\; f_0(X_A; A \med \F_t)\big)\ .
\end{align}
$ D_{\rm KL}(f\;\|\;g)$ denotes the Kullback-Leibler (KL) divergence  from a statistical model with pdf $g$ to a model with pdf $f$. 
 
\subsection{Problem Statement}

The coupled information-gathering strategy and decision-making processes are uniquely specified by the triplet $\Phi_n=(\tau_n,\delta_n,\psi_n^{\tau_n})$. Designing the optimal sampling strategy for achieving the quickest reliable decision involves resolving the tension between the {\sl quality} and {\sl agility} of the process as two opposing measures (improving one penalizes the other one). The agility of the process is captured by the average delay in reaching a decision, i.e., $\mathbb{E}\{\tau_n\}$, and the decision quality is captured by the frequency of erroneous decisions denoted by
\begin{align}
\mathsf{P}_n^{0} & \;\dff\; \mathbb{P}_0(\d_n=1)\ , \qquad \mbox{and} \qquad  \mathsf{P}_n^{1}  \;\dff\; \mathbb{P}_1(\d_n=0)\ .
\end{align}
To formalize the quickest reliable decision, we control the quality of the decision and minimize the average number of samples over all possible combinations of $\Phi_n=(\tau_n,\delta_n,\psi_n^{\tau_n})$. Therefore, an optimal sampling strategy of interest is a solution to
\begin{equation}
\cP({\alpha},{\beta})\dff \left\{
\begin{aligned}
\label{eq:Opt}
&\textstyle\inf_{\Phi_n}  &&\mathbb{E}\{\tau_n\} \; \\
&\;\mbox{s.t. } && \mathsf{P}_n^{0}\leq {\e}^{-n{\alpha}} \ \\ 
& && \mathsf{P}_n^{1}\leq {\e}^{-n{\beta}}\ \
\end{aligned} \ ,\right. 
\end{equation}
where ${\alpha},{\beta} \in\mathbb{R}_+$ control the error probability terms $\mathsf{P}_n^{0}$ and $\mathsf{P}_n^{1}$, respectively, and are selected such that the problem $\Prob$ is feasible.

\section{Network-guided Active Sampling}
\label{sec:opt}

\textcolor{black}{The core element in characterizing the decision tuple $\Phi$ is the data-adaptive and sequential sampling process $\psi_n(t)$. The structure of this process is strongly shaped by the two MRFs specified under $\H_0$ and $\H_1$. In this section, we characterize a data-adaptive and sequential sampling process and show that this process, in conjunction with a thresholding rule for the stopping time and a likelihood ratio detection rule, constitutes an optimal solution to \eqref{eq:Opt}. Optimality properties, performance analysis, and complexity analysis are provided in Section~\ref{sec:per}.
}

\subsection{Terminal Decision Rules}
\textcolor{black}{Before providing the details of the core process (node selection rule), we briefly discuss the terminal decision rules.} For this purpose, define\footnote{For simplicity in notations, throughout the rest of the paper, we omit the subscript $n$ in terms $\psi_n^t$, $\psi_n(t)$, and $\varphi_n^t$.}
\textcolor{black}{
\begin{equation}\label{LLR}
\Lambda_t \;\dff\; \ln\frac{f_1(Y^t; \psi^t)}{f_0(Y^t; \psi^t)}\ ,
\end{equation}
}
as the log-likelihood ratio (LLR) of the samples collected up to time $t$. It can be readily verified that
\begin{equation}\label{LLR2}
\Lambda_{t+1} \;=\; \Lambda_t+\ln\frac{f_1(Y_{t+1};\psi(t+1)\,|\,\F_t)}{f_0(Y_{t+1};\psi(t+1)\,|\,\F_t)}\ .
\end{equation}
\noindent {\bf Stopping rule:} To specify the stopping rule of the sampling process, we define 
\begin{align}\label{eq:th1}
\gamma_n^{\rm L}  & \;\dff\; -n{\beta} \ ,\quad \text{and }\quad\gamma_n^{\rm U}   \;\dff\; n{\alpha} \ ,
\end{align}
and specify the stopping time through the following sequential likelihood ratio test:
\begin{equation} \label{eq:stop}
\tau^*_n\;\dff\;\inf\enspace\big\{t\;:\;\Lambda_t\notin (\gamma_n^{\rm L} ,\gamma_n^{\rm U} )\;\; \text{or}\;\;  t=n\big\}\ .
\end{equation}
\textcolor{black}{This is a {\sl truncated} sequential probability ratio test (SPRT), in which the delay is bounded by the total number of samples possibly available. If we drop the condition $t=n$, the stopping rule simplifies to that of the canonical SPRT. We note that dependin on the context, there exist other variations of truncated SPRT as well~\cite{Siegmund:Book1985}.}

\noindent {\bf Detection rule:} 
At the stopping time, we decide on the model according to 
\begin{equation}\label{eq:SPRT}
\d^*_n\dff \left\{
\begin{array}{ll}
\vspace{1.5mm}
0\ , & \quad \mbox{if}\ \Lambda_{\tau^*_n}< 0 \\
1\ , & \quad \mbox{if}\ \Lambda_{\tau^*_n} \geq 0 
\end{array}\right.\ .
\end{equation}
Based on~\eqref{eq:stop} and~\eqref{eq:SPRT}, the sampling process resumes as long as $\Lambda_t\in(\gamma_n^{\rm L} ,\gamma_n^{\rm U} )$ and terminates as $\Lambda_t$ falls outside this band or we exhaust all the samples, i.e., $t=n$. Furthermore, if $\Lambda_t$ exits this interval from the upper threshold~$\gamma_n^{\rm U} $ the set $\{X_1,\dots,X_n\}$ is deemed to form a Markov network with model $\P_1$, and if it falls below the lower threshold $\gamma_n^{\rm L} $ we make a decision in favor of $\P_0$. \textcolor{black}{
We remark this decision rule is different from that of the SPRT. Specifically, our thresholds are constants, while those of the SPRT are controlled by the target error probabilities, i.e., 
\begin{equation}\label{eq:SPRT_SPRT}
\delta^*_{\rm SPRT}=\left\{\begin{array}{ll}
\vspace{1.5mm}
0\ , & \quad \text{if}\ \Lambda_{\tau_{\rm SPRT}}< \gamma^{\rm L}_n\\
1\ , & \quad \text{if}\ \Lambda_{\tau_{\rm SPRT}}\geq \gamma^{\rm U}_n 
\end{array}\right.\ .
\end{equation}
}

\subsection{Dynamic Sampling}
\label{sec:selection}

 \textcolor{black}{At any time $t\in\{1,\dots ,\tau_n\}$, prior to the stopping time, based on the information accumulated up to time $(t-1)$ the sampling process dynamically identifies and takes a sample from one unobserved node that is expected to provide the most relevant information about the true hypothesis. In this subsection, we provide two approaches to dynamic node selection. First, we provide the design of the selection rule based on Chernoff's principle, as the widely used approach for various controlled (active) sensing decisions. Its widespread use is mainly due to its computational simplicity and the fact that it admits asymptotic optimality in a wide range of settings. Next, we discuss the shortcomings of Chernoff's rule, mainly because it loses its optimality (even in the asymptotic regime) for the problem at hand. Motivated by this, we finally offer an alternative rule to circumvent the Chernoff rule's shortcomings. We remark that discussing Chernoff's rule serves a two-fold purpose: it furnishes some of the elements for designing the optimal approach and serves as the baseline for assessing the performance of the proposed rule.
}

\subsubsection{Chernoff's Principle}

In the context of the problem studied in this paper, at any time $t$ and based on the filtration $\F_t$,  Chernoff's rule first forms the maximum likelihood (ML) decision about the true model of the data $\sT\in\{\H_0,\H_1\}$. By denoting the ML decision about the true hypothesis at time $t$ by $\delta_{\rm ML}(t)$ we have
\begin{equation}\label{eq:ml:decision}
\delta_{\rm ML}(t)\dff\left\{\begin{array}{ll}
\vspace{1.5mm}
\H_0 \ , & \rm{if}\ \Lambda_{t} < 0 \\
\H_1 \ , & \rm{if}\ \Lambda_{t}\geq 0 
\end{array}\right. .
\end{equation}
Next, based on this decision Chernoff's rule at time $t$ selects and samples the node whose sample is expected to maximally reinforce the decision $\d_{\rm ML}(t)$ to be also the decision at time $(t+1)$. We define $\psi_{\rm ch}(t)$ as the node selected by Chernoff's rule at time $t$, and accordingly define the {\sl ordered} set $\psi_{\rm ch}^t=\{\psi_{\rm ch}(1), \dots, \psi_{\rm ch}(t)\}$. To formalize  Chernoff's rule in the context of the hypothesis testing problem considered in this paper, and in order to quantify the information gained from each sample, we define the following two measures: 
\begin{align}\label{eq:metric:Chernoff1}
D^i_0(t) \dff \; & {\mathscr J}_0(\{i\},\psi_{\rm ch}^{t-1})\ , 
\qquad\mbox{and }\qquad  \ D_1^i(t)\dff \;  {\mathscr J}_1(\{i\},\psi_{\rm ch}^{t-1}) \  ,
\end{align}
where ${\mathscr J}_0$ and ${\mathscr J}_1$ are defined in \eqref{eq:J0} and~\eqref{eq:J1}, respectively. Measure $D_\ell^i(t)$ quantifies the information gained by observing node $i$ at time $t$ when the true hypothesis is $\H_\ell$. 
Chernoff's rule selects the node that maximizes the distance between $f_\ell$ and its alternative when the ML decision is in favor of $\H_\ell$. Therefore, we obtain the following node selection function:
\begin{equation}\label{eq:Chernoff}
\psi_{\rm ch}(t)\dff\left\{\begin{array}{ll}
\vspace{1.5mm}
\displaystyle \argmax_{i\in\varphi^t}\ D_0^i(t) \ , & \mbox{if}\ \delta_{\rm ML}(t-1)=\H_0 \\
\displaystyle \argmax_{i\in\varphi^t}\ D_1^i(t) \ , & \mbox{if}\ \delta_{\rm ML}(t-1)=\H_1
\end{array}\right. .
\end{equation}
To avoid any ambiguities, whenever $\argmax_{i\in\varphi_n^t}\ D_\ell^i(t)$ is not unique (for instance, at the beginning of the sampling process), we select one node randomly according to a uniform distribution.  Chernoff's rule minimizes the average delay in the asymptote of a low rate of erroneous decisions if all the selection actions are {\sl independent}~\cite{Chernoff1959,Atia}, which in the context of this paper translates to testing for two distributions without any correlation structures. In this paper, however, the available actions, i.e., selecting unobserved nodes, are co-dependent due to the underlying MRF's correlation structure. Therefore, Chernoff's rule, which ignores such correlation, fails to exploit it. Specifically, by selecting the best immediate action, Chernoff's rule ignores the perspective of the decisions and the impact of the current decision on the future ones.

\textcolor{black}{We provide an example in Section~\ref{sec:counter} through which we show that designing the node selection rule based on Chernoff's principle is not optimal (even asymptotically).} Our analyses show that incorporating the impact of the decisions on future actions improves the agility of the process significantly. This, in turn, brings about computational complexities, which we will show that can be reduced considerably by leveraging the MRF structures. In the context of the problem analyzed in this paper, another disadvantage of Chernoff's rule is that when the MRFs are comprised of multiple disconnected subgraphs. In such cases, the sampling strategy will be trapped in one subgraph until it exhausts all the nodes in that subgraph before moving to another one. This limits the flexibility of the sampling strategy for freely navigating the entire graph. Another shortcoming of Chernoff's rule that penalizes the quickness significantly is when the highly correlated nodes (random variables) are concentrated in a cluster with a size considerably smaller than that of the graph $n$. In such cases, our proposed selection rule approaches the cluster more rapidly.

\subsubsection{Active Sampling Rule}

We start by introducing information measures that \textcolor{black}{link the node selection decisions over time. This enables} dynamically incorporating the impact of the decision at any given time on all possible future ones. We select these measures to facilitate selecting the nodes, the samples of which maximize the combination of immediate information, and future expected information. To this end, at time $t$ and for each node $i\in\varphi^t$ we define the set $\mathcal{R}_t^i$ as the set of all subsets of $\varphi^t$ that contain $i$, i.e., 
\begin{align}
\label{eq:S_valid}
\mathcal{R}_t^i\dff\{\cS\;:\;\cS\subseteq \varphi^t\ \text{and}\ i\in\cS\} \ .
\end{align}
Corresponding to the samples collected from the nodes in the set $\cS\in\mathcal{R}_t^i$, under $\H_0$ and $\H_1$ we define the following information measures:
\begin{align} \label{eq:M0}
M_0^i (t,\cS) \dff {\mathscr J}_0(\cS,\psi^{t-1})=\ \bbe_0\bigg\{\ln\dfrac{f_0(X_{\textcolor{black}{\cS}};\textcolor{black}{\cS}\,|\,\F_{t-1})}{f_1(X_{\textcolor{black}{\cS}};\textcolor{black}{\cS}\,|\,\F_{t-1})}\bigg\}\ ,\\
\label{eq:M1}
\mbox{and}\qquad  M_{\textcolor{black}{1}}^i (t,\cS) \dff {\mathscr J}_1(\cS,\psi^{t-1})=\ \bbe_1\bigg\{\ln\dfrac{f_1(X_{\textcolor{black}{\cS}};\textcolor{black}{\cS}\,|\, \F_{t-1})}{f_0(X_{\textcolor{black}{\cS}};\textcolor{black}{\cS}\,|\,\F_{t-1})}\bigg\}\ .
\end{align}
The terms $M_\ell^i(t,\cS)$ capture the information content of $|\cS|$ samples. Hence, the normalized terms $\frac{1}{|\cS|}M_\ell^i(t,\cS)$ account for the {\sl average} information content  per sample. Based on these two normalized measures, an optimal action is to select the node that maximizes the {\sl average} information over all possible future decisions. Therefore, the node selection function is the solution of the following optimization problem over all combinations of the unobserved nodes:
\begin{equation}\label{eq:MC}
\psi^*(t)=\left\{\begin{array}{ll}
\vspace{2.5mm}
\displaystyle \argmax_{i\in\varphi^t}\ \max_{\cS\in\mathcal{R}_t^i} \enspace \frac{M_0^i(t,\cS)}{|\cS|}\ , & \mbox{if}\ \delta_{\rm ML}(t-1)=\H_0 \\
\displaystyle  \argmax_{i\in\varphi^t}\ \max_{\cS\in\mathcal{R}_t^i} \enspace \frac{M_1^i(t,\cS)}{|\cS|} \ , & \mbox{if}\ \delta_{\rm ML}(t-1)=\H_1
\end{array}\right. .
\end{equation}
In this selection rule, an ML decision about the true hypothesis is formed based on the collected data, and the node that maximizes the average information over all possible future sequences of samples is selected. We note that the sets $\cS$ are selected such that they i) contain node $i$, which is a candidate to be observed at time $t$, and ii) contain possibly additional nodes that will be observed in the future. Mimicking this decomposition of $\cS$, for $\cS\in{\cal R}_t^i$, the information measure $M_\ell^i(t,\cS)$ can be also decomposed according to
\begin{align}\label{eq:M_expand}
M_\ell^i(t,\cS)={\mathscr J}_\ell(\{i\},\psi^{t-1})\; + \; {\mathscr J}_\ell(\cS\backslash\{i\},\psi^{t-1}) \ ,\quad \mbox{for }\ell\in\{0,1\} \ .
\end{align}
In this expansion, the first term in the decomposition, i.e., ${\mathscr J}_\ell(\{i\},\psi^{t-1})$ defined in \eqref{eq:J0} and~\eqref{eq:J1}, accounts for the information gained by observing node $i$ at time $t$. Similarly, the second term ${\mathscr J}_\ell(\cS\backslash\{i\},\psi^{t-1})$ is the expected information gained from future samples from the nodes contained in $\cS\backslash\{i\}$ when $\psi(t)=i$. This second term constitutes the key distinction of the proposed rule compared to Chernoff's rule, which accounts for incorporating every possible future action. Finding the optimal node $i$ and set $\cS$ in~\eqref{eq:MC} involves an exhaustive search over all the remaining nodes, which can become computationally prohibitive. \textcolor{black}{In the next subsection we show that} by leveraging the Markov properties of an MRF, and a certain acyclic dependency assumption, the exhaustive search for an optimal $\cS\in{\cal R}_t^i$ can be  simplified significantly. \textcolor{black}{Based on the stopping rule specified in~\eqref{eq:stop}, the terminal decision rule given in~\eqref{eq:SPRT}, and the sampling rule specified in~\eqref{eq:ModChernoff2}, Algorithm~\ref{table} provides the detailed steps for detecting a Markov network with a certain correlation structure.}

\begin{table}
\renewcommand{\arraystretch}{1}
\renewcommand{\tablename}{Algorithm}
\caption{Network-guided active sampling for quickest detection of Markov networks}
\vspace{0.05in}
\centering
\begin{tabular}{ll}
\hline \hline
1 & {\bf set} $t=0$, $\varphi^1=\{1,\dots,n\}$, $\Lambda_0=0$, $\gamma_n^{\rm L} =-n{\beta}$, and $\gamma_n^{\rm U} =n{\alpha}$\\
2 &  $t\leftarrow t+1$ \\
3 &  {\bf for} $i\in\varphi^t$ \\
5 & \quad {\bf for} any $\cS\subseteq{\cal R}_t^i$\\
6 & \quad\quad compute $M_0^i(t,\cS)$ and $M_1^i(t,\cS)$ according to \eqref{eq:M0}-\eqref{eq:M1}\\
7 &  \quad {\bf end for}\\
8 &  {\bf end for}\\
9 &  find $\psi^*(t)$ based on \eqref{eq:MC}  \\
10 &  $\varphi^{t+1} \leftarrow \varphi^{t+1}\setminus \psi(t)$\\
11 &  compute $\Lambda_t$ according to \eqref{LLR2}\\   
12 & {\bf if} $\gamma_n^{\rm L}  \; < \; \Lambda_t \; < \; \gamma_n^{\rm U} $ and $t<n$\\
13 & \quad\quad go to step 2\\
14 & {\bf else if} $\Lambda_t \; < \; 0$\\
15 & \quad\quad {\bf set} $\d^*_n=0$ and $\tau^*_n=t$\\
16 & {\bf else}\\
17 & \quad\quad {\bf set} $\d^*_n=1$ and $\tau^*_n=t$\\
18 & {\bf end if}\\
\hline\hline
\end{tabular}
\label{table}
\end{table}

\section{Main Results}
\label{sec:per}
\textcolor{black}{In this section, we provide performance guarantees for the proposed network-guided active sampling procedure in Algorithm~1. Specifically, we analyze (ii) the delay (sample complexity) of the algorithm in Section~\ref{sec:asymp}; (ii) the accuracy of the decision in Section~\ref{sec:feasible}; (iii)  the error exponents in Section~\ref{sec:exponent}; and (iv) the complexity of the node-selection rule in Section~\ref{sec:complexity}.}

\subsection{Decision Reliability}
\label{sec:feasible}

\textcolor{black}{Problem $\cP({\alpha},{\beta})$ by design faces a hard constraint on the number of available samples $n$. This, in turn, acts as a {\sl hard} constraint on the stopping time $\tau_n$. Under such a constraint, the error probabilities $\mathsf{P}_n^{0}$ and $\mathsf{P}_n^{1}$ cannot necessarily be made arbitrarily small simultaneously. Hence, a decision algorithm provides a {\sl feasible} solution to $\cP({\alpha},{\beta})$ only if it satisfies the constraints enforced on $\mathsf{P}_n^{0}$ and $\mathsf{P}_n^{1}$ while not requiring more than $n$ samples.
\begin{definition}[$(\alpha,\beta)$-accuracy] We say that a decision tuple $\Phi_n\dff(\tau_n,\delta_n,\psi_n^{\tau_n})$ is $(\alpha,\beta)$-accurate if it ensures  $\mathsf{P}_n^{0}\leq {\e}^{-n{\alpha}}$ and $\mathsf{P}_n^{1}\leq {\e}^{-n{\beta}}$. First, we establish that the decisions generated by Algorithm~1 satisfy the performance guarantees of the problem.
\end{definition}}
\noindent \textcolor{black}{In this subsection, we examine the problem \eqref{eq:Opt} in both the asymptotic and non-asymptotic regime with respect to the size of the network $n$, and characterize conditions on $\alpha$ and $\beta$ under which  Algorithm~1 is guaranteed to generate $(\alpha,\beta)$-accurate decisions. To this end, note that the sampling process terminates if i) the LLR $\Lambda_t$ exits the band $(\gamma_n^{\rm L} ,\gamma_n^{\rm U} )$ at some $t\in\{1,\dots,n\}$, or ii) we exhaust all the samples, i.e., $\tau_n=n$. For establishing the conditions for ensuring $(\alpha,\beta)$-accuracy, in the first step, we show that if the process terminates by exiting the band $(\gamma_n^{\rm L} ,\gamma_n^{\rm U} )$, then the decision is $(\alpha,\beta)$-accurate. In the second step, we evaluate the probability of $\Lambda_t$ exiting the band $(\gamma_n^{\rm L} ,\gamma_n^{\rm U} )$ prior to exhaustive all $n$ samples. These two steps, collectively, establish a sufficient condition for ensuring $(\alpha,\beta)$-accuracy of Algorithm~1.
For this purpose, we denote the Bhattacharyya coefficient, as a measure of similarity of the two distributions, by
\begin{align}\label{eq:bc}
{\sf B}_n(f_0,f_1)\dff\int \sqrt{f_0(x;\V)f_1(x;\V)}\; \der x\ .
\end{align}
Accordingly, we denote the {\sl normalized} Bhattacharyya {distance} by
\begin{align}\label{eq:bd}
\kappa (f_0,f_1)\dff  - \lim_{n\rightarrow\infty} \frac{1}{n}\ln{\sf B}_n(f_0,f_1)\ .
\end{align}
}
The following theorem establishes a sufficient condition under which Algorithm~1 generates $(\alpha,\beta)$-accurate solutions. 
\begin{theorem}[Non-asymptotic $(\alpha,\beta)$-accuracy]\label{thm:nonasymp}
\textcolor{black}{For a given network size $n$,  Algorithm~1 generates an $(\alpha,\beta)$-accurate solution with a probability at least
\begin{align}\label{eq:feasibility}
1-{\sf B}_n(f_0,f_1)\left[\epsilon_0 \exp\left({\frac{n{\beta}}{2}}\right)+\epsilon_1 \exp\left({\frac{n{\alpha}}{2}}\right)\right]\ .
\end{align}}
\end{theorem}
\begin{proof}
See Appendix \ref{App:thm:nonasymp}.
\end{proof}
\textcolor{black}{We will evaluate the probability term in \eqref{eq:feasibility} in Section~\ref{sec:feasible_NA}  through an illustrative example and in Section~\ref{sec:simulations} through numerical evaluations. We will show that for widely used MRF models (e.g., Gaussian MRFs), this probability approaches 1 in all practical ranges of $n$ and error probabilities, rendering the Algorithm~1 $(\alpha,\beta)$-accurate almost surely even in the non-asymptotic regime. By leveraging the result of Theorem~\ref{thm:nonasymp}, we can readily provide a sufficient condition for $(\alpha,\beta)$-accuracy in the asymptote of large network sizes.}
\textcolor{black}{\begin{corollary}[Asymptotic $(\alpha,\beta)$-accuracy]\label{thm:asymp}
Algorithm~1 generates $(\alpha,\beta)$-accurate solutions {\sl almost surely} in the asymptote of large networks if
\begin{align}\label{eq:asymp:as}
\max\{{\alpha},{\beta}\}\; < \; 2\kappa(f_0,f_1)\ .
\end{align}
\end{corollary}
\begin{proof}
The proof follows from finding a sufficient condition that ensures probability in~\eqref{eq:feasibility} approaches 1 as $n\rightarrow\infty$. 
\end{proof}
}

\subsection{Delay Analysis}
\label{sec:asymp}

\textcolor{black}{In this subsection, we analyze the performance of the proposed selection rule in the asymptote of large networks sizes, i.e.,  when
 $n\rightarrow\infty$, i.e., $V=\mathbb{N}$.}  We note that the proposed network-guided node selection rule capitalizes on the discrepancies in the information measures corresponding to selecting different nodes. In general, a wider range of information measures leads to more effectively distinguishing the most informative nodes to sample. This, in turn, reduces the average delay for reaching a sufficiently confident decision. In order to analyze the performance, \textcolor{black}{corresponding to any subset of nodes $U\subseteq \mathbb{N}$ we define normalized LLR measures as follows:
\begin{align}
\label{eq:cc1}
{\sf nLLR}_0(X_U;U)\dff\;&\frac{1}{|U|}\ln\textcolor{black}{\frac{f_0(X_U;U)}{f_1(X_U;U}}\ ,\quad (X_1,\dots, X_n)\sim \P_0\  ,\\
\label{eq:cc2}\text{and }\quad {\sf nLLR}_1(X_U;U)\dff\;&\frac{1}{|U|}\ln\textcolor{black}{\frac{f_1(X_U;U)}{f_0(X_U;U)}}\ ,\quad (X_1,\dots, X_n)\sim \P_1\ .
\end{align}
Such log-likelihood ratios play pivotal roles in characterizing the performance  of sequential methods. When the random variables $\{X_i:i\in\{1,\dots,n\}\}$ are independent and identically distributed (i.i.d.), according to the strong law of large numbers, the  measures ${\sf nLLR}_\ell(Y^t;\psi^t)$ converge almost surely to the KL divergence terms as $|U|\rightarrow\infty$. While in an i.i.d. setting these measures are well-defined and can have tangible interpretations (e.g., being random walks), in a non-i.i.d. setting, they are not as well-defined, and their convergence can be guaranteed only under stronger conditions. A relevant notion of convergence for non-i.i.d. settings that is especially widely used in sequential detection is {\sl complete} convergence (introduced in~\cite{Hsu}, a good overview in~\cite{Karr:1993}, and used in the context of sequential detection in~\cite{MhypI,Tartakovski2017})\footnote{\textcolor{black}{In some literature it is also called 1-quick convergence (see \cite{lai}) with generalizations to stronger $r$-quickness convergence in~\cite{MhypI}}}}. \textcolor{black}{For this purpose,  corresponding to the set of nodes $V$ we define
\begin{align}\label{eq:S}
{\cal S}(V) \dff \{\forall A\subseteq V\;:\; |A| \geq g(n)\}\ ,
\end{align}
where $g(x)$ is an arbitrary function that satisfies $g(x)\xrightarrow{x\rightarrow\infty}{\infty}$. Hence, ${\cal S}(V)$ is the collection of all subsets of $V$ whose cardinality is at least $g(n)$. 
\begin{definition}[Complete convergence]
Corresponding to any possible sampling sequence $\psi^\infty\in{\cal S}(\mathbb{N})$, we say that the normalized log-likelihood ratios ${\sf nLLR}_\ell(Y^t;\psi^t)$ converge {\em completely} to a constant $I_\ell(\psi^\infty)$ when
\begin{align}\label{eq:complete_conv1}
\sum_{t=1}^\infty \P_\ell\Big\{\left|{\sf nLLR}_\ell(Y^t;\psi^t)-I_\ell(\psi^\infty)\right|>h\Big\}<+\infty\ , \quad \forall h>0 \ .
\end{align}
\end{definition}
\noindent It can be readily verified that the condition in~\eqref{eq:complete_conv1} is equivalent to
\begin{align}\label{eq:finite_ave}
\bbe_\ell[T_\ell(h,\psi^\infty)] <+\infty \ ,\quad \forall h>0 \  ,
\end{align}
where we have defined
\begin{align}\label{eq:T_l}
T_\ell(h,\psi^\infty )\dff\sup \  \left\{t\in\mathbb{N} \;:\;\Big|{\sf nLLR}_\ell(X_{U^t};U^t)- I_\ell(\psi^\infty)\Big|\geq h\right\}\ .
\end{align}
The term $T_\ell(h,\psi^\infty)$ denotes the last time that the sequence $\{{\sf nLLR}_\ell(Y^t;\psi^t)\}$ leaves the interval 
\begin{align}
[I_\ell(\psi^\infty)-h\; , \; I_\ell(\psi^\infty)+h]\ .
\end{align}
Next, we define two types of networks, depending on how the LLR sequences converge. 	
\begin{definition}[Homogeneous network] We say that an MRF is {\sl homogeneous} when $I_\ell(\psi^\infty)$ exists and it is the same for all possible sets $\psi^\infty$. When we have a homogeneous structure, we replace $I_\ell(\psi^\infty)$ by the shorthand  $I_\ell$, which emphasizes a lack of dependence on $\psi^\infty$. 
\end{definition}
\noindent The critical property of homogeneous networks is that  observing any subsequence of nodes provides the same average amount of information in the long run. 
\begin{example}
Consider a setting in which the samples are i.i.d. under $\sH_0$ and they form a Gauss-Markov random field (GMRF) under $\sH_1$ with the same marginal distributions as the ones under $\sH_0$. If under $\sH_1$ the nodes form a line graph with correlation coefficients $a\neq \pm 1$, then we have a homogeneous network in which
\begin{align}
 I_0 &= \ln(1-a^2)+\frac{2a^2}{1-a^2}\ ,  \qquad  \mbox{and}\qquad I_1 = \ln\frac{1}{1-a^2} \ .
\end{align}
\end{example}
\begin{definition}[Heterogeneous network] 
We say that an MRF is {\sl heterogeneous}  when the two information measures $I_\ell(\psi^\infty)$ exist and vary for different permutations $\psi^\infty$. In such networks, we define
\begin{align}\label{eq:I0*}
I_\ell^*\dff \sup_{\psi^\infty\in{\cal S}(\mathbb{N})} \ I_\ell(\psi^\infty)\ , \qquad \mbox{for}\;\;\ell\in\{0,1\}\ .
\end{align}
\end{definition}
\begin{example}
Consider a setting in which the samples are i.i.d. under $\sH_0$, and they form a GMRF under $\sH_1$ with the same marginal distributions as the ones under $\sH_0$.  Under $\sH_1$ the dependency graph consists of two line subgraphs defined over two distinct sets of nodes denoted by 
\begin{align}
\psi^\infty_1=\{2k-1: k\in\mathbb{N}\} \quad \mbox{and} \quad 
\psi^\infty_2=\{2k: k\in\mathbb{N}\}\ ,
\end{align}
where the elements in $\psi^\infty_i$ have constant correlation coefficients $a_i\neq\pm 1$. Assuming $|a_1|>|a_2|$ we have
\begin{align}
 I_0(\psi^\infty_i) &= \ln(1-a_i^2)+\frac{2a_i^2}{1-a_i^2}  \qquad  , \qquad && I_0^* =  \ln(1-a_1^2)+\frac{2a_1^2}{1-a_1^2}  \ , \\
 I_1(\psi^\infty_i)  & = \ln\frac{1}{1-a_i^2}   \qquad  , \qquad &&  I_1^* = \ln\frac{1}{1-a_1^2} \ .
\end{align}
\end{example}
\noindent Based on the measures $I_\ell$ and $I^*\ell$ in homogeneous and heterogeneous settings, respectively, next, we analyze the average stopping time. We first focus on the homogeneous setting and establish the optimality of stopping and terminal decision rules characterized in~\eqref{eq:th1}--\eqref{eq:SPRT} and then generalize the results to the heterogeneous setting. The following lemma will be instrumental in evaluating the average stopping time.
\begin{lemma}\label{lemma:kappa}
For the choices of ${\alpha}$ and ${\beta}$ that satisfy \eqref{eq:asymp:as}, in the homogeneous and heterogeneous networks we have
\begin{align}
\max\{{\alpha},{\beta}\} \leq \min\{I_0,I_1\}\ , \qquad \mbox{and} \qquad \max\{{\alpha},{\beta}\} \leq \min\{I^*_0,I^*_1\}\ ,
\end{align}
almost surely.
\end{lemma}
\begin{proof}
See Appendix~\ref{app:lemma:kappa}.
\end{proof}
The following theorem provides a universal (algorithm-independent) lower bound on the average delay for any feasible solution to problem~\eqref{eq:Opt} when the network has a homogeneous dependency structure.}
\textcolor{black}{\begin{theorem}[Homogeneous Structures -- Delay Converse]\label{thm:asymp1}
\textcolor{black}{In a homogeneous network with information constants $I_0$ and $I_1$}, any feasible solution of problem~\eqref{eq:Opt} with the stopping time $\tau_n $ satisfies
\begin{align}\label{eq:lower1}
\lim_{n\rightarrow\infty}  \frac{\bbe_0\{\tau_n\}}{n} &\geq\frac{{\beta}}{ I_0}\ , \qquad \mbox{and} \qquad  \lim_{n\rightarrow\infty}  \frac{\bbe_1\{\tau_n\}}{n} \geq\frac{{\alpha}}{I_1}\ .
\end{align}
\end{theorem}}
\begin{proof}
See Appendix \ref{App:thm:asymp1}.
\end{proof}
We show that any selection rule combined with the likelihood ratio test given in~\eqref{eq:th1}--\eqref{eq:SPRT} achieves these lower bounds.
\begin{theorem}[\textcolor{black}{Homogeneous Structures -- Delay Achievability}]\label{thm:asymp2}
In a homogeneous network, for the stopping and terminal decision rules specified in~\eqref{eq:th1}--\eqref{eq:SPRT} and an arbitrary sampling rule, in the asymptote of large~$n$ we have
\begin{align}
\lim_{n\rightarrow\infty} \frac{\bbe_0\{\tau_n^*\}}{n} &\leq\frac{{\beta}}{I_0}  \ , \qquad \mbox{and} \qquad  \lim_{n\rightarrow\infty} \frac{\bbe_1\{\tau_n^*\}}{n} \leq\frac{{\alpha}}{ I_1}\ .
\end{align}
\end{theorem}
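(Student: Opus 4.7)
My plan is to match the lower bound in Theorem~\ref{thm:asymp1} by showing that, under $\H_0$, the stopping rule in~\eqref{eq:stop} terminates in no more than $(1+\epsilon)|\ln\beta_n|/I_0$ steps with high probability, for every $\epsilon>0$. The basic observation is that, under $\H_0$, the definitions~\eqref{LLR} and~\eqref{eq:cc1} give
\begin{equation*}
\frac{\Lambda_t}{t}\;=\;-\,{\sf nLLR}_0(X_{\psi^t},\psi^t),
\end{equation*}
and the homogeneous assumption together with $\bbe_0\{T_0(h)\}<\infty$ forces ${\sf nLLR}_0(X_U,U)\to I_0$ completely for every valid subset $U$. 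Since the limit $I_0$ does not depend on the chosen subset, this convergence holds \emph{uniformly over all admissible node selection rules}, and in particular $\Lambda_t/t\to -I_0$ in $\P_0$-probability no matter how $\psi^t$ is produced.

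Next I would fix $\epsilon>0$ and introduce the reference time
\begin{equation*}
T_n^U \;\dff\; \left\lceil \frac{(1+\epsilon)\,|\ln\beta_n|}{I_0} \right\rceil,
\end{equation*}
for which $T_n^U/n\to (1+\epsilon)\beta/I_0$ by the definition of $\beta$ in~\eqref{eq:error_exponent}. The deterministic implication I will exploit is that, on the event $\{\tau_n^*>T_n^U\}$, we have $\Lambda_{T_n^U}>\gamma_n^L=-|\ln\beta_n|$, which after dividing by $T_n^U$ yields $\Lambda_{T_n^U}/T_n^U>-I_0/(1+\epsilon)$. Consequently,
\begin{equation*}
\P_0\!\left(\tau_n^*>T_n^U\right)\;\leq\;\P_0\!\left(\frac{\Lambda_{T_n^U}}{T_n^U}+I_0 \;>\; \frac{\epsilon\,I_0}{1+\epsilon}\right),
\end{equation*}
and the right-hand side vanishes as $n\to\infty$ by the convergence established above.

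Because $\tau_n^*\leq n$ by the hard cap in~\eqref{eq:stop}, splitting the expectation on $\{\tau_n^*\leq T_n^U\}$ and its complement gives
\begin{equation*}
\frac{\bbe_0\{\tau_n^*\}}{n}\;\leq\;\frac{T_n^U}{n}\;+\;\P_0\!\left(\tau_n^*>T_n^U\right).
\end{equation*}
Taking $n\to\infty$ yields $\limsup_n \bbe_0\{\tau_n^*\}/n \leq (1+\epsilon)\beta/I_0$, and since $\epsilon>0$ was arbitrary the first claim follows by letting $\epsilon\downarrow 0$. The bound under $\H_1$ is obtained by the symmetric argument, replacing $\Lambda_t/t\to -I_0$ by $\Lambda_t/t\to I_1$ and the lower barrier $\gamma_n^L$ by the upper barrier $\gamma_n^U=-\ln\alpha_n$.

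The main obstacle I anticipate is justifying that the convergence $\Lambda_t/t\to -I_0$ really is valid along adaptive sequences of subsets, since $\psi^t$ depends on $\F_t$. Homogeneity supplies a selection-independent limit, but because the relevant time scale $T_n^U$ grows like $|\ln\beta_n|$ while the ambient network has $n$ nodes, one also has to verify that any such adaptive $\psi^{T_n^U}$ falls within the complete-convergence regime controlled by~\eqref{eq:T_l}; this is exactly what the finiteness of $\bbe_\ell\{T_\ell(h)\}$ buys, and it is where I expect to spend most of the rigorous effort. Once that uniform convergence is secured, the display above together with Theorem~\ref{thm:asymp1} establishes the asymptotic optimality of the SPRT-like rules in~\eqref{eq:th1}--\eqref{eq:SPRT} for every selection policy in the homogeneous setting.
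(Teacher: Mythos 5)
Your proof is correct, but it follows a different decomposition from the paper's. The paper's argument in Appendix~E is pathwise: using the random time $T_1(h)$ from~\eqref{eq:T_l} beyond which the normalized LLR stays within a factor $(1\pm h)$ of its limit, it derives the deterministic-plus-remainder bound $\tau^*_n\leq 2+\gamma_n^U/((1-h)I_1)+T_1(h)$ directly from $\Lambda_{\tau^*_n-1}<\gamma_n^U$, takes expectations, and controls the remainder via $\bbe_1\{T_1(h)\}<\infty$; the hard cap $\tau^*_n\leq n$ plays no role there. You instead evaluate the LLR at the fixed reference time $T_n^U=\lceil(1+\epsilon)|\ln\beta_n|/I_0\rceil$, show $\P_0(\tau^*_n>T_n^U)\to 0$ (which, as you note, amounts to $\P_0(T_0(\epsilon/(1+\epsilon))\geq T_n^U)\to 0$ and so only needs almost-sure finiteness of $T_0(h)$ rather than its integrability), and then lean on the structural bound $\tau^*_n\leq n$ from~\eqref{eq:stop} to convert the vanishing overshoot probability into the expectation bound $\bbe_0\{\tau^*_n\}/n\leq T_n^U/n+\P_0(\tau^*_n>T_n^U)$. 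Your route is arguably cleaner in this problem precisely because the hard cap is available, and it sidesteps the slightly delicate step in the paper where an infimum over $h$ of one term and a supremum of another are taken simultaneously; the paper's route is the one that generalizes to settings without a bounded stopping time, where integrability of $T_\ell(h)$ is genuinely needed. Both arguments hinge on the same two ingredients --- $\Lambda_t$ confined to $(\gamma_n^L,\gamma_n^U)$ before stopping, and the selection-rule-independent complete convergence of the normalized LLR guaranteed by homogeneity --- and you correctly identify the adaptivity of $\psi^t$ as the point requiring the uniformity that $\bbe_\ell\{T_\ell(h)\}<\infty$ (or at least $T_\ell(h)<\infty$ a.s.) supplies.
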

\begin{proof}
See Appendix \ref{App:thm:asymp2}.
\end{proof}
\textcolor{black}{The last two theorems, collectively, establish that when the network has a homogeneous structure, irrespectively of how the nodes are selected and sampled over time, the stopping and terminal decision rules specified in~\eqref{eq:th1}--\eqref{eq:SPRT} render asymptotically optimal decisions. The optimality of the decisions being independent of the node selection rule signifies that in homogeneous structures, all sequences of nodes, asymptotically, contain the same average amount of information, and the overall performance does not critically depend on the sampling path. Next, we show that the observation above is not necessarily valid for the networks with heterogeneous structures, and the optimality of decisions in those networks critically depends on the sampling path. By leveraging Theorem~\ref{thm:asymp1}, in the next corollary, we first provide algorithm-independent lower bounds on the average delay in heterogeneous networks. 
\textcolor{black}{
\begin{corollary}[Heterogeneous Structures -- Delay Converse]\label{thm:opt:L}
\textcolor{black}{In a heterogeneous network with information constants $I^*_0$ and $I^*_1$}, any feasible solution of problem~\eqref{eq:Opt} with the stopping time $\tau_n $ satisfies
\begin{align}
\lim_{n\rightarrow\infty}  \frac{\bbe_0\{\tau_n\}}{n} &\geq\frac{{\beta}}{I^*_0}\ ,\qquad \mbox{and} \qquad  \lim_{n\rightarrow\infty}  \frac{\bbe_1\{\tau_n\}}{n} \geq\frac{{\alpha}}{I^*_1}\ .
\end{align}
\end{corollary}
\begin{proof}
\textcolor{black}{By following the same line of argument as in the proof of Theorem~\ref{thm:asymp1} we can show that for any arbitrary sampling path $\psi^\infty\in\mathbb{N}$ we have
\begin{align}\label{eq:lower21}
\lim_{n\rightarrow\infty}  \frac{\bbe_0\{\tau_n\}}{n} &\geq\frac{{\beta}}{ I_0(\psi^\infty)}\ , \qquad \mbox{and} \qquad 
 \lim_{n\rightarrow\infty}  \frac{\bbe_1\{\tau_n\}}{n} \geq\frac{{\alpha}}{I_1(\psi^\infty)}\ .
\end{align}
Since \eqref{eq:lower21} is true for any set $\psi^\infty$, subsequently, we have
\begin{align}\label{eq:low:pr:0}
\lim_{n\rightarrow\infty}  \frac{\bbe_0\{\tau_n\}}{n} &\geq \inf_{\psi^\infty\in\mathbb{N}} \; \frac{{\beta}}{ I_0(\psi^\infty)} \overset{\eqref{eq:I0*}}{=} \frac{{\beta}}{I^*_0}\ ,  \qquad \mbox{and} \qquad  \lim_{n\rightarrow\infty}  \frac{\bbe_1\{\tau_n\}}{n} \geq \inf_{\psi^\infty\in\mathbb{N}} \; \frac{{\alpha}}{ I_1(\psi^\infty)} \overset{\eqref{eq:I0*}}{=} \frac{{\alpha}}{I^*_1}\ .
\end{align}}
\end{proof}}
\noindent Next, we provide the proof for the optimality of the decisions produced by Algorithm~1, and especially the optimality of the proposed dynamic node selection rule when facing heterogeneous networks. This result will also be instrumental in characterizing the performance gap between the proposed sampling strategy and the Chernoff rule. By characterizing such a gap, through an example in Section~\ref{sec:counter}, we will show that the Chernoff rule loses its optimality for the correlation detection problem in networks. To prove the upper bounds on the average delay, we define the random variable $\hat\tau_n$ as the first time instant after which the ML decision about the true hypothesis is always correct, i.e.,
\begin{equation}\label{eq:tau_hat}
\hat\tau_n\dff\inf\{u\;:\;\delta_{\rm ML}(t)=\sT\ ,\ \forall t\geq u\}\ ,
\end{equation}
where we adopt the convention that the infimum of an empty set is $+\infty$. We emphasize that $\hat\tau_n$ is not a stopping time, but rather a term that, as we will show, is dominated by the stopping time. In order to establish the desired upper bounds, we show the following two properties for $\hat\tau_n$:
\begin{enumerate}
\item $\bbe_i\{\hat\tau_n\}$ is upper bounded by a constant.
\item $\frac{1}{n}\bbe_i\{\tau^*_n-\hat\tau_n\}$ is upper bounded according to
\begin{align}\label{eq:itme1}
\lim_{n\rightarrow\infty}  \frac{\bbe_0\{\tau^*_n-\hat\tau_n\}}{n} \leq \frac{{\beta}}{I^*_0}\ , \qquad \mbox{and} \qquad \lim_{n\rightarrow\infty}  \frac{\bbe_1\{\tau^*_n-\hat\tau_n\}}{n} \leq \frac{{\alpha}}{I^*_1}\ .
\end{align}
\end{enumerate}
In order to prove that $\bbe_i\{\hat\tau_n\}$ is finite, we first provide the following lemma, which at the core establishes that the probability  $\mathbb{P}_i(\hat\tau_n \geq t)$ decays exponentially with respect to time $t$.
\begin{lemma}\label{lemma:finite}
$\bbe_i\{\hat\tau_n\}$ is upper bounded by a constant. 
\end{lemma}
\begin{proof}
See Appendix \ref{app:lemma:finite}.
\end{proof}
Next, in order to prove~\eqref{eq:itme1}, we define
\begin{align}\label{eq:U_inf}
U^\infty \dff \arg\max_{\psi^\infty\in{\cal S}(\mathbb{N})} I_1(\psi^\infty)\ ,
\end{align}
corresponding to which we have $I_1(U^\infty)=I_1^*$.  When there are more than one choice for $U^\infty$, we select it to be the largest such set. Based on this definition, we provide the following lemma showing that the number of times that we sample from a set other than $U^\infty$ is, on average, finite. This property follows from the assumption of complete convergence in heterogeneous networks.
\begin{lemma}\label{lemma:psi}
For the set $\mathcal{H}\dff\{s\in\{1,\dots,\tau_n\}\;:\; \psi^*(s)\notin U^\infty\}$ we have $\bbe_i\{H\}<+\infty $.
\end{lemma}
\begin{proof}
See Appendix \ref{app:lemma:psi}.
\end{proof}
This establishes that by the stopping time, the samples collected are taken dominantly from the set $U^\infty$. By leveraging Lemma~\ref{lemma:psi}, we next provide the final ingredient for characterizing the achievable average delay. 
\begin{lemma}\label{lemma2}
$\frac{1}{n}\bbe_i\{\tau^*_n-\hat\tau_n\}$ is upper bounded according to \eqref{eq:itme1}.
\end{lemma}
\begin{proof}
See Appendix \ref{app:lemma2}.
\end{proof}}

\begin{theorem}[\textcolor{black}{Heterogeneous Structures -- Delay Achievability}]\label{thm:opt:U}
\textcolor{black}{Algorithm~1 generates decisions that are asymptotically optimal solutions to problem~\eqref{eq:Opt}.} Specifically
\begin{align}
\label{eq:het_ub1} \lim_{n\rightarrow\infty} \frac{\bbe_0\{\tau^*_n\}}{n} &\leq \frac{{\beta}}{I^*_0}\ ,\qquad \mbox{and} \qquad \lim_{n\rightarrow\infty}  \frac{\bbe_1\{\tau^*_n\}}{n} \leq \frac{{\alpha}}{I^*_1}\ .
\end{align}
\end{theorem}
\begin{proof}
By combining the results of Lemma~\ref{lemma:finite} and Lemma~\ref{lemma2} we obtain
\begin{align}
\frac{{\alpha}}{I^*_1} & \overset{\eqref{eq:itme1}}{\geq} \lim_{n\rightarrow\infty}\frac{\bbe_1\{\tau^*_n-\hat\tau_n\}}{n} \\
& \; \geq \lim_{n\rightarrow\infty}\frac{\bbe_1\{\tau^*_n\}}{n}-  \lim_{n\rightarrow\infty}\frac{B}{n(1- {\e}^{-c})}\\
& \; = \lim_{n\rightarrow\infty}\frac{\bbe_1\{\tau^*_n\}}{n}\ .
\end{align}
which concludes the proof for the upper bound on $\frac{1}{n}{\bbe_1\{\tau^*_n\}}$. The proof of the upper bound on $\frac{1}{n}{\bbe_0\{\tau^*_n\}}$ follows the same line of argument.
\end{proof}

\subsection{\textcolor{black}{Error Exponents}}
\label{sec:exponent}
In this subsection, we characterize the gain obtained from the data-adaptive stopping time. To this end, we compare the performance of sequential sampling procedures with that of the fixed-sample-size setting in terms of their associated error exponents.
\textcolor{black}{In the fixed-sample-size counterpart of the binary testing problem considered in this paper, the optimal decision rule is the Neyman-Pearson (NP) rule, where its associated error exponents are characterized in~\cite{Anandkumar:IT09}. By denoting the NP decision rule by $\d_{\rm NP}$, we define 
\begin{align}
{\sf  P}_{\rm NP}^{0}& \;\dff\; \mathbb{P}_0(\d_{\rm NP}=1)\ ,\qquad \mbox{and} \qquad \quad{\sf  P}_{\rm NP}^{1} \;\dff\; \mathbb{P}_1(\d_{\rm NP}=0)\ ,
\end{align}
as the frequencies of erroneous decisions by the NP test based on $n$ samples. Accordingly, we define
\begin{align}
E_{\rm NP}^{0} & \;\dff\; -\lim_{n\rightarrow\infty} \frac{1}{n} \ln {\sf  P}_{\rm NP}^{0}\ , \qquad \mbox{and} \qquad  E_{\rm NP}^{1} \;\dff\; -\lim_{n\rightarrow\infty} \frac{1}{n} \ln {\sf  P}_{\rm NP}^{1}\ ,
\end{align}
as the associated error exponents. Similarly, we define
\begin{align}
E_n^{0} & \;\dff\; -\lim_{n\rightarrow\infty}\frac{1}{r_1}\ln \mathsf{P}_n^{0}(r_1)\ , \qquad \mbox{and} \qquad  E_n^{1}  \;\dff\; -\lim_{n\rightarrow\infty}\frac{1}{r_0}\ln \mathsf{P}_n^{1}(r_0)\ ,
\end{align}
as the error exponents of the sequential detection approach, where $\mathsf{P}_n^{0}(r_1)$ and $\mathsf{P}_n^{1}(r_0)$ are the error probabilities of sequential sampling when the average number of samples (i.e., the stopping time) is $r_\ell\triangleq \bbe_\ell\{\tau^*_n\}$.} The connections between the error exponents of the NP test and sequential sampling strategies are established in the following theorem.
\begin{theorem}[Gain of Adaptivity]\label{thm:per}
\textcolor{black}{The error exponents of the decision rules in Algorithm~1 are related to those of the NP rule through
\begin{align}\label{eq:seq:exp}
& E_n^{1}=I_0 \qquad\mbox{\rm and}\qquad E_n^{0}=I_1 \ ,\\
\label{eq:fix:exp}
& E_{\rm NP}^{1}= I_0 \qquad\mbox{\rm and}\qquad E_{\rm NP}^{0}=0   \ .
\end{align}}
\end{theorem}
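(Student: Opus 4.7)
I would treat the sequential case and the fixed-sample-size NP case separately, since they rely on different machinery, and then read off the comparison by inspection.

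For the sequential exponents of Algorithm~1, the plan is to combine the feasibility upper bounds from Theorem~\ref{thm:error} with matching lower bounds. On the upper side, the choice of thresholds $\gamma_n^L=\ln\beta_n$ and $\gamma_n^U=-\ln\alpha_n$ in~\eqref{eq:th1} together with the standard likelihood-ratio inequalities yield $\mathsf{P}_n^{\rm md}\leq\beta_n$ and $\mathsf{P}_n^{\rm fa}\leq\alpha_n$. For matching lower bounds, I would perform an overshoot analysis on the variable-step LLR random walk in~\eqref{LLR2}, exploiting the moment control $\bbe_\ell\{T_\ell(h)\}<\infty$ already used in Theorems~\ref{thm:asymp2} and~\ref{thm:opt} to force the normalized overshoots past the thresholds to vanish. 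This gives $-\ln\mathsf{P}_n^{\rm md}\sim n\beta$ and $-\ln\mathsf{P}_n^{\rm fa}\sim n\alpha$ by~\eqref{eq:error_exponent}. A change of normalization using $r_0=\bbe_0\{\tau_n^*\}\sim n\beta/I_0$ and $r_1\sim n\alpha/I_1$ from Theorem~\ref{thm:asymp2}/\ref{thm:opt} then yields
\begin{equation*}
E_n^{\rm md}=\lim_{n\to\infty}\frac{-\ln\mathsf{P}_n^{\rm md}}{r_0}=\frac{n\beta}{n\beta/I_0}=I_0, \qquad E_n^{\rm fa}=\lim_{n\to\infty}\frac{-\ln\mathsf{P}_n^{\rm fa}}{r_1}=I_1.
\end{equation*}

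For the NP exponents, the sample size is fixed at $n$ and the false alarm constraint is held at a constant level, so $E_{\rm NP}^{\rm fa}=0$ follows trivially. The missed detection exponent follows from an MRF-adapted Stein's lemma: the existence of the limit $I_0$ in~\eqref{eq:cc3} with $|U|=n$ together with the complete-convergence hypothesis imply $\tfrac{1}{n}\Lambda_n\to I_0$ under $\H_0$, so a Chernoff-type bound on the NP rejection region shows $\mathsf{P}_{\rm NP}^{\rm md}$ decays like $e^{-nI_0}$; a matching converse uses the Neyman--Pearson optimality of the likelihood ratio test together with the same LLR concentration, establishing $E_{\rm NP}^{\rm md}=I_0$.

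The main obstacle in both parts is transferring classical overshoot and Stein-type arguments, which are usually stated for iid data, to the present MRF setting, where the LLR increments in~\eqref{LLR2} are neither independent nor identically distributed since they are generated by a data-adaptive selection rule acting on a correlated field. The conditions $\bbe_\ell\{T_\ell(h)\}<\infty$ and the existence of the limits $I_\ell$ in~\eqref{eq:cc3} are precisely what enable this transfer; once these are in hand, the rest of the argument reduces to the algebraic substitution displayed above, and the comparison $E_n^{\rm fa}=I_1>0=E_{\rm NP}^{\rm fa}$ follows immediately, exhibiting the gain of adaptivity.
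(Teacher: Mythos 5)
Your proposal is correct in outline and reaches the stated exponents, but the key step for the sequential part is argued by a genuinely different route than the paper's. For the lower bounds $E_n^{\rm fa}\geq I_1$, $E_n^{\rm md}\geq I_0$ you and the paper coincide: feasibility gives $\mathsf{P}_n^{\rm fa}\leq\alpha_n$, and dividing by $r_1=\bbe_1\{\tau_n^*\}\sim n\alpha/I_1$ from Theorems~\ref{thm:asymp1}--\ref{thm:asymp2} does the rest. For the matching direction you propose a classical Wald-type overshoot analysis to show $\mathsf{P}_n^{\rm fa}$ is exponentially tight against $\alpha_n$; this is plausible under the complete-convergence hypothesis, but it is exactly the technically delicate piece (overshoot control for an LLR whose increments are neither independent nor identically distributed and are chosen adaptively), and you leave it at the level of a plan. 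The paper sidesteps overshoots entirely: it sets $\Delta\dff E_n^{\rm fa}-I_1$, observes that the delay lower bound of Theorem~\ref{thm:asymp1} actually holds with $\mathsf{P}_n^{\rm fa}$ in place of $\alpha_n$, so $\Delta>0$ would force $\lim_n\bbe_1\{\tau_n^*\}/n\geq \alpha/I_1+(\alpha/I_1^2)\Delta$, contradicting the upper bound $\alpha/I_1$ of Theorem~\ref{thm:asymp2}; hence $\Delta=0$. In effect the "no exponential overshoot" content is extracted for free from the already-proven delay optimality rather than re-derived, which is shorter and avoids the non-iid technicalities you flag as your main obstacle. For the NP part the two arguments are essentially identical in content: you sketch a generalized Stein's lemma from the LLR concentration $\tfrac1n\Lambda_n\to -I_0$ under $\H_0$, while the paper simply cites the corresponding result of Chen and Papamarcou; $E_{\rm NP}^{\rm fa}=0$ is immediate in both. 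If you pursue your route, the overshoot lemma is the one piece that must be supplied in full; otherwise consider adopting the paper's contradiction argument, which reuses Theorems~\ref{thm:asymp1} and~\ref{thm:asymp2} verbatim.
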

\begin{proof}
See Appendix \ref{App:thm:per}.
\end{proof}

\subsection{\textcolor{black}{Search Complexity Analysis}}
\label{sec:complexity}

\textcolor{black}{In this subsection, we show that under certain connectivity structures for the given MRFs, by judiciously leveraging the structures, the complexity of the search for the optimal node selection path over time can be reduced significantly. For this purpose, based on the given graphs $\G_0(\V,\E_1)$ and $\G_1(\V,\E_2)$ we construct the graph $\G(\V,\E)$ such that
\begin{align}\label{eq:E}
\E \; \dff \; \E_0 \cup \E_1\ .
\end{align}
Based on this, we define the neighborhood of node $i\in\V$ according to
\begin{align}
\N_i\dff \{j\in\V\;:\;j\neq i\;,\;(i,j)\in\E \}\ .
\end{align}
We will show that when $\G$ is acyclic,} for each node $i$, the optimal set $\cS$ is restricted to only contain the neighbors of $i$ that are not observed prior to time $t$, i.e., $\cS\subseteq\L_t^i$ where
\begin{equation}
\L_t^i\dff \{i\}\cup\{\N_i \cap \varphi^t\} \ .
\end{equation}
This indicates that for determining the node to select at each time, it is sufficient to consider a significantly shorter future sampling path for each node. The cardinality of the set of subsets of $\L_t^i$ is significantly smaller than that of $\varphi^t$, which translates to a substantial reduction in the complexity of characterizing the  optimal selection functions. This observation is formalized in the following theorem.
\begin{theorem}\label{thm:markov}
For an acyclic dependency graph $\G$, at each time $t$ and for $\ell\in\{0,1\}$ we have
\begin{align}
\argmax_{i\in\varphi^t} \max_{\cS\in\mathcal{R}_t^i} \frac{M_{\ell}^i(t,\cS)}{|\cS|}=\argmax_{i\in\varphi^t} \max_{\cS\subseteq\L_t^i} \frac{M_{\ell}^i(t,\cS)}{|\cS|}\ .
\end{align}
\end{theorem}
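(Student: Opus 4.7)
The plan is to establish the identity through two structural reductions of the inner optimization, both exploiting the global Markov property on the acyclic graph $\G$ together with the chain rule for KL divergence. Since the right-hand inner feasible set $\{\cS\subseteq\L_t^i:\,i\in\cS\}$ is contained in the left-hand one $\mathcal{R}_t^i$, it suffices to show that at some LHS maximizer $(i^*,\cS^*)$ the inner-maximizing subset $\cS^*$ actually lies inside $\L_t^{i^*}$, from which both $\argmax_i$ expressions return a common node.

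For the first reduction, let $B_i\subseteq\varphi^t$ denote the connected component of $i$ in the subgraph of $\G$ induced on $\varphi^t$. Because $\G$ is acyclic, every $\G$-path between two distinct components of $\varphi^t$ must traverse at least one observed node in $\psi^{t-1}$, and the global Markov property then makes the variables in different components mutually independent given $\F_{t-1}$ under both $f_0$ and $f_1$. Consequently $M_\ell^i(t,\cS)$ is additive across the component intersections of $\cS$, so $M_\ell^i(t,\cS)/|\cS|$ is a weighted mean of the corresponding component-wise per-node averages; the best such component necessarily achieves an average no smaller than the overall one. If this best component is $B_{i^*}$, one trims $\cS^*$ to $\cS^*\cap B_{i^*}$ without lowering the objective; otherwise any $i'$ inside it strictly beats $i^*$ in the outer maximization, contradicting LHS optimality. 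Hence without loss of generality $\cS^*\subseteq B_{i^*}$.

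For the second reduction, root the tree $B_{i^*}$ at $i^*$ and apply the KL chain rule:
\begin{align*}
M_\ell^{i^*}(t,\cS)=J_\ell(\{i^*\},\psi^{t-1})+\bbe_\ell\Big\{D_{\rm KL}\big(f_\ell(X_{\cS\setminus\{i^*\}}\med X_{i^*},\F_{t-1})\bigm\|f_{1-\ell}(X_{\cS\setminus\{i^*\}}\med X_{i^*},\F_{t-1})\big)\Big\}.
\end{align*}
Conditioning on $X_{i^*}$ together with $\F_{t-1}$ disconnects the subtrees rooted at the unobserved neighbors $\N_{i^*}\cap\varphi^t$, so the residual conditional KL splits additively across these subtrees, and an iterated application of the same averaging/re-rooting argument inside each child subtree shows that one may either drop all strict descendants of $i^*$ from $\cS^*$ without lowering the per-node average, or else identify a deeper focal node $i'$ whose $\L_t^{i'}$-restricted objective equals the LHS maximum. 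The main obstacle is in this descendant-to-focal-node step: one must verify via the unique $i^*\to j$ tree path (guaranteed by acyclicity) and the tree-structured conditional-independence splitting under conditioning that the per-node average accumulated along a deep sub-branch inside $\cS^*$ can always be realized by some node $i'$ whose closed neighborhood $\L_t^{i'}$ already covers the relevant portion of $\cS^*$. Acyclicity is indispensable for both the subtree splitting and the unique-path reasoning; without it, deep-node contributions cannot be transparently rerouted to a closer focal node, which is precisely why the theorem imposes this hypothesis.
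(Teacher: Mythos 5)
Your first reduction is sound and matches the paper's own Steps 1--2: you correctly use acyclicity plus the global Markov property to split $M_\ell^{i}(t,\cS)$ additively across the connected components of the subgraph induced on $\varphi^t$ (equivalently, across pieces separated by already-observed nodes), and the weighted-mean argument combined with the outer optimality of $i^*$ legitimately trims $\cS^*$ down to the component containing $i^*$. So the easy half of the argument is in place.

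The second reduction, however, contains a genuine gap, and you have in effect flagged it yourself: the ``descendant-to-focal-node step'' is asserted, not proved. The missing ingredient is a concrete inequality that lets you discard a sub-branch $C_j$ of $\cS^*$ hanging off an unobserved node $j$ that is not adjacent to $i^*$ (or not even in $\cS^*$). The paper closes this by writing the KL divergence of $X_{C_j\cup\{j\}}$ given the separator $\M_j=\N_j\cap C$ in two ways via the chain rule and then invoking convexity of the KL divergence (Jensen's inequality), using that $f_\ell(X_j\mid X_{\M_j},\F_{t-1})$ is an average of $f_\ell(X_j\mid X_{\M_j},X_{C_j},\F_{t-1})$, to conclude
\begin{align*}
D_{\rm KL}\big(f_\ell(X_{C_j}\mid X_{j},\F_{t-1})\,\|\,f_{1-\ell}(X_{C_j}\mid X_{j},\F_{t-1})\big)\;\geq\;
D_{\rm KL}\big(f_\ell(X_{C_j}\mid X_{\M_j},\F_{t-1})\,\|\,f_{1-\ell}(X_{C_j}\mid X_{\M_j},\F_{t-1})\big)\ .
\end{align*}
This is what converts a high per-node average on a deep sub-branch into a strictly better \emph{unrestricted} candidate $(i',\cS)$ with $\cS\in\mathcal{R}_t^{i'}$, contradicting the optimality of $i^*$ in the left-hand side. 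Note also that your stated goal for this step is slightly off target: you do not need $\L_t^{i'}$ to ``cover the relevant portion of $\cS^*$,'' because the contradiction is with the \emph{unrestricted} maximization over $\cS\in\mathcal{R}_t^{i'}$, so the competing node is free to use any superset containing the sub-branch. Without the convexity/separator inequality above (and the final step ruling out residual non-neighbors of $i^*$ by comparing their conditional KL contribution against $\max_{\cS\in\mathcal{R}_t^{k}}M_\ell^{k}(t,\cS)/|\cS|$), the re-rooting you describe does not go through, so as written the proof is incomplete.
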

\begin{proof}
See Appendix \ref{App:thm:markov}.
\end{proof}
\textcolor{black}{
\noindent Based on this theorem, the selection function given in \eqref{eq:MC} simplifies to
\begin{equation}\label{eq:ModChernoff2}
\psi^*(t)=\left\{\begin{array}{ll}
\vspace{1.5mm}
\displaystyle \argmax_{i\in\varphi^t}\ \max_{\mathcal{S}\subseteq\L_{t}^i}\, \frac{M_0^i(t,\cS)}{|\cS|}\ , & \rm{if}\ \delta_{\rm ML}(t-1)=\H_0 \\
\displaystyle \argmax_{i\in\varphi^t}\ \max_{\mathcal{S}\subseteq\L_{t}^i}\, \frac{M_1^i(t,\cS)}{|\cS|} \ ,& \rm{if}\ \delta_{\rm ML}(t-1)=\H_1 
\end{array}\right. .
\end{equation}
By further leveraging the Markov property, computing  
\begin{align}
\max_{\mathcal{S}\subseteq\L_{t}^i}\, \frac{M_\ell^i(t,\cS)}{|\cS|}
\end{align}
can be further simplified. Specifically, by recalling the definition of $M_\ell^i(t,\cS)$ given in \eqref{eq:M0} and \eqref{eq:M1} we have
\begin{align}
M_\ell^i(t,\cS) & =  D_{\rm KL}\big(f_\ell(X_{\cS}|\F_{t-1})\;\|\;f_{1-\ell}(X_{\cS}|\F_{t-1})\big) \\
\label{eq:complexity1}
& = D_{\rm KL}\big(f_\ell(X_{i}|\F_{t-1})\;\|\;f_{1-\ell}(X_{i}|\F_{t-1})\big) \\
& \hspace{1 in}  + \sum_{j\in\cS\backslash\{i\}} D_{\rm KL}\big(f_\ell(X_{j}|X_i,\F_{t-1})\;\|\;f_{1-\ell}(X_{j}|X_i\F_{t-1})\big) \\
\label{eq:complexity2}& = D_{\rm KL}\big(f_\ell(X_{i}| X_{\psi^*(t-1)})\;\|\;f_{1-\ell}(X_{i}| X_{\psi^*(t-1)})\big) \\
& \hspace{1 in} + \sum_{j\in\cS\backslash\{i\}} D_{\rm KL}\big(f_\ell(X_{j}|X_i)\;\|\;f_{1-\ell}(X_{j}|X_i\big)\ , 
\end{align}
where the transition from \eqref{eq:complexity1} to \eqref{eq:complexity2} is due to the graph being acyclic and Markov. Hence, for computing the information measures $M_\ell^i(t,\cS)$ we need to compute only the marginal distributions of the form $f_\ell(X_i|X_j)$. }

\section{\textcolor{black}{Special Cases and Illustrative Examples}}
\label{sec:special}

In this section, we consider a few special cases, for each of which we present more specialized results. First, \textcolor{black}{for gaining further insight into the tightness of the probabilistic $(\alpha,\beta)$-accuracy guarantee in the non-asymptotic regime (Theorem~\ref{thm:nonasymp}), we provide an illustrative example showing the achievable ranges of error probabilities for a given network size.} Next, we consider the setting in which both distributions are Gaussian and characterize measures defined for designing the sampling strategy in terms of the covariance matrices of the distributions. Built on these results, next, we provide a counterexample establishing that the Chernoff rule is not asymptotically optimal for carrying out the detection decisions in the MRFs considered in this paper. Finally, we consider detecting whether a given MRF contains a cluster of nodes whose data form a given correlation model. In all the special cases, we quantify the performance gaps between our network-guided active sampling strategy and the Chernoff rule.

\subsection{\textcolor{black}{Non-asymptotic Detection Performance}}
\label{sec:feasible_NA}
\textcolor{black}{In this subsection, we provide an illustrative example to assess the sufficient condition for $(\alpha,\beta)$-accuracy of Algorithm~1 in the non-asymptotic regime, which was established in Theorem~\ref{thm:nonasymp}. We consider  testing correlation versus independence  when both distributions are Gaussian, i.e.,
 \begin{equation}\label{eq:hyp:model2}
\H_0\ :\; (X_1,\dots, X_n)\sim {\cal N}(\boldsymbol{\theta},\mathbf I) \qquad \mbox{versus} \qquad \H_1\ :\; (X_1,\dots, X_n)\sim {\cal N}(\boldsymbol{\theta},\mathbf \Sigma)\ ,
\end{equation}
where ${\mathbf I}$ is the identity matrix and $\bSigma$ is an arbitrary correlation matrix such that $\Sigma_{ii}=1$. Hence, the Bhattacharyya distance is given by 
\begin{align}\label{eq:bd2}
\kappa_n(f_0,f_1)& \dff -\ln{\sf B}_n(f_0,f_1)
  \; = \; \frac{1}{2}\ln \frac{1}{\sqrt{{\sf det}\bSigma}}\cdot {\sf det}\left(\frac{{\mathbf I}+\bSigma}{2}\right)
 \; = \; \frac{1}{2}\ln\prod_{i=1}^n \frac{1+\lambda_i}{2\sqrt{\lambda_i}}\ ,
\end{align}
where $\{\lambda_i\}_{i=1}^n$ are the distinct eigenvalues of the symmetric positive definite matrix $\bSigma$. Accordingly, the Bhattacharyya coefficient is given by
\begin{align}\label{eq:bc2}
{\sf B}_n(f_0,f_1)\; = \; \exp\left(-\kappa_n(f_0,f_1)\right) \; = \; \prod_{i=1}^n \sqrt{\frac{2\sqrt{\lambda_i}}{1+\lambda_i}}\ .
\end{align}
By noting that $\Sigma_{ii}=1$ for all $i\in\{1,\dots,n\}$, according to Gershgorin circle theorem all the eigenvalues $\{\lambda_i\}_{i=1}^n$ lie in closed discs centered at 1. Select $\xi>0$ such that at least half of the eigenvalues $\{\lambda_i\}_{i=1}^n$ lie outside the interval
\begin{align}\label{eq:range}
\Big[\big(\sqrt{1+\xi}-\sqrt{\xi}\big)^2\; , \; \big(\sqrt{1+\xi}+\sqrt{\xi}\big)^2\Big ]\ .
\end{align}
It can be readily verified that if
\begin{align}\label{eq:range2}
\lambda_i\notin \Big[\big(\sqrt{1+\xi}-\sqrt{\xi}\big)^2\; , \; \big(\sqrt{1+\xi}+\sqrt{\xi}\big)^2\Big ]\ ,
\end{align}\label{eq:range3}
then
\begin{align}
\frac{2\sqrt{\lambda_i}}{1+\lambda_i} \; < \; \frac{1}{\sqrt{1+\xi}}\ .
\end{align}
Hence, we have the following upper bound on the Bhattacharyya coefficient:
\begin{align}
{\sf B}_n(f_0,f_1)\; \leq \; (1+\xi)^{-\frac{n}{8}}\ .
\end{align}
Therefore, for all the error probability exponents ${\alpha}$ and ${\beta}$ that satisfy
\begin{align}
\frac{1}{4}\ln(1+\xi) \; > \; \max\{{\alpha},{\beta}\}\ ,
\end{align}
according to Theorem~\ref{thm:nonasymp} Algorithm~1 is $(\alpha,\beta)$-accurate almost surely in the non-asymptotic regime. For instance, for $n=200$, $\xi=0.2$, ${\alpha}={\beta}=0.02$, Algorithm~1 is $(\alpha,\beta)$-accurate with probability at least $0.999$. }
\newpage

\subsection{Gauss-Markov Random Fields}
\textcolor{black}{In this subsection we specialize the general results to  GMRF, where we assume that
\begin{equation}\label{eq:hyp:model_G}
\H_0\ :\;  (X_1,\dots, X_n)\sim {\cal N}({\boldsymbol \theta},{\mathbf I}) \qquad \mbox{versus} \qquad 
\H_1\ :\;  (X_1,\dots, X_n)\sim {\cal N}({\boldsymbol \theta},\bSigma)\ ,
\end{equation}
where $\Sigma_{ii}=1$ for all $i\in\{1,\dots,n\}$. This test is generally known as the problem of testing against independence. The graphical model associated with $\H_0$ consists of $n$ nodes without any edges, and we denote the graphical model associated with $\H_1$ by $\G(\V,\E)$.} A GMRF with covariance matrix $\bm\Sigma$ is non-degenerate if $\bm\Sigma$ is positive-definite, in which case, the potential matrix associated with the GMRF is denoted by $ {\bm J}\dff\bm\Sigma^{-1}$. The non-zero elements of the potential matrix have a one-to-one correspondence with the edges of the dependency graph in the sense that
\begin{equation}\label{eq:pot_matrix2}
{ J}_{uv}=0 \quad  \Leftrightarrow\quad  (u,v)\notin\E \ .
\end{equation}
In a GMRF, the properties of the network are strongly influenced by the structure of the underlying dependency graph. GMRFs with acyclic dependency represent an important class of GMRFs in which there exists at most one path between any pair of nodes, and consequently, the cross-covariance value between any two non-neighbor nodes in the graph is related to the cross-covariance values of the nodes connecting them. Specifically, corresponding to any two edges $(i,j)\in \E$ and $(i,k)\in \E$, which share node $i\in\V$, we have
\begin{equation}\label{eq:prop}
{\Sigma}_{jk}={\Sigma}_{ji}\Sigma^{-1}_{ii}{\Sigma}_{ik}\ , \quad \mbox{for all } \{j,k\}\subseteq \N_i \ .
\end{equation}
In a GMRF with an acyclic graph, the elements and the determinant of the potential matrix can be expressed explicitly in terms of the elements of the covariance matrix.

\begin{theorem}[\cite{Anandkumar:IT09}, Theorem 1]\label{thm:potential}
For a GMRF with an acyclic dependency graph $\G=(\V,\E)$ and covariance matrix $\bm{\Sigma}$, the elements of the potential matrix ${J}$ are given by
\begin{align}
{J}_{ii} =\frac{1}{\Sigma_{ii}}\left(1+\sum_{j\in\N_i}\frac{\Sigma_{ij}^2}{\Sigma_{ii}\Sigma_{jj}-\Sigma_{ij}^2}\right)\ ,\quad \forall i\in\{1,\dots,n\}\ ,
\end{align}
and
\begin{align}
{J}_{ij}=\left\{
\begin{array}{cc}
\vspace{1.5mm}
\dfrac{-\Sigma_{ij}}{\Sigma_{ii}\Sigma_{jj}-\Sigma_{ij}^2} & {\rm if}\ (i,j)\in\E\\
0 & {\rm if}\ (i,j)\notin\E\\
\end{array}\right.\ .
\end{align}
Furthermore, the determinant of the potential matrix is also given by
\begin{equation}
{\sf det}( {\bm J})=\prod_{i\in\V} \Sigma_{ii}^{{\sf deg}(i)-1}\prod_{\substack{(i,j)\in\E}}[\Sigma_{ii}\Sigma_{jj}-\Sigma_{ij}^2]^{-\frac{1}{2}}\ ,
\end{equation}
where ${\sf deg}(i)$ is the degree of node $i$.
\end{theorem}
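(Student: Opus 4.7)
The plan is to exploit the tree structure of the dependency graph through the classical factorization of Markov random fields on trees. For any acyclic (forest) dependency graph, the joint density admits the Hammersley--Clifford type factorization
\begin{equation*}
f(x_1,\dots,x_n) \;=\; \prod_{i\in\V} f_i(x_i)\,\prod_{(i,j)\in\E}\frac{f_{ij}(x_i,x_j)}{f_i(x_i)\,f_j(x_j)}\ ,
\end{equation*}
where $f_i$ and $f_{ij}$ are the marginal and pairwise marginal densities implied by the full joint Gaussian. I would take this factorization as the starting point and then specialize every factor to its Gaussian form.

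The next step is to substitute the univariate Gaussian $f_i(x_i)\propto\Sigma_{ii}^{-1/2}\exp(-x_i^2/(2\Sigma_{ii}))$ and the bivariate Gaussian with covariance $\bigl(\begin{smallmatrix}\Sigma_{ii}&\Sigma_{ij}\\ \Sigma_{ij}&\Sigma_{jj}\end{smallmatrix}\bigr)$ into the tree factorization and collect the exponent as a single quadratic form in $x$. Since the full density is also $(2\pi)^{-n/2}(\det\bSigma)^{-1/2}\exp(-\tfrac{1}{2}x^{\top}\bm J x)$, matching the coefficients of $x_i^2$, of $x_ix_j$, and of the normalizing constant separately pins down every $J_{ii}$, every $J_{ij}$, and $\det\bm J$. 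For the off-diagonal entries, the only contribution to the $x_ix_j$ term comes from the edge $(i,j)$ via the inverse of the $2\times 2$ covariance block; this directly yields $J_{ij}=-\Sigma_{ij}/(\Sigma_{ii}\Sigma_{jj}-\Sigma_{ij}^{2})$ when $(i,j)\in\E$, and $J_{ij}=0$ otherwise (the latter being immediate since no term in the tree factorization couples non-neighbors).

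For the diagonal, the $x_i^2$ coefficient receives one contribution of $1/\Sigma_{ii}$ from the node factor $f_i$ and, for every neighbor $j\in\N_i$, a contribution $\Sigma_{jj}/(\Sigma_{ii}\Sigma_{jj}-\Sigma_{ij}^{2})-1/\Sigma_{ii}$ from $f_{ij}/(f_if_j)$. Combining over a common denominator collapses this to $\Sigma_{ij}^{2}/\bigl[\Sigma_{ii}(\Sigma_{ii}\Sigma_{jj}-\Sigma_{ij}^{2})\bigr]$, which is exactly the $j$-th summand in the claimed $J_{ii}$ formula. For the determinant, I would track the normalizing constants of each factor; node $i$ contributes $\Sigma_{ii}^{-1/2}$ once, while each incident edge contributes an additional $\Sigma_{ii}^{+1/2}$ (from cancelling $1/f_i$ in the denominator), producing the net exponent $\deg(i)-1$ on $\Sigma_{ii}$, together with one factor $[\Sigma_{ii}\Sigma_{jj}-\Sigma_{ij}^{2}]^{-1/2}$ per edge.

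The only real work is the careful bookkeeping in the diagonal case and in the normalization: each $\Sigma_{ii}$ is over-counted by the edges incident to $i$, and one must be careful to use the tree (acyclic) hypothesis to ensure the factorization above is actually an equality rather than an approximation. That hypothesis also justifies the relation \eqref{eq:prop}, which guarantees that the bivariate $2\times 2$ blocks carry all the dependence information needed to recover $\bm J$ globally; in the presence of a cycle this bookkeeping would fail because multiple paths between a pair of nodes would introduce additional cross-terms not captured by a single edge factor.
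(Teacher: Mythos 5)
The paper itself offers no proof of this statement: it is imported verbatim from the cited reference, so there is nothing internal to compare against. Your route --- the exact tree factorization $f=\prod_i f_i\prod_{(i,j)\in\E}f_{ij}/(f_if_j)$ for an acyclic graph, followed by matching the Gaussian quadratic forms --- is the standard argument and is essentially the one used in the cited source. Your computations for $J_{ii}$ and $J_{ij}$ are correct: the edge factor contributes $\Sigma_{jj}/(\Sigma_{ii}\Sigma_{jj}-\Sigma_{ij}^2)-1/\Sigma_{ii}=\Sigma_{ij}^2/[\Sigma_{ii}(\Sigma_{ii}\Sigma_{jj}-\Sigma_{ij}^2)]$ to the $x_i^2$ coefficient, the cross term comes only from the single edge block, and acyclicity is exactly what makes the factorization an identity (there being no larger cliques and no multiple paths).

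The determinant bookkeeping, however, is internally inconsistent by a factor of two in the exponents. The normalizing constant you are tracking equals $(2\pi)^{-n/2}(\det\bm J)^{1/2}$, not $\det\bm J$. Collecting your own factors, node $i$ contributes $\Sigma_{ii}^{-1/2}$ and each incident edge contributes $(\Sigma_{ii}\Sigma_{jj})^{1/2}[\Sigma_{ii}\Sigma_{jj}-\Sigma_{ij}^2]^{-1/2}$, so the \emph{normalizing constant} carries $\Sigma_{ii}$ to the power $(\deg(i)-1)/2$ and each edge term to the power $-1/2$; squaring gives
\begin{equation*}
\det\bm J=\prod_{i\in\V}\Sigma_{ii}^{\deg(i)-1}\prod_{(i,j)\in\E}\big[\Sigma_{ii}\Sigma_{jj}-\Sigma_{ij}^2\big]^{-1}\ .
\end{equation*}
Your stated conclusion pairs the squared exponent $\deg(i)-1$ with the unsquared exponent $-1/2$, which cannot both be right; the $n=2$ case with unit variances already shows $\det\bm J=(1-\Sigma_{12}^2)^{-1}$, not $(1-\Sigma_{12}^2)^{-1/2}$. (This also indicates that the exponent $-\tfrac{1}{2}$ in the theorem as printed here is a transcription error from the cited source.) Fix the determinant paragraph by consistently working with either the normalizing constant or its square, and the proof is complete.
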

\begin{figure}[t]
\centering
\includegraphics[width=3.5in]{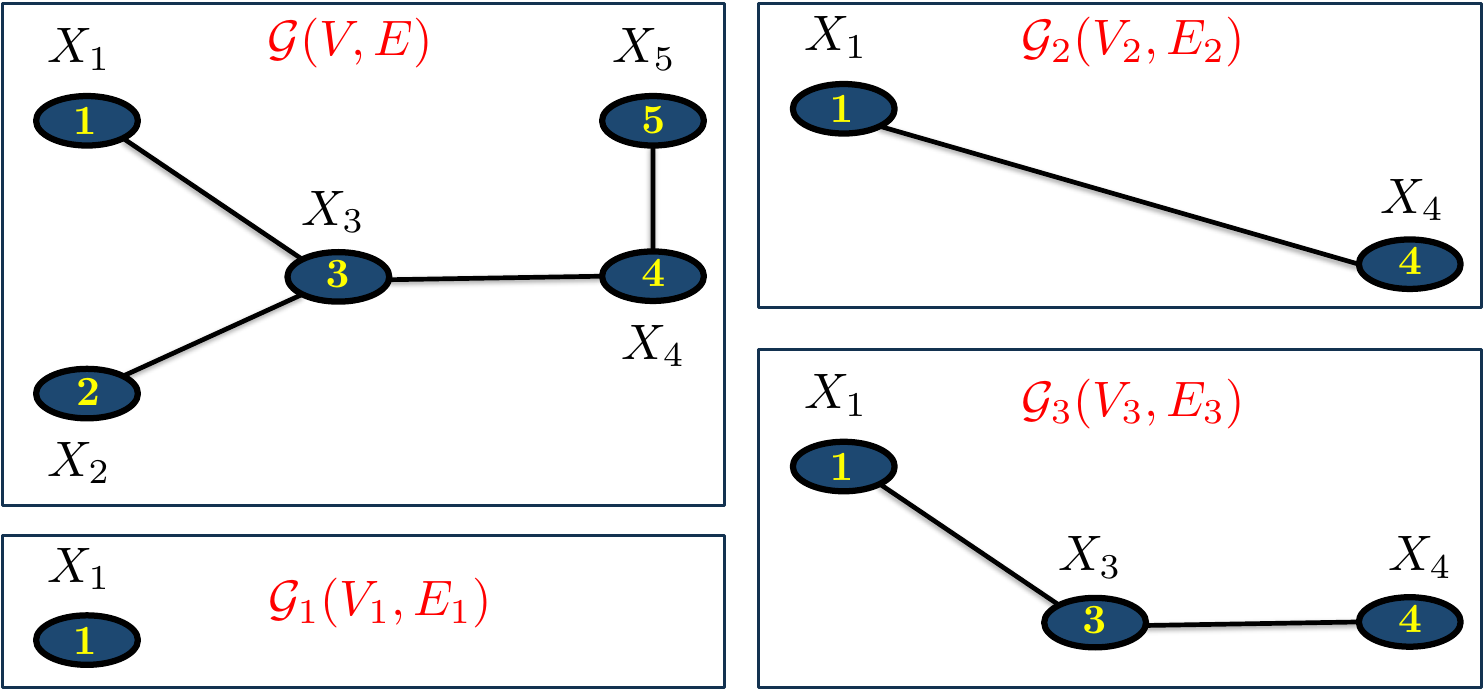}
\renewcommand{\figurename}{Fig.}
\caption{Toy example for the evolution of $\G_t(\V_t,\E_t)$ over time for $\psi^3=\{1,4,3\}$.}
\label{fig:toy}
\end{figure}
\newpage

\noindent We leverages the properties of the GMRFs to obtain closed-form expressions for the information measures defined in \eqref{eq:metric:Chernoff1} and \eqref{eq:M0}--\eqref{eq:M1}, as well as  the node selection rules characterized in Section~\ref{sec:selection}.  In order to describe the effect of the sequential sampling process on different measures that we use,  we sequentially construct the sequence of graphs $\{\G_t(\V_t,\E_t):\;  t\in\{1,\dots,\tau^*_n\}\}$ such that the graph $\G_t(\V_t,\E_t)$ at time $t$ is adapted to the nodes observed up to time $t$. Specifically, we set $\V_t=\psi^t$, and for each pair of nodes $i,j\in\V_t$ we include an edge $(i,j)\in\E_t$ if and only if either $(i,j)\in\E$, or there exists a path between nodes $i$ and $j$ in the original graph $\G(\V,\E)$ such that none of the nodes on this path has been observed up to time $t$ (except for $i$ and $j$). Figure~\ref{fig:toy} depicts a toy example on the evolution of $\G_t(\V_t,\E_t)$ over time for $t\in\{1,2,3\}$ corresponding to an underlying graph $\G(\V,\E)$. Furthermore, for any $(i,j)\in\E_t$ we define
\begin{align}
{\sf LLR}(i,j)\dff \frac{1}{2} \Big[\ln\frac{1}{1-\sigma^2_{ij}}-\frac{\sigma^2_{ij}}{1-\sigma^2_{ij}}\big({X_i^2+X_j^2}\big)+\frac{2\sigma_{ij}}{1-\sigma^2_{ij}}X_iX_j\Big]\ .
\end{align}
Under these definitions and by assuming that $\G_t(\V_t,\E_t)$ remains acyclic at time $t$, for the LLR of the samples up to time $t$ defined in~\eqref{LLR} we have
\begin{align}\label{LLR3}
&\Lambda_{t} = \sum_{i\in\V_t}\sum_{j\in\N_i^t} {\sf LLR}(i,j)\ ,
\end{align}
where $X_i$ is the sample taken from node $i$ and $\N_i^t\dff\{j\in\V_t\;:\;(i,j)\in\E_t\}$. Next, by invoking the GMRF structure, the information measures defined for the Chernoff rule in \eqref{eq:metric:Chernoff1} for any $i\in\varphi^t$ can be further simplified and expressed in terms of the correlation coefficients. Specifically, corresponding to the Chernoff rule and its associated sampling sequence $\psi_{\rm ch}^{\tau_{\rm c}}$ we have
\begin{align}\label{eq:D0_ch}
D_0^i(t)&=\frac{1}{2}\sum_{j\in\N_i^t} \Big[\ln({1-\sigma^2_{ij}})+\frac{\sigma^2_{ij}}{1-\sigma^2_{ij}}\big({X_j^2+1}\big)\Big]\ ,\\
\mbox{and}\qquad D_1^i(t)&=\frac{1}{2}\sum_{j\in\N_i^t} \Big[\ln\frac{1}{1-\sigma^2_{ij}}+\frac{\sigma^2_{ij}}{1-\sigma^2_{ij}}\big({X_j^2-1}\big)\Big]\ .
\end{align}
Furthermore, by defining
\begin{align}
\Delta_t^i \dff\{(j,k)\;:\;\ j,k\in\N_i^t\}\ ,
\end{align}
from \eqref{eq:J0} and \eqref{eq:J1} for the proposed node selection rule we have
\begin{align}\label{eq:metric:GMRF1}
{\mathscr J}_0(\{i\},\psi^{t-1})= \frac{1}{2}\sum_{j\in\N_i^t} \ln(1-\sigma^2_{ij})+\frac{1}{2}\sum_{j\in\N_i^t}\frac{\sigma^2_{ij}}{1-\sigma^2_{ij}}\Big(X_j^2+1\Big)+\sum_{(j,k)\in\Delta_t^i}{\sf LLR}(j,k) \ ,
\end{align}
and
\begin{align}\label{eq:metric:GMRF2}
{\mathscr J}_1(\{i\},\psi^{t-1}) & =\; \frac{1}{2}\sum_{j\in\N_i^t} \ln\frac{1}{1-\sigma^2_{ij}}-\frac{1}{2}\sum_{(j,k)\in\Delta_i^t} \ln\frac{1}{1-\sigma^2_{jk}}\\
& +\frac{1}{2}\Big[\sum_{j\in\N_i^t}\frac{\sigma^2_{ij}}{1-\sigma^2_{ij}}\Big(X_j^2-1\Big)+\sum_{(j,k)\in\Delta_t^i}{\sf LLR}(j,k)\Big]\times \frac{\prod_{j\in\N_i^t} (1-\sigma_{ij}^2)}{\prod_{(j,k)\in\Delta_t^i} (1-\sigma_{jk}^2)} \ .
\end{align}
Similarly, by leveraging the result of Theorem~\ref{thm:markov}, for any $\cS\in\L^i_t$ we find
\begin{align}\label{eq:metric3}
{\mathscr J}_0(\cS\backslash\{i\},\psi^{t-1}) \;=\;  & \frac{1}{2}\sum_{j\in\cS\setminus\{i\}}\Big[\ln(1-\sigma_{ij}^2)+\frac{2\sigma_{ij}^2}{1-\sigma_{ij}^2}\Big]\ ,\\
\mbox{and}\qquad {\mathscr J}_1(\cS\backslash\{i\},\psi^{t-1}) \;=\; & \frac{1}{2}\sum_{j\in\cS\setminus\{i\}}\Big[\ln\frac{1}{1-\sigma_{ij}^2}\Big]\ .
\end{align}
Subsequently, based on \eqref{eq:M_expand}, the closed-form expression for $M_\ell^i(t,\cS)$ is obtained from
\begin{align}
M_\ell^i(t,\cS)={\mathscr J}_\ell(\{i\},\psi^{t-1})\; + \; {\mathscr J}_\ell(\cS\backslash\{i\},\psi^{t-1}) \ ,\quad \mbox{for }\ell\in\{0,1\} \ .
\end{align}
These closed-form expressions of the information measures in terms of the covariance matrix entries and the dependency graph structure substantially reduces the computational complexities involved in calculating these measures from the expected values in~\eqref{eq:metric:Chernoff1} and~\eqref{eq:M0}--\eqref{eq:M1}.

\begin{figure}[b]
\centering
\includegraphics[width=3.5in]{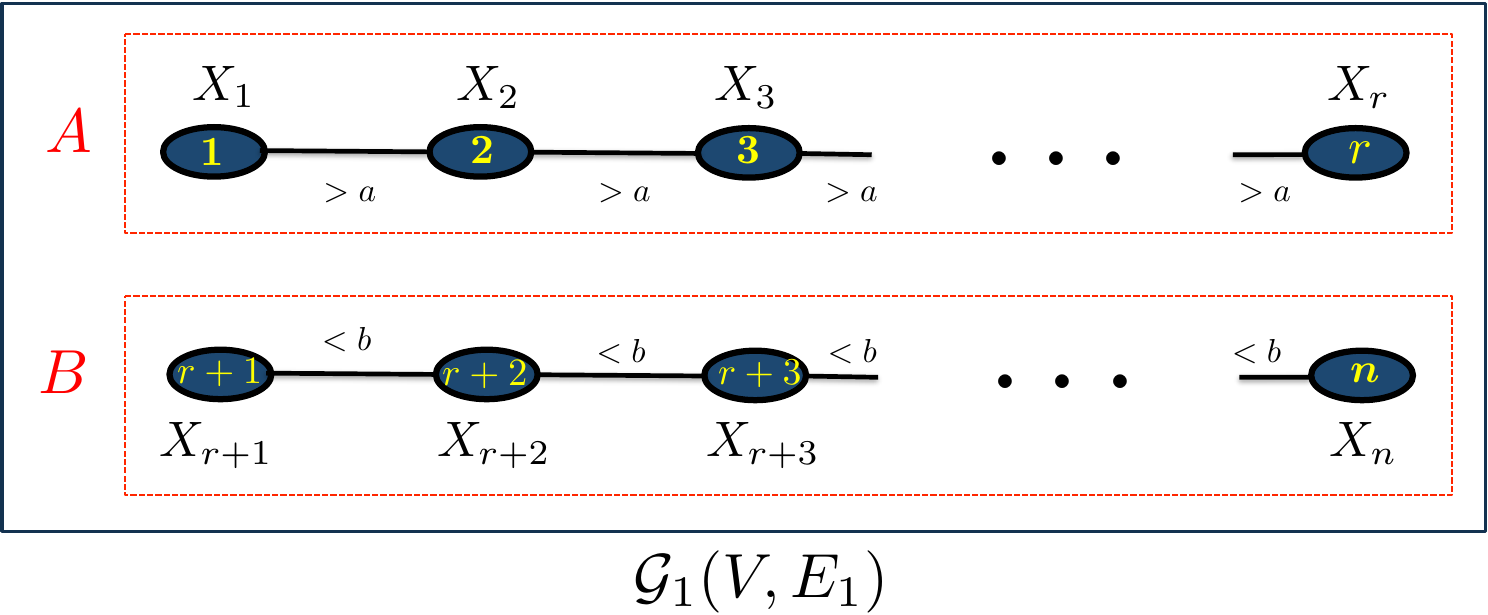}
\renewcommand{\figurename}{Fig.}
\caption{A GMRF consisting of two line graphs.}
\label{fig:Line}
\end{figure}

\subsection{Counter Example for the Optimality of Chernoff rule}
\label{sec:counter}

Building on the results for the GMRF, in this subsection, we provide an example of a heterogeneous network for which the Chernoff rule is not asymptotically optimal, and quantify the gap between its performance and our proposed rule. For this purpose, we consider a setting in which the random variables $X_\V=\{X_i\;:\; i\in\V\}$ are independent under $\H_0$, while under $\H_1$ they form a GMRF with covariance matrix $\bm\Sigma$. As depicted in Fig.~\ref{fig:Line}, the dependency graph of the GMRF consists of two disjoint line graphs corresponding to the nodes in sets $A$ and $B=\V\setminus A$. By denoting the covariance matrix of the random variables generated by sets $A$ and $B$ by $\bm\Sigma^A$ and $\bm\Sigma^B$, respectively, we assume that for any $(i,j)\in\E$ we have
\begin{align}
|\Sigma^A_{ij}|>a \ ,\quad\text{and}\quad |\Sigma^B_{ij}|<b \ ,
\end{align}
where $a>b$. This means that the random variables generated by the nodes in set $ A $ are more strongly correlated than those generated by the nodes in set $ B$. For such a network, the performance gap between the proposed rule and the Chernoff rule is established in terms of $a$ and $b$ in the following theorem.

\begin{theorem}\label{thm:per3}
Consider testing independence in~\eqref{eq:hyp:model_G}, where the GMRF consists of two disjoint line graphs corresponding to the sets of nodes in $A$ and $B$. If the correlation coefficient values between the neighbors in set $A$ are greater than $a$, while in set $B$ they are less than $b$ and $|A|=p=o(n)$, then as $n$ grows for $\ell\in\{0,1\}$
\begin{align}
\lim_{n\rightarrow\infty}  \frac{\mathbb{E}_\ell\{\tau_{\rm c}\}}{\mathbb{E}_\ell\{\tau^*_n\}}=\frac{I_\ell(A)}{I_\ell(B)}  \geq \Big(\frac{a}{b}\Big)^2 > 1\ ,
\end{align}
where $\tau_{\rm c}$ and $\tau^*_n$ are the stopping times of the strategies based on the Chernoff rule and Algorithm~1.
\end{theorem}
\begin{proof}
See Appendix \ref{App:thm:per3}.
\end{proof}
This theorem establishes that the Chernoff rule is not necessarily an asymptotically optimal sampling strategy when selection decisions are statistically dependent.

\begin{figure}[b]
\centering
\includegraphics[width=5in]{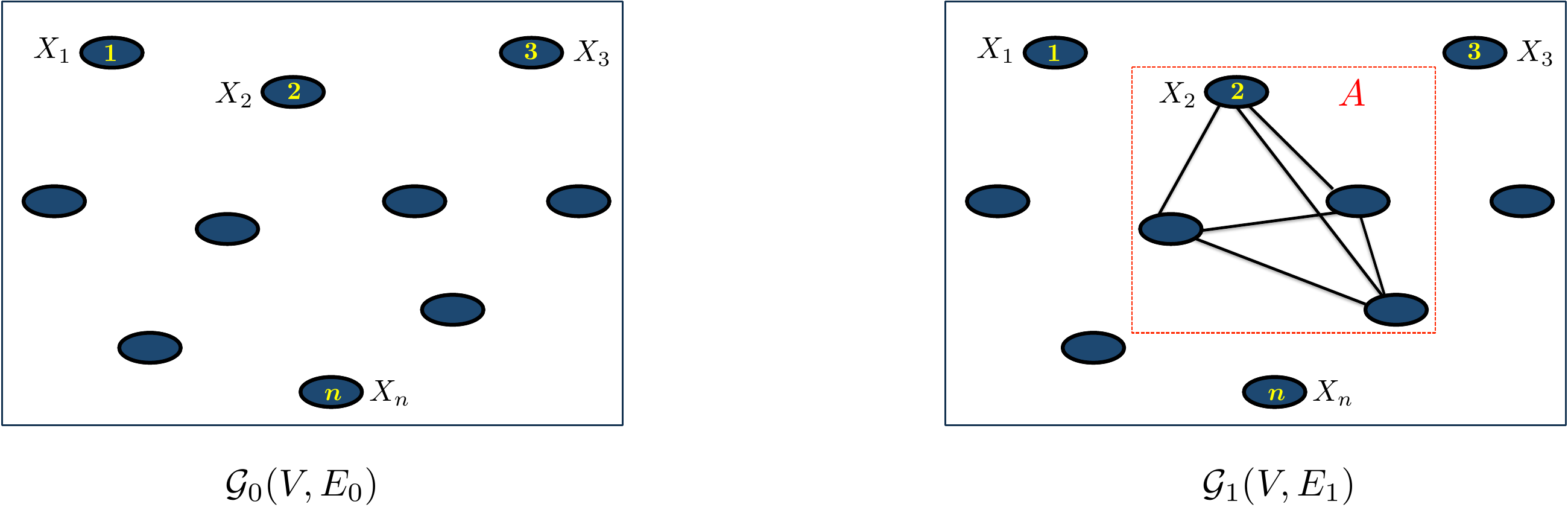}
\renewcommand{\figurename}{Fig.}
\caption{Independence versus a MRF consisting of one cluster and independent random variables.}
\label{fig:Local}
\end{figure}

\subsection{Cluster Detection}

In this subsection, we analyze cases in which the two statistical models under $\H_0$ and $\H_1$ are all similar except for a small cluster of nodes that exhibit two different correlation models. Specifically, we first consider a model in which there is a subset of nodes $B\subseteq\{1,\dots,n\}$ such that random variables $X_B\dff\{X_i:i\in B\}$ are statistically independent under both models $\H_0$ and $\H_1$. This indicates that the correlation models under $\H_0$ and $\H_1$ differ only in their distributions over the random variables from nodes $A\dff\{1,\dots,n\}\setminus B$, as depicted in Fig.~\ref{fig:Local}. Also, we assume that the random variables $X_A\dff\{X_i\;:\;i\in A\}$ form a homogeneous correlation structure, which means that observing any subsequence of the nodes in set $A$, on average provides the same amount of information. Clearly, for any set of nodes $U\subseteq B$, we have \textcolor{black}{
\begin{align}
\forall U\subseteq B\; :\quad I_\ell(U)=0\ .
\end{align}
}
In this setting, we show that  there is a constant gap between the expected stopping times of the proposed rule and the Chernoff rule. This gap stems from the fact that our proposed approach directly starts from sampling the nodes in $A$, and does not waste any sampling time by taking samples from set $B$. However, the Chernoff rule, on average, takes a number of samples from $B$ before sampling from $A$. The gap between the stopping times is formulated in the next theorem.

\begin{theorem}\label{thm:mn}
In a network of size $n$, when there exists a subset of nodes $A$ with size $p$ forming an MRF with a connected graph, while the rest of the network generate independent random variables, we have
\begin{align}
\textcolor{black}{0 \leq} \; \bbe_\ell\{\tau_{\rm c}\}-\bbe_\ell\{\tau^*_n\} =  \textcolor{black}{\Theta}\Big(\frac{n}{p}\Big)\ ,\quad \text{for}\ \ell\in\{0,1\}\ ,
\end{align}
where $\tau_{\rm c}$ and $\tau^*_n$ are the stopping times of the strategies based on the Chernoff rule and the proposed selection rule, respectively.
\end{theorem}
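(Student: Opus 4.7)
The plan is to decompose each expected stopping time into two components---time spent drawing informative samples from $A$ and time wasted on uninformative samples from $B$---and bound each component separately for the two rules. The proposed rule will spend essentially zero samples on $B$, while Chernoff will waste $\Theta(n/p)$ samples on $B$ before concentrating on $A$; once within $A$, the two rules will have the same asymptotic expected delay by homogeneity of $A$, which delivers the $\Theta(n/p)$ gap.

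First I would establish the informational asymmetry. Because the joint distribution of $X_B$ is identical under $\H_0$ and $\H_1$ and is independent of $X_A$, every KL-based measure computed from nodes lying only in $B$ vanishes identically: $D_\ell^i(t)=0$ and $M_\ell^i(t,\cS)=0$ whenever $i\in B$ or $\cS\subseteq B$, regardless of $\F_{t-1}$. On the other hand, for any neighboring pair $i,j\in A$, evaluating the inner maximum in \eqref{eq:MC} with $\cS=\{i,j\}$ yields a strictly positive value. Hence the selection function \eqref{eq:MC} always prefers an $A$-node, so $\psi^*(t)\in A$ for every $t\leq \tau^*_n$, and $\bbe_\ell\{\tau^*_n\}$ equals the expected delay of the sequential test restricted to $A$.

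For the Chernoff rule, the single-variable marginals coincide under $\H_0$ and $\H_1$ (the correlation structure affects only the joint law), so $D_\ell^i(1)=0$ uniformly over $\V$ and tie-breaking is uniform. Inductively, as long as no $A$-node has been drawn, every remaining $D_\ell^i(t)$ stays at zero and the draw is uniform over $\varphi^t$. A standard urn computation then gives that the expected number of $B$-samples before the first $A$-sample is $(n-p)/(p+1) = \Theta(n/p)$ for $p=o(n)$. The moment an $A$-node is observed, every unobserved $A$-neighbor attains strictly positive conditional KL while all $B$-nodes remain at zero; connectedness of $A$ then guarantees that Chernoff subsequently traverses $A$ without ever returning to $B$.

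To close the argument I would invoke the homogeneity assumption on $A$ together with Theorems \ref{thm:asymp1} and \ref{thm:asymp2}, which imply that any $\F_t$-measurable selection rule restricted to $A$ attains the same leading-order asymptotic delay as the optimal rule. Adding the $\Theta(n/p)$ wasted prefix to the Chernoff trajectory and subtracting the common within-$A$ delay yields $\bbe_\ell\{\tau_{\rm c}\}-\bbe_\ell\{\tau^*_n\}=\Theta(n/p)$. The main obstacle will be the careful handling of the tie-breaking phase: I need to ensure that uniform random selection really does persist on $\varphi^t$ throughout the prefix (so the urn calculation applies), and that the Chernoff rule's within-$A$ delay matches the proposed rule's to within $o(n/p)$ rather than merely $O(n/p)$, so that the claimed gap is sharp on both sides and not just an upper bound.
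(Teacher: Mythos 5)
Your proposal follows essentially the same route as the paper's proof: decompose the Chernoff stopping time into the samples spent in $B$ versus $A$, show the $B$-prefix is a uniform draw without replacement until the first $A$-node is hit (the paper writes this as $\mathbb{P}(\tau_{\rm c}^B=k)=\binom{n-p}{k}\binom{n}{k}^{-1}\frac{p}{n-k}$ and bounds its mean between $\tfrac{1}{4}\tfrac{n}{p}$ and $\tfrac{n}{p}$, i.e.\ your urn computation), and match the within-$A$ delays via the optimality/homogeneity theorems. Your closing remark about needing the within-$A$ delays to agree to $o(n/p)$ for the two-sided $\Theta$ bound is a fair point that the paper's own proof glosses over (it only records $\bbe_\ell\{\tau_{\rm c}^A\}\geq\bbe_\ell\{\tau_n^*\}$), but the overall argument is the same.
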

\begin{proof}
See Appendix \ref{App:thm:mn}.
\end{proof}
This theorem establishes the zero order asymptotic gain of the proposed strategy over the Chernoff rule in a special setting. Note that as $p$ (the size of $A$) becomes smaller, which leads to more similar and less distinguishable models under $\H_0$ and $\H_1$, the performance gap increases according to $\frac{n}{p}$. Next, we further generalize the above setting to one in which under $\sH_1$, besides $X_A$, random variables $X_B$ also form a homogeneous correlation structure (not independent anymore) with a connected dependency graph, i.e., for $\ell\in\{0,1\}$ and $\forall U\subseteq B$ we have $I_\ell(U)={I}_\ell(B)$.
This setting is depicted in Fig.~\ref{fig:TwoLocal}. If for set $A$ we have $|A|=o(n)$, then the Chernoff rule starts the sampling process from set $B$ almost surely, and it remains in set $B$ until it exhausts all the nodes of $B$, while the proposed rule always identifies the most informative nodes to take the sample. The following theorem characterizes the performance gap between the Chernoff and the proposed rule in this setting.

\begin{figure}[b]
\centering
\includegraphics[width=5in]{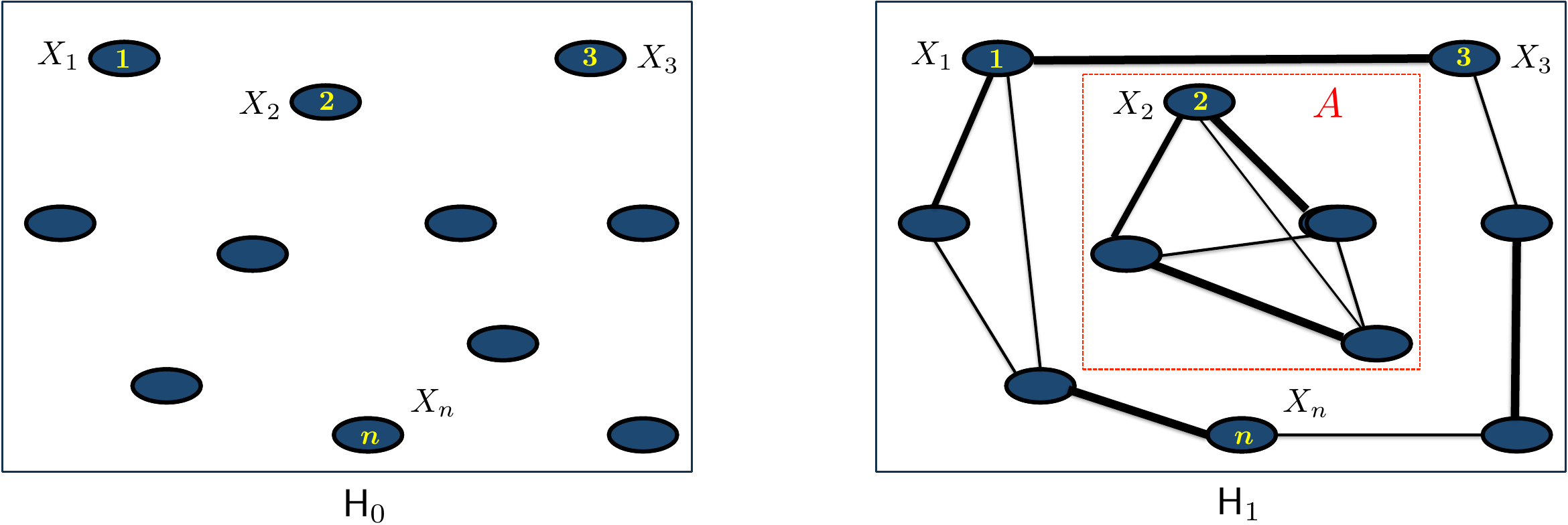}
\renewcommand{\figurename}{Fig.}
\caption{Independence versus a MRF consisting of two clusters.}
\label{fig:TwoLocal}
\end{figure}

\begin{theorem}\label{thm:per2}
Consider a network of size $n$ partitioned into sets $A$ and $B$ specified in Fig.~\ref{fig:TwoLocal}. In the asymptote of large $n$, if the dependency graph of the nodes in both $A$ and $B$ are connected and $|A|=o(n)$, then 
\begin{align}
\textcolor{black}{\lim_{n\rightarrow\infty}}  \frac{\bbe_0\{\tau_{\rm c}\}}{\bbe_0\{\tau_n^*\}}  =\frac{\max\{I_0(A),I_0(B)\}}{I_0(B)} \ ,
\qquad\text{and }\qquad \textcolor{black}{\lim_{n\rightarrow\infty}}  \frac{\bbe_1\{\tau_{\rm c}\}}{\bbe_1\{\tau_n^*\}} =\frac{\max\{I_1(A),I_1(B)\}}{I_1(B)} \ .
\end{align}
\end{theorem}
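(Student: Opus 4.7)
The plan is to derive the asymptotic per-sample delay of each rule separately and then form the ratio. For Algorithm~1 the result will follow almost immediately from Theorem~\ref{thm:opt} once the limit measure $I_\ell^\ast$ is identified in this setting; the bulk of the work concerns the Chernoff rule, for which one must show that its trajectory is confined to $B$ with overwhelming probability.

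First I would establish that $I_\ell^\ast = \max\{I_\ell^A, I_\ell^B\}$. Because the nodes in $A$ and $B$ are disconnected in both $\G_0$ and $\G_1$, the global Markov property gives $X_A \independent X_B$ under each hypothesis, so for every $U\subseteq\V$ the joint density factorizes as $f_\ell(X_U;U) = f_\ell(X_{U\cap A};U\cap A)\,f_\ell(X_{U\cap B};U\cap B)$. Consequently ${\sf nLLR}_\ell(X_U,U)$ is a convex combination of the normalized LLRs over $U\cap A$ and $U\cap B$ with weights $|U\cap A|/|U|$ and $|U\cap B|/|U|$. Passing to the limit and invoking the homogeneity assumptions~\eqref{eq:I0:A}--\eqref{eq:I0:B}, the supremum over $U$ is attained by either a pure subset of $A$ or a pure subset of $B$, yielding $I_\ell^\ast = \max\{I_\ell^A, I_\ell^B\}$. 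Substituting into Theorem~\ref{thm:opt} then gives $\lim_{n\to\infty} \mathbb{E}_0\{\tau_n^\ast\}/n = \beta/\max\{I_0^A,I_0^B\}$ and $\lim_{n\to\infty} \mathbb{E}_1\{\tau_n^\ast\}/n = \alpha/\max\{I_1^A,I_1^B\}$.

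Next I would analyze the Chernoff rule by tracking which component it samples. Because $|A|=o(n)$, the first action (which ties over all nodes and breaks uniformly at random) lands in $B$ with probability $|B|/n\to 1$. Thereafter, for any unobserved node $i\in A$ with no observed neighbor in $A$, the conditioning set $\psi^{t-1}\subseteq B$ is independent of $X_i$ under both hypotheses, so $D_\ell^i(t)$ collapses to the KL divergence between the marginal distributions of $X_i$, which in the MRF settings of interest (notably the GMRF setup used throughout Section~\ref{sec:special} where $\Sigma_{ii}=1$) reduces to zero. In contrast, every unobserved neighbor $j\in B$ of an already-sampled node has $D_\ell^j(t)$ bounded below by a positive constant depending on the minimal nonzero correlation coefficient in $B$. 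Since the dependency graph on $B$ is connected, the frontier of unobserved neighbors within $B$ remains nonempty as long as $B$ is not exhausted, which forces the Chernoff trajectory to remain in $B$. Along any such $B$-confined path the homogeneity assumption~\eqref{eq:I0:B} yields that the normalized LLR converges to $I_\ell^B$, and the stopping-time argument underlying Theorems~\ref{thm:asymp1}--\ref{thm:asymp2} (applied with $I_\ell$ replaced by $I_\ell^B$) produces $\lim_{n\to\infty}\mathbb{E}_0\{\tau_{\rm c}\}/n = \beta/I_0^B$ and $\lim_{n\to\infty}\mathbb{E}_1\{\tau_{\rm c}\}/n = \alpha/I_1^B$. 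Taking the ratio of these two limits produces the desired identities.

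The main obstacle is the second step. Beyond the clean ``isolated-node'' argument above, one must carefully handle the event that $B$ is exhausted before the stopping threshold is crossed (which, under the implicit feasibility requirements $\beta<I_0^B$ and $\alpha<I_1^B$, happens with vanishing probability since roughly $\gamma n/I_\ell^B$ samples suffice while $|B|/n\to 1$), and also rule out occasional switches into $A$ during intermediate ties in the Chernoff score. Both issues can be controlled by combining a Wald-type concentration estimate on the drift of $\Lambda_t$ with a uniform lower bound on the score gap between the best candidate in $B$ and any isolated candidate in $A$; this gap is strictly positive asymptotically because the homogeneity assumption on $B$ prevents $I_\ell^B$ from degenerating along any sampled path.
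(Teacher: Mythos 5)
Your proposal is correct and follows essentially the same route as the paper's own (very terse) proof: the Chernoff rule starts and remains in $B$ because $|A|=o(n)$, so its delay is governed by $I_\ell^B$ via Theorem~\ref{thm:asymp2}, while Algorithm~1 achieves $I_\ell^\ast=\max\{I_\ell^A,I_\ell^B\}$ via Theorem~\ref{thm:opt}, and the ratio follows. Your write-up simply supplies details the paper leaves implicit — the factorization argument identifying $I_\ell^\ast$, the score-gap reason the Chernoff trajectory cannot enter $A$, and the treatment of ties and exhaustion of $B$ — all of which are consistent with the paper's intended argument.
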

\begin{proof}
When $p=o(n)$ the Chernoff rule starts the sampling process from set $B$ with probability $1$. By invoking the results of Theorem~\ref{thm:asymp2},~Corollary~\ref{thm:opt:L}, and~Theorem~\ref{thm:opt:U} we  conclude the proof.
\end{proof}

According to the theorem above, when the size of $A$ is sufficiently small such that most of the time, the Chernoff rule starts the sampling process from set $B$, the Chernoff loses its first-order asymptotic optimality property, as shown in the counterexample in Section~\ref{sec:counter}. The settings discussed in this subsection highlight the advantages of the proposed selection rule by quantifying two main gains; the gain of selecting the best node at the beginning of the sampling process, and the gain obtained from freely navigating throughout the entire network by jumping over subgraphs in order to find the most informative nodes. Although these settings are special cases, the gain of the proposed rule for a general network is a combination of these two gains.

\section{Numerical Evaluations}
\label{sec:simulations}

In this section, we evaluate the performance of the proposed sampling strategy by comparing it with the existing approaches through simulations. \textcolor{black}{First, we examine the $(\alpha,\beta)$-accuracy conditions. We consider Gaussian distributions ${\cal N}({\boldsymbol \theta},\bSigma_0)$ and ${\cal N}({\boldsymbol \theta},\bSigma_1)$ under models $\H_0$ and $\H_1$, respectively. The covariance matrices $\Sigma_0$ and $\Sigma_1$ have all their diagonal elements equal to 1, and the off-diagonal elements randomly take values in the range $[-1,1]$, such that the overall combinations constitute valid covariance matrices. Figure~\ref{fig:feasible} shows the variations of the lower bound on the $(\alpha,\beta)$-accuracy probability established in Theorem~\ref{thm:nonasymp} with respect to increasing network size $n$ for four different levels of reliability constraints. It is observed that for reliabilities as small as $10^{-8}$, $(\alpha,\beta)$-accuracy is guaranteed almost surely when the network size is as small as 100 nodes. We remark that for each reliability level, we evaluate two distinct settings where in one the covariance matrices $\Sigma_0$ and $\Sigma_1$ are generated completely randomly (solid curves) and in the other settings half of the $n$ Gaussian random variables, i.e., $\{X_1,\dots,X_{\frac{n}{2}}\}$ have the same joint distribution (dashed curves).}

\begin{figure}[h]
\centering
\includegraphics[width=3.0in]{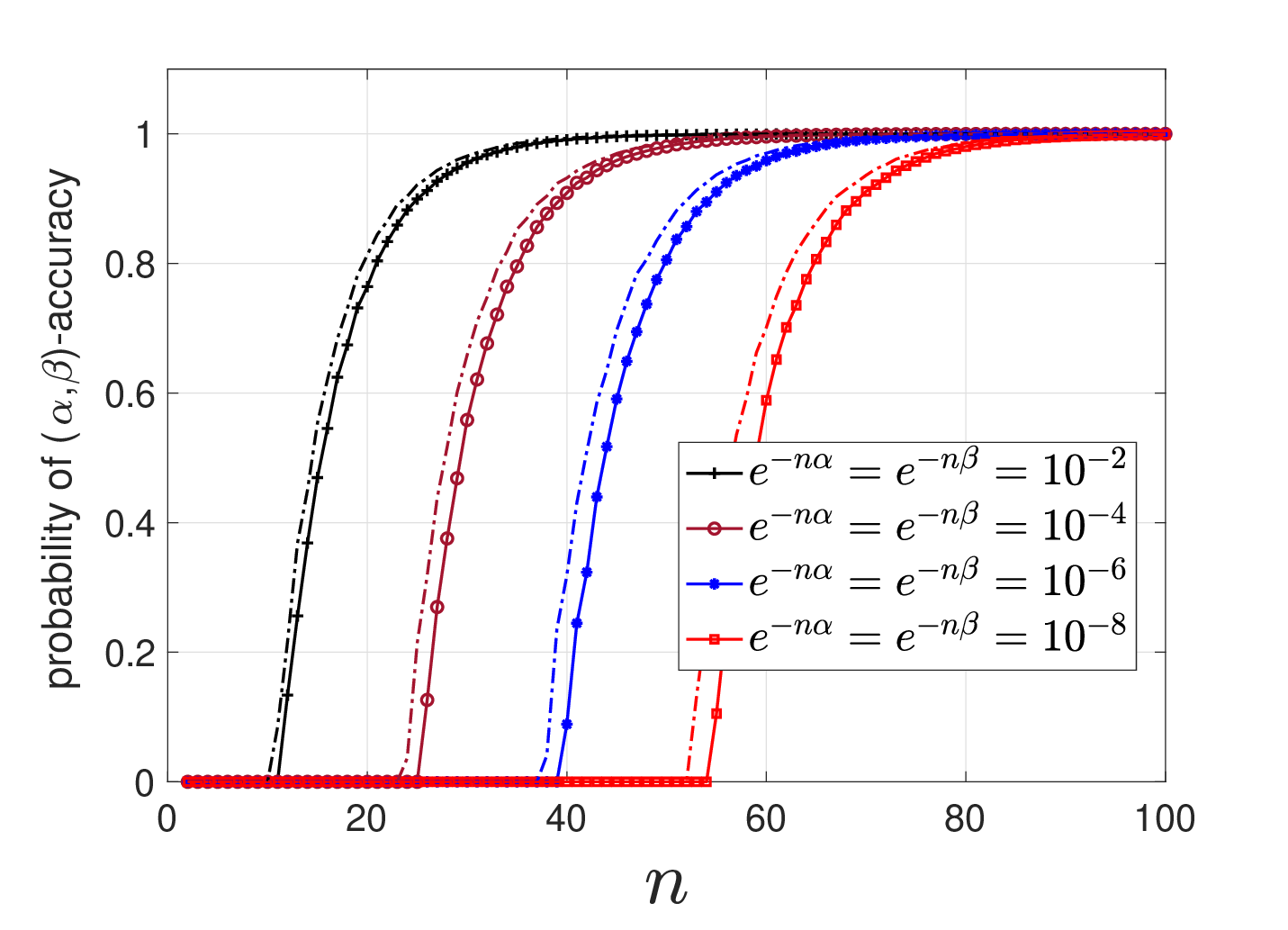}
\renewcommand{\figurename}{Fig.}\vspace{-.2 in}
\caption{Lower bound on the probability of $(\alpha,\beta)$-accuracy.}
\label{fig:feasible}
\end{figure}

\begin{figure}[t]
\begin{minipage}[b]{0.4\linewidth}
\centering
\includegraphics[width=3.0in, height=2.0in]{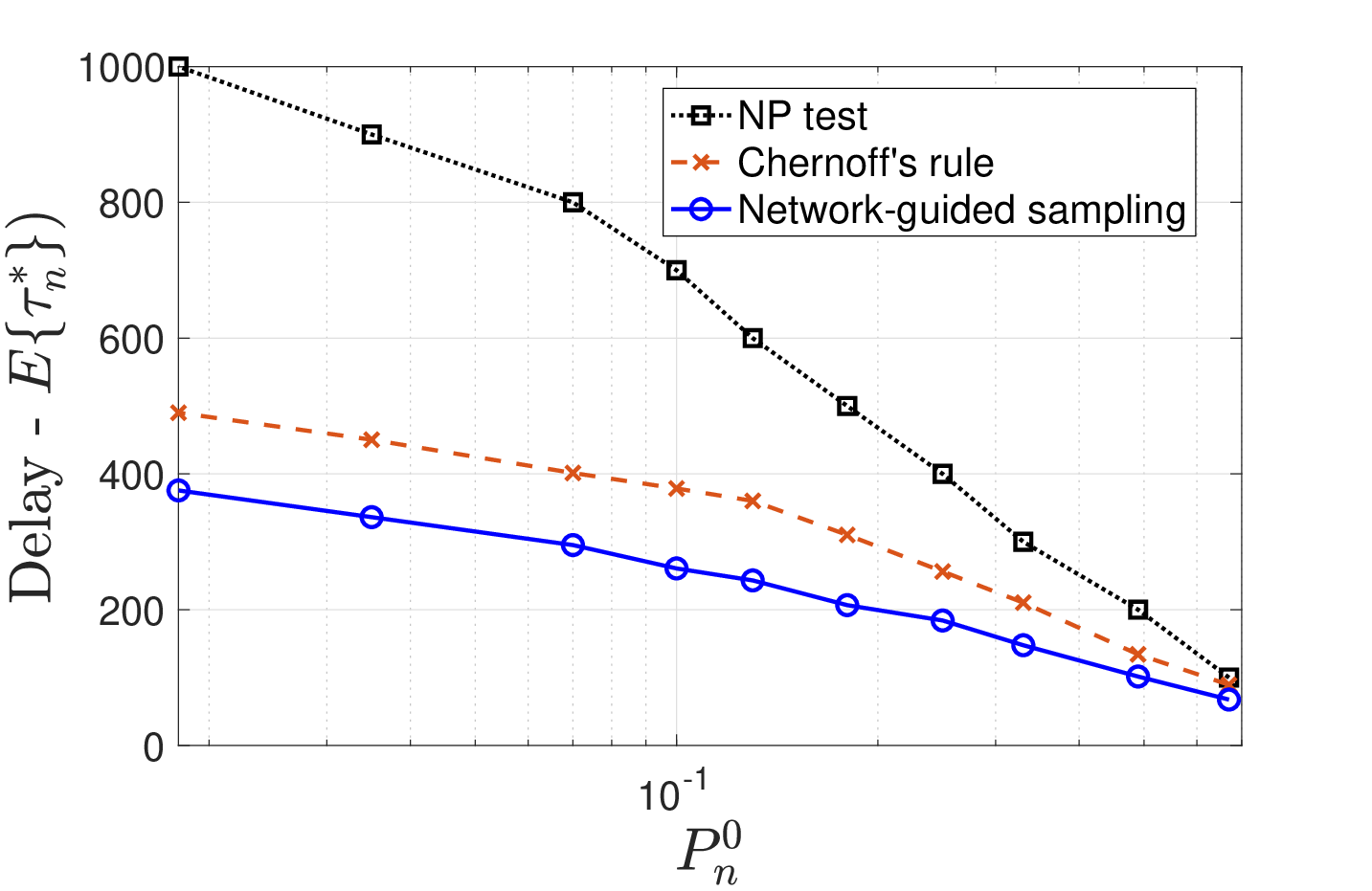}
\renewcommand{\figurename}{Fig.}\vspace{-.2 in}
\caption{Average delay versus error probability in a homogeneous network.}
\label{fig:Comp}
\end{minipage}
\hspace{.1\linewidth}
\begin{minipage}[b]{0.4\linewidth}
\centering
\includegraphics[width=3.0in, height=2.0in]{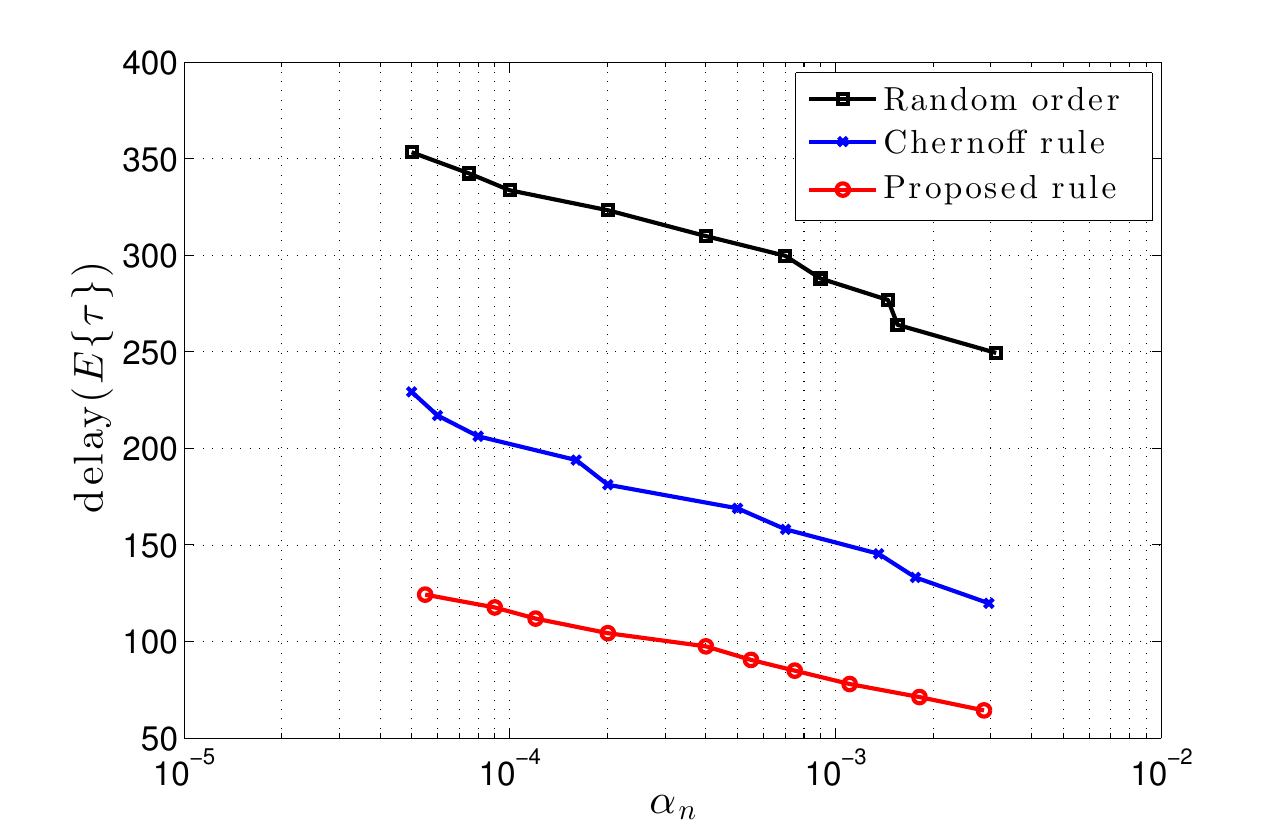}
\renewcommand{\figurename}{Fig.}\vspace{-.2 in}
\caption{Average delay versus error probability in a heterogeneous network.}
\label{fig:delay}
\end{minipage}
\end{figure}

\textcolor{black}{For the rest of the numerical evaluations and simulations,} we use the NP test as the fixed sample-size approach, and for the sequential sampling, we consider random (non-adaptive) sampling order and the Chernoff rule. We consider zero-mean Gaussian distributions for data, and test covariance matrix under $\H_1$ versus $\mathbf{I}_n$ under $\H_0$. We also set $\epsilon_0=\epsilon_1=0.5$. \textcolor{black}{As the first comparison}, we consider the nearest neighbor dependency graph for uniformly distributed nodes in a two-dimensional space, for which the cross-covariance value between two nearest neighbors is a function of their distance. We denote the distance between nodes $i$ and $j$ by $R_{ij}$ and set the correlation coefficient between nodes $i$ and $j$ to $\Sigma_{ij}=M{\e}^{-aR_{ij}}$, where $a,M\in\mathbb{R}_+$. Under $\H_0$ we set $M=0$, which corresponds to independent samples. Under $\H_1$ as $M$ increases the KL divergence between the distributions corresponding to $f_0$ and $f_1$ grows. In Fig.~\ref{fig:Comp}, we set $M=0.1$, $a=0$, and  ${\beta}_n={\e}^{-n{\beta}}=0.1$ and compare the performance of different approaches. To this end, for different values of $n$ we find $\mathsf{P}_n^{0}$ associated with the NP test (i.e., the false alarm probability), based on which we design the sequential sampling strategy for the Chernoff and proposed selection rules and find the average delay. It is observed that the proposed sampling procedure outperforms both the NP test and the Chernoff rule in terms of the reliability-agility trade-off. We also compare the performance of the proposed strategy with that of the Chernoff rule and the random selection rule in a heterogeneous network. For this purpose, we generate a subgraph with three nodes and two edges, in which the cross-covariance values between the neighbors are $0.5$ and $0.1$. We use $500$ copies of this subgraph as the building block for a network consisting of $1500$ nodes. For such a network, the optimal rule is to select the nodes with larger cross-covariance values. Figure~\ref{fig:delay} demonstrates the average delay before reaching a confident decision for different target accuracies and the selection rules when  ${\alpha}={\beta}$. By comparing Fig.~\ref{fig:Comp} and Fig.~\ref{fig:delay}, it is observed that in heterogeneous networks, the proposed strategy improves significantly compared to the Chernoff rule. The reason is the larger discrepancy in the amount of information gained from different nodes.

\begin{figure}[h]
\begin{minipage}[b]{0.4\linewidth}
\centering
\includegraphics[width=3.0in, height=2 in]{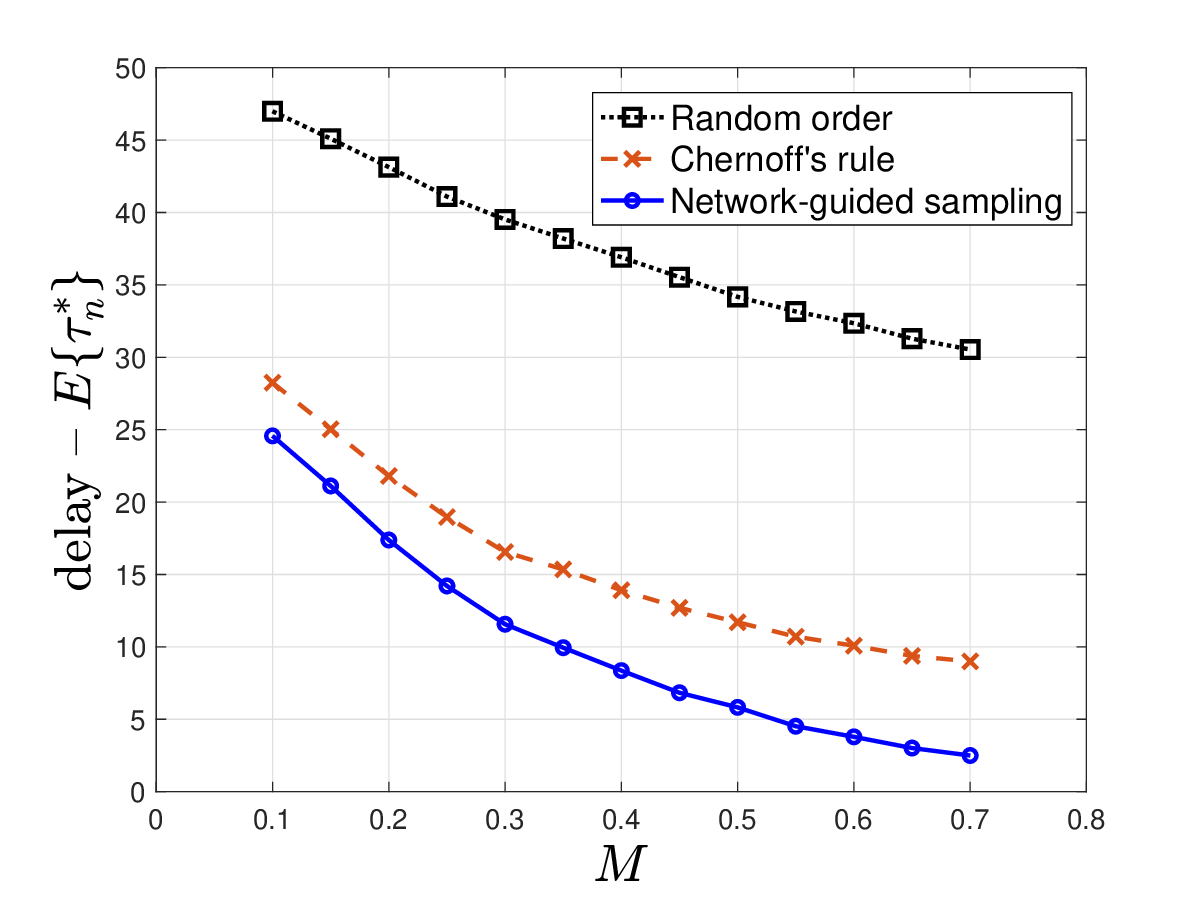}
\renewcommand{\figurename}{Fig.}\vspace{-.2 in}
\caption{Average delay versus $M$.}
\label{fig:DelayVsM}
\end{minipage}
\hspace{.1\linewidth}
\begin{minipage}[b]{0.4\linewidth}
\centering
\includegraphics[width=3.0in, height=2.0in]{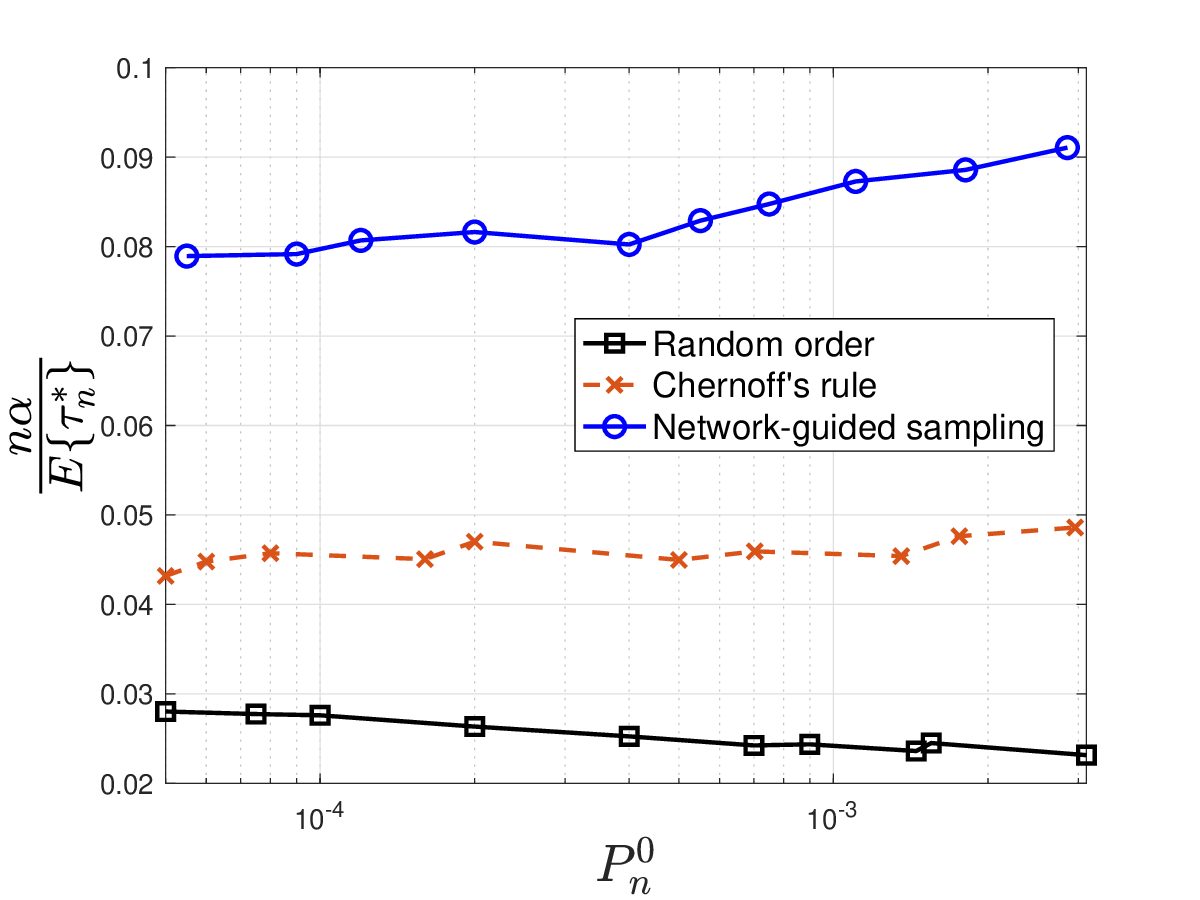}
\renewcommand{\figurename}{Fig.}\vspace{-.2 in}
\caption{Error exponent vs. error probability.}
\label{fig:EE}
\end{minipage}
\end{figure}

In order to compare the performance of different selection rules for different levels of correlation strength, Fig.~\ref{fig:DelayVsM}  compares the average delays incurred by the proposed approach, the Chernoff rule and a random selection rule for different values of $M$ when $n=1000$, ${\alpha}={\beta}=1.6\times 10^{-3}$, and $a=1$. It is observed that both the Chernoff rule and the proposed approach outperform the random selection rule, and as the KL divergence grows by increasing $M$ the improvement is more significant.  Furthermore, in Fig.~\ref{fig:EE} the error exponents are compared where it is observed that the proposed strategy has an error exponent twice as large as that of the Chernoff rule and both of them outperform the strategy based on a random selection of nodes.




\begin{figure}[h]
\begin{minipage}[b]{0.4\linewidth}
\centering
\includegraphics[width=3.0in, height=2.0in]{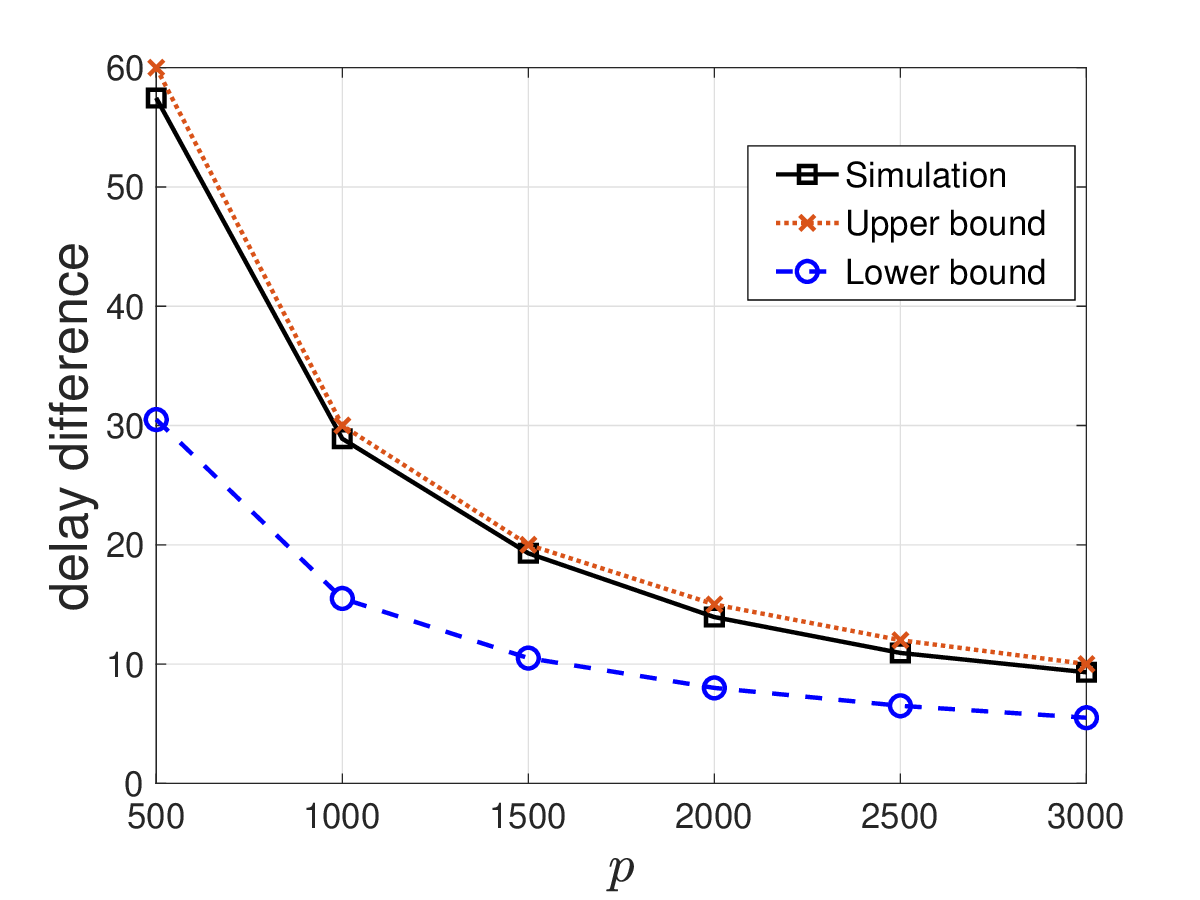}
\renewcommand{\figurename}{Fig.}\vspace{-.2 in}
\caption{The average delay difference between  Chernoff's rule and tnetwork-guided active sampling.}
\label{fig:n_over_m}
\end{minipage}
\hspace{.1\linewidth}
\begin{minipage}[b]{0.4\linewidth}
\centering
\includegraphics[width=3.0in, height=2.0in]{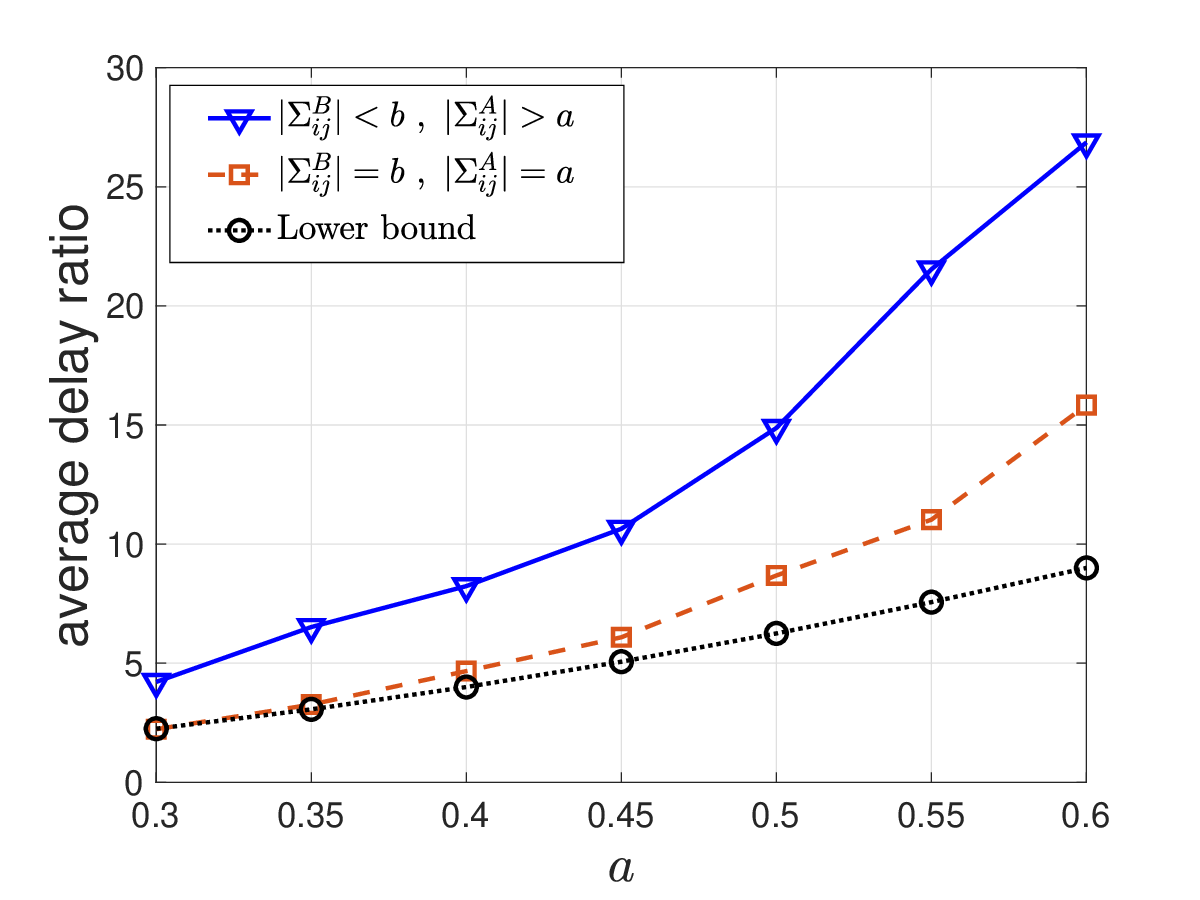}
\renewcommand{\figurename}{Fig.}\vspace{-.2 in}
\caption{The ratio of the expected delay of Chernoff's rule and network-guided active sampling.}
\label{fig:a_over_b}
\end{minipage}
\end{figure}

In order to verify the results of Theorem~\ref{thm:mn}, we consider a network with $n=30000$ nodes, in which only a subset $A$ consisting of $p$ nodes generate correlated random variables under one of the two hypotheses, while the random variables generated by the rest of the nodes are independent under both hypotheses. Figure~\ref{fig:n_over_m} demonstrates the average delay of the Chernoff rule in taking its first sample from set $A$. The upper bound and lower bound obtained in Theorem~\ref{thm:mn} are also shown for comparison. It is observed that the delay difference is always between the obtained bounds, which confirms that it is $\Theta(\frac{n}{p})$.

Finally, we consider a network with $10000$ nodes, from which $50$ nodes, denoted by set $A$, are strongly correlated, i.e., the cross-covariance values between the neighbors in set $A$, denoted by $\Sigma_{ij}^A$, are greater than a constant $a\in(0.3,0.6)$, while the rest of the nodes, denoted by set $B$, also form a connected graph with cross-covariance values $\Sigma_{ij}^B$ less than a constant $b=0.2$. In Fig.~\ref{fig:a_over_b} the ratio between the average delay of the proposed sampling strategy is compared with the lower bound $(\frac{a}{b})^2$ obtained in Theorem~\ref{thm:per3} for different values of $a$. We also include the ratio between the average delays for the setting in which the cross-covariance values in sets $A$ and $B$ are equal to $a$ and $b$, respectively, for which it is observed that the lower bound is tighter.


\section{Conclusion}

We have considered the quickest detection of a correlation structure in a Markov network, with the objective of determining the true model governing the samples generated by different nodes in the network. After discussing the widely used Chernoff rule and its shortcomings, we have designed a sequential and data-adaptive sampling strategy to determine the true correlation structure with the fewest average number of samples while, in parallel, the final decision is controlled to meet target reliability. The proposed sampling strategy, which judiciously incorporates the network's correlation structure into its decision rules, involves dynamically deciding whether to terminate the sampling process, or to continue collecting further evidence, and prior to terminating the process which node to observe at each time. We have established the optimality properties of the proposed sampling strategy and leveraged the Markov properties of the network to reduce the computational complexities involved in the implementation of the proposed approach. We have provided an example for which the Chernoff rule is not optimal. Finally, we have quantified the advantages of the proposed rule over the Chernoff rule for some special cases.

\appendix

\section*{Notations throughout the Proofs}
For the convenience in notations, throughout the proofs we drop the necessary subscript $n$ in $\varphi_n^t$, $\psi_n^t$, $\gamma_n^{\rm L} $, $\gamma_n^{\rm U} $.

\section{Proof of Theorem \ref{thm:nonasymp}} 
\label{App:thm:nonasymp}

\textcolor{black}{We start by showing that if
\begin{align}\label{eq:exit2}
\exists t\in \{1,\dots,n\}\quad \mbox{\rm such that} \quad  \Lambda_t\notin (\gamma_n^{\rm L} ,\gamma_n^{\rm U} )\ ,
\end{align}
then any sequential decision algorithm with the stopping rule $\tau^*_n$ and the detection rule $\delta^*_n$ specified in~\eqref{eq:stop}~and~\eqref{eq:SPRT}, respectively, is $(\alpha,\beta)$-accurate. Given the structure of the stopping time, according to which the sampling process terminates as soon as $\Lambda_t$ exits the band $(\gamma^{\rm L},\gamma^{\rm U})$, the assumption in~\eqref{eq:exit2} is equivalent to having
\begin{align}\label{eq:exit3}
\Lambda_{\tau^*}\notin(\gamma^{\rm L},\gamma^{\rm U})\ .
\end{align}
Therefore, for $\mathsf{P}^{0}_n$ we have
\begin{align}
\mathsf{P}^{0}_n\, &=\,\mathbb{P}_0(\d^*_n=1)  \\
&=\,\sum_{k=1}^n\mathbb{P}_0(\d^*_n=1,\tau^*_n=k)  \\
\label{eq:feas3} & \overset{\eqref{eq:SPRT},\; \eqref{eq:exit3}}{=}\,\sum_{k=1}^n\mathbb{P}_0(\Lambda_{\tau^*_n} \geq \gamma^{\rm U},\tau^*_n=k)  \\
\label{eq:feas} &\overset{\eqref{eq:exit3}}{=} \,\sum_{k=1}^n\mathbb{P}_0(\Lambda_{\tau^*_n}\geq\gamma^{\rm U},{\tau}^*_n=k)  \\
&=\,\sum_{k=1}^n\int_{(\Lambda_k\geq\gamma^{\rm U},{\tau^*_n}=k)} f_0(Y^k;\psi^k)\; \der Y^k  \\
\label{eq:feas4} &\overset{\eqref{LLR}}{=}\,\sum_{k=1}^n\int_{(\Lambda_k\geq\gamma^{\rm U},{\tau}^*_n=k)} \exp(-\Lambda_k)f_1(Y^k;\psi^k)\; \der Y^k  \\
\label{eq:st}
&{\leq}\,\sum_{k=1}^n\int_{(\Lambda_k\geq\gamma^{\rm U},{\tau}^*_n=k)}\exp(-\gamma^{\rm U})f_1(Y^k;\psi^k)\; \der Y^k  \\
&\overset{\eqref{eq:th1}}{=} \,{\e}^{-n{\alpha}}\sum_{k=1}^n\int_{(\Lambda_k\geq\gamma^{\rm U},{\tau^*_n}=k)}f_1(Y^k;\psi^k)\; \der Y^k  \\
\label{eq:st2}&=\,{\e}^{-n{\alpha}}\sum_{k=1}^n\mathbb{P}_1(\delta^*_n=1,{\tau^*_n}=k)  \\
&=\,{\e}^{-n{\alpha}}\cdot\mathbb{P}_1(\delta^*_n=1)  \\
&\leq\,{\e}^{-n{\alpha}} \ ,
\end{align}
where \eqref{eq:feas3} holds according to the definition of the terminal decision rule in~\eqref{eq:SPRT},~\eqref{eq:feas} holds due to the assumption in~\eqref{eq:exit3}, \eqref{eq:feas4} holds due to the definition of LLR in~\eqref{LLR}, and~\eqref{eq:st} holds due to the structure of the region over which the integral is computed. Finally~\eqref{eq:st2} holds by noting that the decision rule $\delta^*_n=1$ specifies that $\Lambda_{\tau^*_n}>0$, which by taking into account~\eqref{eq:exit3} and the fact that $\gamma^{\rm L}<0$, becomes equivalent to $\Lambda_{\tau^*_n}\geq\gamma^{\rm U}$.
By following the same line of argument for $\mathsf{P}^{1}_n$ we obtain
\begin{align}
\mathsf{P}^{1}_n\, & \leq\, {\e}^{\gamma^{\rm L}}\cdot\mathbb{P}_1(\delta^*_n=0)  =\, {\e}^{-n{\beta}}\cdot\mathbb{P}_1(\delta^*_n=0) \leq\, {\e}^{-n{\beta}}  \ .
\end{align}}
\noindent \textcolor{black}{Next, we analyze the likelihood of the condition in~\eqref{eq:exit2} being valid, which establishes a probabilistic guarantee for Algorithm~1 generating $(\alpha,\beta)$-accurate solutions to $\Prob$. 
\begin{align}
1- \P\left(\exists t\in \{1,\dots,n\}\quad {\rm s.t.} \quad  \Lambda_t\notin (\gamma^{\rm L},\gamma^{\rm U})\right) & = \P\left(\Lambda_t\in (\gamma^{\rm L},\gamma^{\rm U})\ ,\quad \forall t\in \{1,\dots,n\}\right) \\
& \leq \P\left( \Lambda_n\in (\gamma^{\rm L},\gamma^{\rm U})\right) \\
\label{Up} & = \sum_{i=0}^ 1 \; \epsilon_i \P_i\left( \Lambda_n\in (\gamma^{\rm L},\gamma^{\rm U})\right)
\end{align}
Next, for the  probability terms in the right hand side we have
\begin{align}
\mathbb{P}_0(\Lambda_n\in(\gamma^{\rm L},\gamma^{\rm U}))\;&{\leq}\; \mathbb{P}_0(\Lambda_n>\gamma^{\rm L})\\
\label{P01}
&{\leq}\; \frac{1}{\sqrt{\exp(\gamma^{\rm L})}}\cdot \mathbb{E}_0\{\sqrt{\exp(\Lambda_n)}\}\\
& \overset{\eqref{eq:bc}}{=} \frac{1}{\sqrt{\exp(\gamma^{\rm L})}}\cdot {\sf B}_n(f_0,f_1) \\
\label{P02}
& \overset{\eqref{eq:th1}}{=} \exp\left({\frac{n{\beta}}{2}}\right)\cdot {\sf B}_n(f_0,f_1) \ ,
\end{align}
where~ \eqref{P01} follows the Markov inequality. By following a similar line of argument we obtain
\begin{align}\label{P03}
\mathbb{P}_1(\Lambda_n\in(\gamma^{\rm L},\gamma^{\rm U}))&\leq \exp\left({\frac{n{\alpha}}{2}}\right)\cdot {\sf B}_n(f_0,f_1) \ .
\end{align}
Hence, from~\eqref{Up}, \eqref{P02}, and \eqref{P03} we obtain
\begin{align}
\P\left(\exists t\in \{1,\dots,n\}\quad {\rm s.t.} \quad  \Lambda_t\notin (\gamma^{\rm L},\gamma^{\rm U})\right) \; \geq \; 1-{\sf B}_n(f_0,f_1)\left[\epsilon_0 \exp\left({\frac{n{\beta}}{2}}\right)+\epsilon_1 \exp\left({\frac{n{\alpha}}{2}}\right)\right]\ .
\end{align}}

\section{Proof of Lemma~\ref{lemma:kappa}}
\label{app:lemma:kappa}
\textcolor{black}{
From the definition of $\kappa(f_0,f_1)$ in \eqref{eq:bd} we have
\begin{align}
2\kappa (f_0,f_1) & =   - \lim_{n\rightarrow\infty} \frac{2}{n}\ln{\sf B}_n(f_0,f_1) \\
& \overset{\eqref{eq:bc}}{=} - \lim_{n\rightarrow\infty} \frac{2}{n}\ln\int \sqrt{f_0(x;\V)f_1(x;\V)}\; \der x \\
& = - \lim_{n\rightarrow\infty} \frac{2}{n}\ln\int \sqrt{\frac{f_0(x;\V)}{f_1(x;\V)}} \; f_1(x;\V)\; \der x\\
\label{eq:b3} & \leq  -\lim_{n\rightarrow\infty} \frac{2}{n}\int \ln\left(\sqrt{\frac{f_0(x;\V)}{f_1(x;\V)}}\right) f_1(x;\V)\; \der x\\
& =  \lim_{n\rightarrow\infty} \int \frac{1}{n}\ln\left({\frac{f_1(x;\V)}{f_0(x;\V)}}\right) f_1(x;\V)\; \der x\\
\label{eq:b4}  & \overset{\eqref{eq:cc2}}{=}  
\lim_{n\rightarrow\infty}  \bbe_1\left[{\sf nLLR}_1(X_V;V)\right]\ ,
\end{align}
where \eqref{eq:b3} holds due to Jensen's inequality. By definition, in a homogeneous network, when the limit exists, the term ${\sf nLLR}_1(X_V;V)$ converges completely to $I_1$. This, in turn, implies that $ \bbe_1\left[{\sf nLLR}_1(X_V;V)\right]$ also converges completely to $I_1$, which, subsequently, converges almost surely to $I_1$ (complete convergence implies almost sure convergence~\cite{Karr:1993}). Hence, in homogeneous networks
\begin{align}
2\kappa (f_0,f_1) \leq \lim_{n\rightarrow\infty}  \bbe_1\left[{\sf nLLR}_1(X_V;V)\right] \xrightarrow{\rm a.s.} I_1\ .
\end{align}
For the heterogeneous networks, we will follow the same line of argument to show that 
\begin{align}
2\kappa (f_0,f_1) \leq \lim_{n\rightarrow\infty}  \bbe_1\left[{\sf nLLR}_1(X_V;V)\right] \xrightarrow{\rm a.s.} I_1(V) \overset{\eqref{eq:I0*}}\leq I^*_1\ .
\end{align}
A similar line of argument also shows that almost surely $2\kappa (f_0,f_1) \leq I_0$ and $2\kappa (f_0,f_1) \leq I_0^*$ in  homogeneous and heterogeneous networks.  By noting the assumption $\max\{{\alpha},{\beta}\}\leq 2\kappa(f_0,f_1)$, the desired conclusion is established.
}

\section{Proof of Theorem \ref{thm:asymp1}} 
\label{App:thm:asymp1}

\textcolor{black}{
In order to prove~\eqref{eq:lower1}, we show that for any feasible solution to~\eqref{eq:Opt} and for all $\epsilon>0$ we have
\begin{align}
\label{eq:p1}
\lim_{n\rightarrow \infty} \mathbb{P}_1\left(\frac{\tau_n}{n}>\frac{{{\alpha}}}{{I}_1+\epsilon}\right)=1\ .
\end{align}
This property, in turn, establishes the desired result in~\eqref{eq:lower1}. Specifically, by applying the Markov inequality we obtain
\begin{align}\label{eq:p2}
\lim_{n\rightarrow \infty} \mathbb{E}_1\left\{\frac{\tau_n}{n}\cdot \frac{{I}_1}{{\alpha}} \right\} & \geq \lim_{n\rightarrow \infty}  \frac{I_1}{I_1+\epsilon}\cdot\mathbb{P}_1\left(\frac{\tau_n}{n}\cdot \frac{{I}_1}{{\alpha}}> \frac{I_1}{I_1+\epsilon} \right)\overset{\eqref{eq:p1}}{=}\frac{I_1}{I_1+\epsilon}\ , \qquad \forall \epsilon > 0\ . 
\end{align}
Since the inequality in~\eqref{eq:p2} is valid for all $\epsilon>0$ we have
\begin{align}\label{eq:p3}
\lim_{n\rightarrow \infty} \mathbb{E}_1\left\{\frac{\tau_n}{n}\cdot \frac{{I}_1}{{\alpha}} \right\} & \geq \sup_{\epsilon>0} \; \frac{I_1}{I_1+\epsilon} = 1\ ,
\end{align}
which concludes~\eqref{eq:lower1}. To prove~\eqref{eq:p1}, for $i\in\{0,1\}$ and $L\in\{2,\dots, n-1\}$, and corresponding to any $(\alpha,\beta)$-accurate algorithm  with stopping time $\tau_n$ and decision rule $\delta_n$ let us define the event
\begin{align}
\A(i,L)\dff\{{\delta_n}=i\, ,\, {\tau_n}\leq L\} \ .
\end{align}
Then, for any $\zeta>0$, \textcolor{black}{for the error probability term $\mathsf{P}_n^{0}$ when the stopping time is  $\tau_n$ and the decision rule is $\delta_n$,} we have
\begin{align}
\label{eq:fa00}\mathsf{P}^{0}_n & = \mathbb{P}_0(\delta_n=1)  \\
\addtocounter{equation}{1}
& = \mathbb{E}_0\{\mathds{1}_{\{\textcolor{black}{\delta_n}=1\}}\} \\
\label{eq:fa011} & = \mathbb{E}_1\{\mathds{1}_{\{\textcolor{black}{\delta_n}=1\}}\exp(-\Lambda_{{\tau_n}})\} \\
\label{eq:fa0} & \geq \mathbb{E}_1\{\mathds{1}_{\{\A(1,L),\Lambda_{{\tau_n}}<\zeta\}}\exp(-\Lambda_{{\tau_n}})\} \\
& \geq {\e}^{-\zeta}\;\mathbb{P}_1(\A(1,L),\Lambda_{{\tau_n}}<\zeta) \\
& \geq {\e}^{-\zeta}\; \mathbb{P}_1\Big(\A(1,L) \; , \;  \sup_{t<L}\Lambda_t<\zeta\Big) \\
\label{eq:fa1}
& \geq {\e}^{-\zeta}\; \Big[\mathbb{P}_1(\A(1,L))-\mathbb{P}_1\big(\sup_{t<L}\Lambda_t\geq \zeta \big)\Big] \\
\label{eq:fa2}
& \geq {\e}^{-\zeta}\; \Big[\mathbb{P}_1(\textcolor{black}{\delta_n}=1)-\mathbb{P}_1({\tau_n}>L)-\mathbb{P}_1\big(\sup_{t<L}\Lambda_t\geq \zeta\big)\Big]  \ ,
\end{align}
\textcolor{black}{where~\eqref{eq:fa011} holds by changing the probability measure,~\eqref{eq:fa0} holds by noting that the event $\{\A(1,L),\Lambda_{{\tau_n}}<\zeta\}$ is a subset of the event $\{\delta_n=1\}$,} and~\eqref{eq:fa1} and~\eqref{eq:fa2} hold due to basic set operations properties. By rearranging the terms in~\eqref{eq:fa00}  and \eqref{eq:fa2} and invoking $\mathbb{P}_0(\delta_n=1)\leq {\e}^{-n{\alpha}}$ and $\mathbb{P}_1(\delta_n=0)\leq {\e}^{-n{\beta}}$ (the decision rules are $(\alpha,\beta)$-accurate) we obtain 
\begin{align}
\label{eq:p1L_1} \mathbb{P}_1({\tau_n}>L) &\; \geq \;  \mathbb{P}_1(\textcolor{black}{\delta_n}=1)-e^\zeta\; \mathbb{P}_0(\textcolor{black}{\delta_n}=1)-\mathbb{P}_1\big(\sup_{t<L}\Lambda_t\geq \zeta \big)  \\
& =1-  \mathsf{P}_n^{1} -e^\zeta\; \mathsf{P}_n^{0}-\mathbb{P}_1\big(\sup_{t<L}\Lambda_t\geq \zeta \big)  \\
\label{eq:p1L_2} & \; \overset{\eqref{eq:Opt}}{\geq} \;  1-{\e}^{-n{\beta}}-e^\zeta\;{\e}^{-n{\alpha}}-\mathbb{P}_1\big(\sup_{t<L}\Lambda_t\geq \zeta \big)\ .
\end{align}
Note that~\eqref{eq:p1L_2} holds for any $\zeta>0$. Next, we set $\zeta\dff cLI_1$ where
\begin{align}\label{eq:c}
c \triangleq 1+\frac{\epsilon}{2I_1}\ .
\end{align}
Hence, for any $K\in\{2,\dots,L-1\}$ for the last term in~\eqref{eq:p1L_2} we have
\begin{align}
\label{eq:zeta1}
\mathbb{P}_1\Big(\sup_{t<L}\Lambda_t\geq \zeta\Big) 
&=\mathbb{P}_1\Big(\sup_{t<L}\Lambda_t\geq cLI_1\Big) \\
&\leq\mathbb{P}_1\Big(\sup_{t< K}\Lambda_t+\sup_{K\leq t<L}\Lambda_t\geq cLI_1\Big) \\
&\leq\mathbb{P}_1\Big(\sup_{t < K}\Lambda_t+\sup_{K\leq t<L}\Big\{\frac{L}{t}\Lambda_t\Big\}\geq cLI_1\Big) \\
& = \mathbb{P}_1\Big(\frac{1}{L}\sup_{t < K}\Lambda_t+\sup_{K\leq t<L}\Big\{\frac{\Lambda_t}{t}-I_1\Big\}\geq (c-1)I_1\Big) \\
&\leq\mathbb{P}_1\Big(\frac{1}{L}\sup_{t < K}\Lambda_t+\sup_{t\geq K}\Big|\frac{\Lambda_t}{t}-I_1\Big|\geq (c-1)I_1\Big)\\
& \overset{\eqref{eq:c}}{=}  \mathbb{P}_1\Big(\frac{1}{L}\sup_{t < K}\Lambda_t + \sup_{t\geq K}\Big|\frac{\Lambda_t}{t}-I_1\Big| \geq  \frac{\epsilon}{2} \Big)\\
& \label{eq:zeta2} \leq  \mathbb{P}_1\Big(\frac{1}{L}\sup_{t < K}\Lambda_t \geq  \frac{\epsilon}{4} \Big)+\P\Big(\sup_{t>K}\Big|\frac{\Lambda_{t}}{t}-I_1\Big|>\frac{\epsilon}{4}\Big)\ .
\end{align}
We show that both probability terms in~\eqref{eq:zeta2} diminish as $n$ grows. Fro the second term in~\eqref{eq:zeta2} note that from the definition of $T_\ell(h,\psi^\infty)$ in \eqref{eq:T_l} we know that corresponding to any given sampling path $\psi^\infty$ we have
\begin{align}
 \forall t \; \geq T_\ell\left(\frac{\epsilon}{4},\psi^\infty\right): \qquad \Big|\frac{\Lambda_{t}}{t}-I_1\Big|\leq \frac{\epsilon}{4}  \qquad \ .
\end{align}
This indicates that by setting $K=T_\ell(\frac{\epsilon}{4},\psi^\infty)$, it can be readily verified that
\begin{align}\label{eq:part1}
\lim_{n\rightarrow\infty} \P\Big(\sup_{t>K}\Big|\frac{\Lambda_{t}}{t}-I_1\Big|> \frac{\epsilon}{4}\Big)=0\ .
\end{align}
As a result, for $K=T_\ell(\epsilon/4,\psi^\infty)$ from \eqref{eq:zeta1}-\eqref{eq:zeta2} we obtain
\begin{align}\label{ineq}
\lim_{n\rightarrow\infty} \mathbb{P}_1&\Big(\sup_{t<L}\Lambda_t\geq  cLI_1\Big) \leq \lim_{n\rightarrow\infty} \mathbb{P}_1\Big(\frac{1}{L}\sup_{t \leq K}\Lambda_t\geq  \frac{\epsilon}{4} \Big)\ .
\end{align}
For the right hand side of \eqref{ineq} we find that for any $\epsilon>0$
\begin{align}
\mathbb{P}_1\Big(\frac{1}{L}\sup_{t<K}\Lambda_t \geq  \frac{\epsilon}{4} \Big)  \; & \leq  \; \mathbb{P}_1\Big(\frac{1}{L}\sum_{t=1}^K \Lambda_t\geq  \frac{\epsilon}{4} \Big)  \\
\label{ineq1_1}& \leq\frac{4}{\epsilon}\cdot \frac{1}{L}\bbe_1\left[\sum_{t=1}^K \Lambda_t\right] \\
\label{ineq1_2}\; & = \; \frac{4}{\epsilon}\cdot \frac{1}{L}\bbe_1\left[\sum_{t=1}^K \bbe_1[\Lambda_t]\right]\\
\label{ineq1_3}\; & \leq  \; \frac{4}{\epsilon}\cdot \frac{1}{L}\bbe_1[K]\max_{1\leq t\leq K}\bbe_1[\Lambda_t]\ ,
\end{align}
where \eqref{ineq1_1} holds due to Markov's inequality and \eqref{ineq1_2} follows from Wald's identity (general form). Next, we set
\begin{align}\label{eq:L}
L= \left\lceil\frac{n{\alpha}}{I_1+\epsilon}\right\rceil\ .
\end{align}
By recalling Lemma~\ref{lemma:kappa} we know that for sufficiently large $n$,  we have $L\leq n$. Hence, based on \eqref{ineq} and \eqref{ineq1_3} we get
\begin{align}
\label{ineq2_1}\lim_{n\rightarrow\infty} \mathbb{P}_1  \Big(\sup_{t<L}\Lambda_t\geq  cLI_1\Big) & \leq   \lim_{n\rightarrow\infty} \frac{4}{\epsilon}\cdot \frac{1}{L}\bbe_1[K]\max_{1\leq t\leq K}\bbe_1[\Lambda_t]\ \\
\label{ineq2_2} & \overset{\eqref{eq:L}} {\leq } \frac{4}{\epsilon}\cdot\frac{I_1+\epsilon}{{\alpha}} \lim_{n\rightarrow\infty}\frac{1}{n} \; \bbe_1[K]\max_{1\leq t\leq K}\bbe_1[\Lambda_t]\\
\label{ineq2_3} & = 0\ ,
\end{align}
where the last step holds by noting that $\bbe_\ell[K]=\bbe_\ell[T_\ell(\epsilon/4,\psi^\infty)]<+\infty$ specified in~\eqref{eq:finite_ave}.
Subsequently, from \eqref{eq:zeta1}, \eqref{eq:zeta2}, \eqref{eq:part1}, \eqref{ineq2_1}, and \eqref{ineq2_3} we have
\begin{align}\label{eq:part2}
\lim_{n\rightarrow\infty} \mathbb{P}_1&\Big(\sup_{t<L}\Lambda_t\geq \zeta\Big) = 0\ .
\end{align}
As a result, from \eqref{eq:p1L_1}-\eqref{eq:p1L_2} we obtain 
\begin{align}
\label{ineq3} \lim_{n\rightarrow \infty} \mathbb{P}_1\left(\frac{\tau_n}{n}>\frac{{{\alpha}}}{{I}_1+\epsilon}\right) & \overset{\eqref{eq:L}}{=}  \lim_{n\rightarrow \infty} \mathbb{P}_1\left(\tau_n > L \right)  \\
\label{ineq222} & \overset{\eqref{eq:p1L_2} \;, \; \eqref{eq:part2}}{=}  \lim_{n\rightarrow\infty} \left[1-\exp(-n{\beta}) -\exp\Big(-n{\alpha}\cdot \frac{\epsilon}{2I_1+\epsilon}\Big)\right]\\
\label{ineq4} & = 1\ ,
\end{align}
which proves \eqref{eq:p1}. Since this is always valid irrespectively of the sampling procedure and the stopping rule, we conclude that \eqref{eq:lower1} is always valid, establishing 
\begin{align}
\lim_{n\rightarrow\infty}  \frac{\bbe_1\{\tau_n\}}{n} \geq\frac{{\alpha}}{I_1}\ .
\end{align}
We can prove
\begin{align}
\lim_{n\rightarrow\infty}  \frac{\bbe_0\{\tau_n\}}{n} &\geq\frac{{\beta}}{ I_0}\ ,
\end{align}
by following the same line of argument. 
}

\section{Proof of Theorem \ref{thm:asymp2}} 
\label{App:thm:asymp2}

\textcolor{black}{Following the definition of $T_1(h,\psi^\infty)$ in~\eqref{eq:T_l}, we provide a truncated counterpart of it for a network with $n$ nodes (non-asymptotic regime) as follows.
\begin{align}
\label{T1}
R_1(h,\psi^{n})&\dff\sup\ \Big\{t\leq n\,:\,\Big|\frac{\Lambda_t}{t}-I_1\Big|>h\Big\} \ , \quad \forall h>0 \ ,
\end{align}
where we adopt the convention that the supremum of an empty set is $+\infty$. Obviously,
\begin{align}\label{eq:TR}
\lim_{n\rightarrow\infty} R_1(h,\psi^{n}) = T_1(h,\psi^\infty)\ .
\end{align}
According to the definition of the stopping time in~\eqref{eq:stop}, at the instance prior to stopping, i.e., at time $\tau^*_n-1$, we always have $\Lambda_{\tau^*_n-1}\in(\gamma^{\rm L},\gamma^{\rm U})$. We start the proof by comparing $\Lambda_{\tau^*_n-1}$ with these two bounds. First, consider the following relationship
\begin{align}\label{ineq11}
\Lambda_{\tau^*_n-1}<\gamma^{\rm U} \ .
\end{align}
Based on the definition of $R_1(h,\psi^n)$ in~\eqref{T1}, if $R_1(h,\psi^{\tau^*_n})<{\tau^*_n}-1$, then for $t=\tau^*_n-1$ we have
\begin{align}
\Big|\frac{\Lambda_{\tau^*_n-1}}{\tau^*_n-1} - I_1\Big|\leq h \ , \quad \forall h>0 \ ,
\end{align}
which indicates that for all $h\in(0,I_1)$ we have
\begin{align}\label{ineq23}
  \tau^*_n\leq   \frac{\Lambda_{\tau^*_n-1}}{I_1-h} +1 \; \overset{\eqref{ineq11}}{\leq} \; \frac{\gamma^{\rm U}}{I_1-h}+1 \ .
\end{align}
Hence, from \eqref{ineq23} for all $h\in(0,I_1)$ we have
\begin{align}
\tau^*_n & = \tau^*_n\cdot  \mathds{1}_{\{{\tau^*_n}>R_1(h,\psi^{\tau^*_n})+1\}} + \underbrace{\tau^*_n\cdot  \mathds{1}_{\{{\tau^*_n}\leq R_1(h,\psi^{\tau^*_n})+1\}}}_{\leq R_1(h,\psi^{\tau^*_n})+1}\\
& \overset{\eqref{ineq23}}{\leq }\left[\frac{\gamma^{\rm U}}{I_1-h} +1\right]\cdot  \mathds{1}_{\{{\tau^*_n}>R_1(h,\psi^{\tau^*_n})+1\}} +R_1(h,\psi^{\tau^*_n})+1\\
& \leq 2+\frac{\gamma^{\rm U}}{I_1-h}+R_1(h,\psi^{\tau^*_n})\ .
\end{align}
Subsequently,
\begin{align}
\label{ineq24}\tau^*_n & \leq 2+\inf_{h\in(0,I_1)}\frac{\gamma^{\rm U}}{I_1-h}+R_1(h,\psi^{\tau^*_n})\\
\label{ineq25} & \leq  2+\frac{\gamma^{\rm U}}{I_1}+ R_1(h,\psi^{\tau^*_n})\ .
\end{align}
Since
\begin{align}\label{eq:T_l:b}
\mathbb{E}_1\{T_1(h,\psi^\infty)\}<+\infty\ , \quad \forall h>0\ ,
\end{align}
by recalling that $\gamma^{\rm U}=n{\alpha}$, from  \eqref{eq:TR} and \eqref{ineq24}-\eqref{ineq25} we obtain
\begin{align}
\lim_{n \rightarrow\infty }\frac{\mathbb{E}_1\{ {\tau^*_n}\}}{n}\leq \frac{{\alpha}}{I_1}\ .
\end{align}
Similarly, by also considering 
\begin{align}\label{ineq1144}
\Lambda_{\tau^*_n-1}>\gamma^{\rm L} 
\end{align}
and following the same line of argument we obtain
\begin{align}
\lim_{n \rightarrow\infty }\frac{\mathbb{E}_0\{ {\tau^*_n}\}}{n}\leq \frac{{\beta}}{I_0}\ ,
\end{align}
which concludes the proof.
}

\section{Proof of Lemma \ref{lemma:finite}} 
\label{app:lemma:finite}

\textcolor{black}{
We start by showing that there exist positive constants $B$ and $c$ such that for all $t\in\{1,\dots,n\}$
\begin{align}\label{eq:conv}
\mathbb{P}_1(\hat\tau_n \geq t)\leq B {\e}^{-ct} \ .
\end{align}
For this purpose, note that
\begin{align}
\mathbb{P}_1(\hat\tau_n \geq t)&= \sum_{u=t}^n\P_1(\hat\tau_n=u)\\
& = \sum_{u=t}^n\mathbb{P}_1\Big(\delta_{\rm ML}(u-1)=\H_0 \; , \; \delta_{\rm ML}(u)=\dots=\delta_{\rm ML}(n)=\H_1\Big)\\
&\leq\sum_{u=t}^n \mathbb{P}_1(\delta_{\rm ML}(u-1)=\H_0) \\
\label{eq:exx} & \overset{\eqref{eq:ml:decision}}{=} \sum_{u=t-1}^{n-1} \mathbb{P}_1(\Lambda_u<0)\ .
\end{align}
Next, we find an upper bound on $\mathbb{P}_1(\Lambda_u<0)$. For this purpose, note that for any $s\in\mathbb{R}$ we have
\begin{align}\label{eq:exp:ineq}
\mathbb{P}_1(\Lambda_t<0)\cdot\mathbb{E}_1\big\{\exp\{s\Lambda_t\}\;|\;\mathds{1}_{\{\Lambda_t<0\}}\big\} 
&= \mathbb{E}_1\big\{\exp\{s\Lambda_t\}\mathds{1}_{\{\Lambda_t<0\}}\big\}\  \leq \mathbb{E}_1\big\{\exp\{s\Lambda_t\}\big\}\ .
\end{align}
Furthermore, for any $s<0$ we have 
\begin{align}\label{eq:gt:1}
\mathbb{E}_1\big\{\exp\{s\Lambda_t\}\;|\;\mathds{1}_{\{\Lambda_t<0\}}\big\}\geq 1 \ .
\end{align}
By combining~\eqref{eq:exp:ineq}--\eqref{eq:gt:1} we find that for any $s<0$
\begin{align}\label{eq:up}
\mathbb{P}_1(\Lambda_t<0) \leq \mathbb{E}_1\big\{\exp\{s\Lambda_t\}\big\}\ .
\end{align}
The right hand side of \eqref{eq:up} can be expanded by using the towering property of expectation as follows:
\begin{align}\label{eq:up1}
\mathbb{E}_1\Big\{\exp\{s\Lambda_t\}\Big\}\overset{\eqref{LLR}}{=}\mathbb{E}_1\Big\{\exp\{s\Lambda_{t-1}\}\cdot\mathbb{E}_1\Big\{\Big[\frac{f_1(Y_t;\psi(t)|\F_{t-1})}{f_0(Y_t;\psi(t)|\F_{t-1})}\Big]^s\;\Big|\;\F_{t-1}\Big\}\Big\}\ .
\end{align}
Now, consider the inner expectation and define
\begin{align}
\xi_t(s)\dff \mathbb{E}_1\Big\{\Big[\frac{f_1(Y_t;\psi(t)|\F_{t-1})}{f_0(Y_t;\psi(t)|\F_{t-1})}\Big]^s\;\Big|\;\F_{t-1}\Big\}\ .
\end{align}
It can be ready verified that $\xi_t(s)$ is convex in $s$ and satisfies
\begin{align}
\xi_t(-1)=\xi_t(0) = 1\ .
\end{align}
 $\xi_t(s)$ can have two possible behaviors in the range $s\in(-1,0)$:\\
{\bf Case 1:} $\xi_t(s)=1,\ \forall s\in (-1,0)$. This occurs only when the likelihood ratio inside the expectation is equal to $1$, i.e., the sample taken at time $t$ has the same likelihood values under both hypotheses. This event has measure zero. As a result, the probability of this case occurring is 0.\\
{\bf Case 2:} $\xi_t(s)<1,\ \forall s\in (-1,0)$. It means that in this case there exists a constant ${c}>0$ such that for some $s^*\in(-1,0)$ and $\forall t\leq\tau^*_n$
\begin{align}\label{eq:up2}
\xi_t(s^*) \leq {\e}^{-{c}}<1\ .
\end{align}
By successively applying the towering property as in~\eqref{eq:up1}, and accounting for Case $1$ we obtain
\begin{align}\label{eq:up3}
\mathbb{P}_1(\Lambda_t<0) \overset{\eqref{eq:up}}{\leq} \mathbb{E}_1\Big\{\exp\{s^*\Lambda_t\}\Big\} \leq {\e}^{-{c} t}\ .
\end{align}
Next, by combining~\eqref{eq:exx} and~\eqref{eq:up3} we obtain
\begin{align}
\mathbb{P}_1(\hat\tau_n \geq t) \leq \sum_{u=t-1}^{n-1} {\e}^{-c u} \leq \sum_{u=t-1}^{\infty } {\e}^{-c u} =\frac{{\e}^{c}}{1-{\e}^{-c}}{\e}^{-ct}=b {\e}^{-ct}\ ,
\end{align}
where we have defined $b\dff \frac{1}{1-{\e}^{-c}}$. By using this result, it can be ready verified that $\bbe_1\{\hat\tau_n\}$ is finite. Specifically,
\begin{align}\label{eq:eqqq}
\bbe_1\{\hat\tau_n\}&=\sum_{t=1}^\infty \mathbb{P}(\hat\tau_n \geq t) \leq\sum_{t=0}^\infty b {\e}^{-ct} =\frac{b}{1-{\e}^{-c}}\ ,
\end{align}
which shows that $\bbe_1\{\hat\tau_n\}$ is asymptotically upper bounded by a constant. The proof for $\bbe_0\{\hat\tau_n\}$ being bounded by a constant follows a similar line of argument. }

\section{Proof of Lemma \ref{lemma:psi}}
\label{app:lemma:psi}

\noindent \textcolor{black}{
For any $h>0$ define
\begin{align}\label{eq:T_max}
T_1(h)\dff \max_{\psi^\infty\in\cS(\mathbb{N})}T_1(h,\psi^\infty)\ .
\end{align}
Based on the \eqref{eq:finite_ave}, for any sampling path $\psi^\infty$ and any $t>T_1(h)$ we have
\begin{align}\label{eq:T_max2}
{\sf nLLR}_1(Y^t;\psi^t)\; \in \; [I_1(\psi^\infty)-h\; , \;  I_1(\psi^\infty)+h]\ . 
\end{align}
By expanding the joint pdfs we find that for any pair of time instances $t$ and $s$ such that $s>t>T_1(h)$ we have
\begin{align}\label{eq:eq11}
\underset{=(t+|\cS|)\times {\sf nLLR}_1(Y^t\cup X_{\cS};\psi^t\cup \cS)}{\underbrace{\ln\dfrac{f_1(Y^t\cup X_{\cS};\psi^t\cup \cS) }{f_{0}(Y^t\cup X_{\cS};\psi^t\cup \cS)}}}\; = \;
\underset{=t\times {\sf nLLR}_1(Y^t\psi^t)}{\underbrace{\ln\dfrac{f_1(Y^{t};\psi^{t})}{f_{0}(Y^{t};\psi^{t})}}}+\ln\dfrac{f_1( X_{\cS};\cS\,|\,\F_{t})}{f_{0}( X_{\cS};\cS\,|\,\F_{t})}\ .
\end{align}
Hence, from \eqref{eq:T_max2} and \eqref{eq:eq11} we find that corresponding to any sampling path $\psi^\infty$, for any set $\cS\subseteq\psi^t$ and any $h>0$ we have
\begin{align}
\frac{1}{|\cS|}\ln\dfrac{f_1( X_{\cS};\cS\,|\,\F_{t})}{f_{0}( X_{\cS};\cS\,|\,\F_{t})} \in \;\left[I_1(\psi^\infty)-\left(\frac{2t}{|\cS|}+1\right)h\; , \;  I_1(\psi^\infty)+
\left(\frac{2t}{|\cS|}+1\right) h\right]\ ,
\end{align}
which indicates that for all $h>0$ we have
\begin{align}
&\forall \cS\subseteq U^\infty\; : && \max_{\cS\in\varphi^t}   \enspace \bbe_1\bigg\{ \frac{1}{|\cS|}  \ln\dfrac{f_\ell(X_{\cS};\cS\,|\,\F_{t})}{f_{0}(X_{\cS};{\cS}\,|\,\F_{t})}\bigg\} \; \geq  I^*_1-\left(\frac{2t}{|\cS|}+1\right)h\ .
\end{align}
By selecting $h$ arbitrarily small, this lower bound can be made arbitrarily close to $I^*_1$. By noting the characteristics of the sampling rule $\psi^*(t)$ specified in~\eqref{eq:MC}, at time $t+1$ we select a node  from a set $\cS^*$ that satisfies 
\begin{align}\label{eq:S_lowerbound}
 \enspace \bbe_1\bigg\{ \frac{1}{|\cS^*|}  \ln\dfrac{f_\ell(X_{\cS^*};\cS^*\,|\,\F_{t})}{f_{0}(X_{\cS^*};{\cS^*}\,|\,\F_{t})}\bigg\} \; \geq  I^*_1-\left(\frac{2t}{|\cS|}+1\right)h\ .
\end{align}
On the other hand, corresponding to set $\psi^\infty\neq U^\infty$, we have $I_1(\psi^\infty)<I^*_1$. Consequently, for all sets $\cS\subseteq \psi^\infty$ and arbitrarily small $h$ we have 
\begin{align}\label{eq:S_upperbound}
\enspace \bbe_1\bigg\{ \frac{1}{|\cS|}  \ln\dfrac{f_\ell(X_{\cS};\cS\,|\,\F_{t})}{f_{0}(X_{\cS};{\cS}\,|\,\F_{t})}\bigg\} \; \leq  I_1(\psi^\infty)+\left(\frac{2t}{|\cS|}+1\right) h\;<\; I^*_1-\left(\frac{2t}{|\cS|}+1\right)h\ ,
\end{align}
Hence, the set $\cS^*$ from which we sample at time $t+1$ must be a subset of $U^\infty$. In other words, for all $t>T_1(h)$ the samples are taken from $U^\infty$. This indicates that for the set $\cH$ we have
\begin{align}
\bbe_1\{\cH\} \leq \bbe_1\{T_1(h)\} \overset{\eqref{eq:T_max}}{< } +\infty\ .
\end{align}
}

\newpage
\section{Proof of Lemma~\ref{lemma2}}
\label{app:lemma2}

\textcolor{black}{
Following the definition of $T_1(h,\psi^\infty)$ in~\eqref{eq:T_l} for heterogeneous networks , we provide a truncated counterpart for it defined as follows. For this purpose, we denote the first $n$ elements of $U^\infty$ by $U^n$.
\begin{align}
\label{T1hat}
R_1(h,U^{n})&\dff \sup\ \Big\{t\leq n\,:\,\Big|\frac{\Lambda_t}{t}-I^*_1\Big|>h\Big\} \ , \quad \forall h>0 \ ,
\end{align}
where clearly
\begin{align}
\lim_{n\rightarrow\infty} R_1(h,U^{n}) = T_1(h,\psi^n)\ .
\end{align}
  Hence, by accounting for the first $(\hat\tau_n+T_1(h))$ samples, some of which may have been observed from nodes not included in the first $\hat(\tau_n+T_1(h))$ elements of  $U^\infty$, the last time that the normalized log-likelihood ratios $\frac{\lambda_t}{t}$ leaves the interval $[I^*_1-h,I^*_1+h]$ will happen no later than $R_1(h,U^{n})+\hat\tau_n+T_1(h)$. In other words,
\begin{align}
\label{T1hat2}
\forall t\geq R_1(h,U^{n})+\hat\tau_n+T_1(h)\; : \qquad -h \leq\frac{\Lambda_t-\Lambda_{\hat\tau_n}}{(t-\hat\tau_n)}-I_1^* \leq h\ , \quad \forall h>0 \ .
\end{align}
If ${\tau^*_n}> R_1(h,U^n)+\hat\tau_n+T_1(h)$, then for all $h\in(0,I^*_1)$ we have
\begin{align}\label{ineq23hat}
 \tau^*_n-\hat\tau_n \leq \frac{\gamma^{\rm U}-\Lambda_{\hat\tau_n}}{I_1^*-h} +1\ .
\end{align}
Hence, for all $h\in(0,I^*_1)$ we have
\begin{align}
\tau^*_n - \hat\tau_n& = (\tau^*_n - \hat\tau_n)\cdot  \mathds{1}_{\{{\tau^*_n}>R_1(h,U^n)+\hat\tau_n+T_1(h)\}} + \underbrace{(\tau^*_n - \hat\tau_n)\cdot  \mathds{1}_{\{{\tau^*_n}\leq R_1(h,U^n)+\hat\tau_n+T_1(h)\}}}_{\leq R_1(h,U^n)+T_1(h)}\\
& \overset{\eqref{ineq23hat}}{\leq }\left[\frac{\gamma^{\rm U}-\Lambda_{\hat\tau_n}}{I_1^*-h}+1\right]\cdot  \mathds{1}_{\{{\tau^*_n}> R_1(h,U^n)+\hat\tau_n+T_1(h)\}} + R_1(h,U^n)+T_1(h) \\
& \;\; \leq \frac{\gamma^{\rm U}-\Lambda_{\hat\tau_n}}{I_1^*-h}+R_1(h,\psi^n)+T_1(h)+1\ .
\end{align}
Hence,
\begin{align}
\label{ineq24hat}
\tau^*_n-\hat\tau_n & \leq 1+\inf_{h\in(0,I^*_1)}\frac{\gamma^{\rm U}-\Lambda_{\hat\tau_n}}{I_1^*-h}+R_1(h,U^n)+T_1(h)\\
\label{ineq25hat}& = 1+\frac{\gamma^{\rm U}-\Lambda_{\hat\tau_n}}{I_1^*}+R_1(h,\psi^n)+T_1(h)\ .
\end{align}
Since the convergence of the $\sf nLLR$ is complete, we can conclude the proof of~\eqref{eq:itme1} by combining~\eqref{eq:th1} and~\eqref{ineq24hat}--\eqref{ineq25hat} to obtain
\begin{align}
\label{eq:kl:pos} \lim_{n \rightarrow\infty }\frac{\mathbb{E}_1\{ {\tau^*_n-\hat\tau_n}\}}{n} &\leq \frac{{\alpha}}{I_1^*}-\lim_{n \rightarrow\infty }\frac{\mathbb{E}_1\{\Lambda_{\hat\tau_n}\}}{nI_1^*} +\lim_{n\rightarrow\infty}\frac{\bbe_1[R_1(h,\psi^n)]}{n}+\lim_{n \rightarrow\infty }\frac{\bbe_1\{T_1(h)\}}{n}\\
\label{eq:kl:pos2}
&\leq \frac{{\alpha}}{I_1^*}+\lim_{n\rightarrow\infty}\frac{\bbe_1[T_1(h,U)]}{n}+\lim_{n \rightarrow\infty }\frac{\bbe_1\{T_1(h)\}}{n}\\
& = \frac{{\alpha}}{I_1^*}\ ,
\end{align}
where~\eqref{eq:kl:pos} holds since $\mathbb{E}_1\{\Lambda_{\hat\tau_n}\}$ is a KL divergence term and it is non-negative, and \eqref{eq:kl:pos2} holds since $\bbe_1\{T_1(h)\}$ and $\bbe_1\{T_1(h)\}$ are finite values.
}




\section{Proof of Theorem \ref{thm:per}} 
\label{App:thm:per}

The error exponents of the NP test are studied in \cite{Chen}, where it is shown that when \textcolor{black}{$\mathsf{P}^{0}_{\rm NP}$} is fixed, which is equivalent to an error exponent of $0$, the error exponent of $\mathsf{P}_n^{1}$ is the convergence limit of ${\sf nLLR}_0(Y^n;\psi^n)$ as $n$ grows under the assumption that $\{Y_1,\dots,Y_n\}$ are drawn from distribution $f_0$. This is equivalent to the definition of $I_0$. Hence, for the NP test we have \textcolor{black}{$E^{1}_{\rm NP}=I_0$ and $E^{0}_{\rm NP}=0$}. For the sequential sampling setting, based on the analysis of the average delay in  \textcolor{black}{
Theorems~\ref{thm:asymp1} and~\ref{thm:asymp2}} we have
\begin{align}
\label{eq:eqeq}
\text{and}\quad \lim_{n\rightarrow\infty} \frac{\bbe_1\{\tau_n^*\}}{n} &=\frac{{\alpha}}{ I_1}\ .
\end{align}
\textcolor{black}{
For the error exponent of $\mathsf{P}_n^{0}$ yielded by Algorithm~\ref{table} we have
\begin{align}\label{eq:Efan}
E_n^{0} & = -\lim_{n\rightarrow\infty} \frac{1}{r_1}\ln\mathsf{P}^{0}_n(r_1)\\
\label{eq:feaz}
& \geq -\lim_{n\rightarrow\infty}-\frac{n{\alpha}}{r_1}\\
& = \lim_{n\rightarrow\infty}\frac{n}{\bbe_1\{\tau^*_n\}}\cdot {\alpha} \\
& \overset{\eqref{eq:eqeq}}{=} I_1 \ , 
\end{align}
where~\eqref{eq:feaz} follows from Algorithm~1 generating $(\alpha,\beta)$-accurate decisions. Next, we define $\Delta \dff E_n^{0} - I_1\geq 0$, based on which we have 
\begin{align}\label{neqeq}
-\ln \mathsf{P}_n^{0}(r_1) \overset{\eqref{eq:Efan}}{=} r_1E_n^{0}  +o(n)= r_1\Delta + r_1I_1+o(n)\ .
\end{align}
On the other hand, we observe that in the proof of Theorem~\ref{thm:asymp1}, $\mathsf{P}_n^{0}$ has been replaced by its upper bound. By keeping $\mathsf{P}_n^{0}$ throughout the proof it can be readily shown that
\begin{align}
\label{eq:ffeeaa}
 \lim_{n\rightarrow\infty} \frac{\mathbb{E}_1\{{\tau^*_n}\}}{n}&\geq \frac{|\ln\mathsf{P}^{0}_n(r_1)|}{nI_1}\ .
\end{align}
By combining~\eqref{neqeq} and~\eqref{eq:ffeeaa} we obtain
\begin{align}
\lim_{n\rightarrow\infty} \frac{\mathbb{E}_1\{{\tau^*_n}\}}{n}
& \geq \lim_{n\rightarrow\infty}\frac{|r_1\Delta + r_1I_1+o(n)|}{nI_1} \\
& = \lim_{n\rightarrow\infty}\frac{r_1}{n}\cdot \frac{\Delta+I_1}{I_1}\\
\label{llaa}& \overset{\eqref{eq:eqeq}}{=} \frac{{\alpha}}{I_1}\cdot \frac{\Delta+I_1}{I_1}\ .
\end{align}
On the other hand, from Theorem~\ref{thm:asymp2} we have
\begin{align}\label{llaa2}
\lim_{n\rightarrow\infty} \frac{\bbe_1\{\tau_n^*\}}{n} &\leq\frac{{\alpha}}{ I_1}\ .
\end{align}
By comparing~\eqref{llaa} and \eqref{llaa2} and noting that $\Delta\geq 0$, in the asymptote of large $n$ we should have $\Delta=0$, and consequently, $E_n^{0}=I_1$. The error exponent of $\mathsf{P}_n^{1}$ can be obtained by following the same line of argument.}

\section{Proof of Theorem \ref{thm:markov}} 
\label{App:thm:markov}

\textcolor{black}{
Without loss of generality assume that at time $t-1$ we have $\delta_{\rm ML}(t-1)={\sf H}_\ell$. By recalling the definition of  $\mathcal{R}_t^i$ given  in~\eqref{eq:S_valid}, corresponding to any unobserved node $i\in\varphi^t$ at time $t$ we  define $\bar\cS_t^i\in\mathcal{R}_t^i$ as the {\sl smallest} set of nodes that maximizes the normalized information measure assigned to node $i\in\varphi^t$ at time $t$, i.e.,
\begin{align}\label{eq:S_bar}
\bar\cS_t^i\dff\arg\max_{\cS\in\mathcal{R}_t^i} \frac{M_\ell^i(t,\cS)}{|\cS|}\ .
\end{align}
Also, we define
\begin{align}\label{eq:u}
u\dff  \argmax_{i} \frac{M_\ell^i(t,\bar\cS_t^i)}{|\bar\cS_t^i|}\ ,
\end{align}
as the index of the node that exhibits the largest normalized information measure\footnote{For convenience in notation, we suppressed the dependence of $u$ on $t$, $\ell$, and the past samples.}, selected by the selection rule specified in~\eqref{eq:MC}. Hence, the optimal sampling path is the set $\bar\cS_t^u$. In order to prove the theorem, we show that the maximum normalized information measure achieved by the set $\bar\cS_t^u$ is equal to the normalized information measure achieved by only the members of $\bar\cS_t^i\in\mathcal{R}_t^i$ that are neighbors of $u$. In other words, by defining
\begin{align}
{\cT}_t^u\dff \bar\cS_t^u\cap\mathcal{L}_t^u\ ,
\end{align}
we show that
\begin{align}
\frac{M_\ell^u(t,\bar\cS_t^u)}{|\bar\cS_t^u|} =\frac{M_\ell^u(t,\cT_t^u)}{|\cT_t^u|}\ . 
\end{align}
We prove this identity by removing the nodes not neighboring $u$ in four steps, and in each step showing that removing those nodes does not penalize $\frac{M_\ell^u(t,\bar\cS_t^u)}{|\bar\cS_t^u|} $.
\begin{enumerate}
\item Removing any node in $\bar\cS_t^u$ that belongs to a subgraph of $\G$ different from the subgraph that containins $u$, does not decrease $\frac{M_\ell^u(t,\bar\cS_t^u)}{|\bar\cS_t^u|} $. 
\item Furthermore, removing any node of $\bar\cS_t^u$ whose path to $u$ contains a node that has been observed earlier, does not decrease $\frac{M_\ell^u(t,\bar\cS_t^u)}{|\bar\cS_t^u|} $. 
\item Moreover, removing any node of $\bar\cS_t^u$ whose path  to $u$ contains an unobserved node that does not belong to $\bar\cS_t^u$, does not decrease $\frac{M_\ell^u(t,\bar\cS_t^u)}{|\bar\cS_t^u|} $.
\item Finally, removing any remaining node that is not a neighbor of $u$ does not decrease $\frac{M_\ell^u(t,\bar\cS_t^u)}{|\bar\cS_t^u|} $.
\end{enumerate}
{\bf Step 1:} First we show that removing the nodes from all subgraphs of $\G$ other than the one that containing node $u$, does not increase the information measure of node $u$. For this purpose,  we partition $\bar{S}_t^u$ according to
\begin{align}
\bar{S}_t^u= A\cup \bar A\ ,\quad\text{and}\quad A \cap \bar A=\phi \ ,
\end{align}
where $A\subseteq\bar{S}_t^u$ is the set of nodes that belong to the same subgraph as $u$, and $\bar A\dff\bar{S}_t^u\setminus A$.
We expand the information measure of $u$ as follows:
\begin{align}
\label{eq:external}
\frac{M_\ell^u(t,\bar\cS_t^u)}{|\bar\cS_t^u|} 
&= \frac{D_{\rm KL}\big(f_\ell(X_{\bar A}|\F_{t-1})\;\|\;f_{1-\ell}(X_{\bar A}|\F_{t-1})\big)}{|\bar\cS_t^u|} \\
\label{eq:internal}
&\qquad + \frac{D_{\rm KL}\big(f_\ell(X_{A}|\F_{t-1})\;\|\;f_{1-\ell}(X_{A}|\F_{t-1})\big)}{|\bar\cS_t^u|} \ .
\end{align}
We note that $A$ is non-empty since $u\in A$. We show that if $\bar A$ is not empty, removing it does not decrease the information measure of node $u$. Suppose otherwise, i.e., $\bar A$ is non-empty and 
\begin{align}\label{eq:cont:assump}
\frac{D_{\rm KL}\big(f_\ell(X_{A}|\F_{t-1})\;\|\;f_{1-\ell}(X_{A}|\F_{t-1})\big)}{|A|} < \frac{M_\ell^u(t,\bar\cS_t^u)}{|\bar\cS_t^u|}\ .
\end{align}
Then, in order for~\eqref{eq:external}--\eqref{eq:internal} to hold, we must have
\begin{align}
\frac{D_{\rm KL}\big(f_\ell(X_{\bar A}|\F_{t-1})\;\|\;f_{1-\ell}(X_{\bar A}|\F_{t-1})\big)}{|\bar A|} > \frac{M_\ell^u(t,\bar\cS_t^u)}{|\bar\cS_t^u|}\ .
\end{align}
Denote one of the members of $\bar A$ by $v$. Then, by noting that $\bar A\subseteq {\cal}^v_t$ and invoking the definition of $u$, we have
\begin{align}\label{eq:cont:assump2}
\frac{D_{\rm KL}\big(f_\ell(X_{\bar A}|\F_{t-1})\;\|\;f_{1-\ell}(X_{\bar A}|\F_{t-1})\big)}{|\bar A|} 
\overset{\eqref{eq:S_bar}}{\leq} \max_{\cS\in\mathcal{R}_t^v}\frac{M_\ell^v(t,\cS)}{|\cS|} \overset{\eqref{eq:u}}{\leq} \frac{M_\ell^u(t,\bar\cS_t^u)}{|\bar\cS_t^u|}\ ,
\end{align}
which contradicts \eqref{eq:cont:assump}. 
Hence, we remove  all the nodes that do not belong to the subgraph of $\G$ that contains $u$, and assume that the optimal set $\bar\cS^u_t$ is free of such nodes. In the next steps, we focus only on the nodes that belong to the same subgraph that $u$ lies in. \vspace{.1 in}}\\
\textcolor{black}{
\noindent {\bf Step 2:} Next, we show that further removing the nodes whose path to $u$ contains a node that has been observed earlier, does not increase the information measure of $u$. For this purpose, we partition $\bar\cS_t^u$ according to
\begin{align}
\bar{S}_t^u= B \cup \bar B\ ,\quad\text{and}\quad B \cap \bar B=\phi \ ,
\end{align}
where $B\subseteq \bar{S}_t^u$ is the set of nodes whose paths to $u$ includes an observed node, i.e. an element of $\psi_n^{t-1}$. According to the global Markov property we have 
\begin{align}
B \independent \bar B\ \big|\ \F_{t-1}\ ,
\end{align}
Hence, we have the decomposition
\begin{align}
\frac{M_\ell^u(t,\bar\cS_t^u)}{|\bar\cS_t^u|} 
&= \frac{D_{\rm KL}\big(f_\ell(X_{B}|\F_{t-1})\;\|\;f_{1-\ell}(X_{B}|\F_{t-1})\big)}{|\bar\cS_t^u|} \\
&\qquad + \frac{D_{\rm KL}\big(f_\ell(X_{\bar B}|\F_{t-1})\;\|\;f_{1-\ell}(X_{\bar B}|\F_{t-1})\big)}{|\bar\cS_t^u|} \ .
\end{align}
We can follow the exact same line of argument as in Step 1,  to prove that removing the the nodes in $B$ does not decrease the information measure of node $u$, and consequently, the selected node. \vspace{.1 in}}\\
\textcolor{black}{
\noindent{\bf Step 3:} In the next step, we show that further removing any node of $\cS^u_t$ whose path to $u$ contains an unobserved node that does not belong to $\cS^u_t$ can be also removed without penalizing the desired information measure. For this purpose, we partition the set $\bar\cS_t^u$ according to
\begin{align}
\bar{S}_t^u= C\cup \bar C\ ,\quad\text{and}\quad C\cap \bar C=\phi \ ,
\end{align}
where $\bar C$ is the set of nodes whose paths to $u$ contains at least one node that does not belong to $\bar\cS_t^u$. Let us also define the set $C_j$ as a subset of $\bar C$ whose paths to $u$ contains the unobserved node $j\notin\cS^u_t$. Since the graph is acyclic, the sets $\{C_j\}$ are disjoint and partition $\bar C$, i.e. 
\begin{align}
\bar C=\bigcup_{j\in J} C_j\ , \quad\text{and}\quad C_j \cap C^{j'}=\phi\ ,\ \forall j,j'\in J \ ,
\end{align}
where we have defined $J$ as the smallest set that separates $C$ and $\bar C$. Then, we expand the information measure of $u$ as follows:
\begin{align}
\label{eq:connect}
\frac{M_\ell^u(t,\bar\cS_t^u)}{|\bar\cS_t^u|} 
&= \frac{D_{\rm KL}\big(f_\ell(X_{C}|\F_{t-1})\;\|\;f_{1-\ell}(X_{C}|\F_{t-1})\big)}{|\bar\cS_t^u|} \\
\label{eq:diconnect}
&\qquad + \sum_{j\in J} \frac{ D_{\rm KL}\big(f_\ell(X_{C_j}|X_{\M_j},\F_{t-1})\;\|\;f_{1-\ell}(X_{C_j}|X_{\M_j},\F_{t-1})\big)}{|\bar\cS_t^u|}\\
&\leq \max\Bigg\{\frac{D_{\rm KL}\big(f_\ell(X_{C}|\F_{t-1})\;\|\;f_{1-\ell}(X_{C}|\F_{t-1})\big)}{|C|}, \\
\label{eq:diconnect22}
&\qquad\qquad\qquad \max_{j\in J}\frac{ D_{\rm KL}\big(f_\ell(X_{C_j}|X_{\M_j},\F_{t-1})\;\|\;f_{1-\ell}(X_{C_j}|X_{\M_j},\F_{t-1})\big)}{|C_j|}\Bigg\} \ ,
\end{align}
where we have defined
\begin{align}\label{eq:n_j}
 \M_j \dff \N_j \cap C\ .
\end{align}
We prove this step by contradiction. Suppose that
\begin{align}
\frac{M_\ell^u(t,\bar\cS_t^u)}{|\bar\cS_t^u|} 
& > \frac{D_{\rm KL}\big(f_\ell(X_{C}|\F_{t-1})\;\|\;f_{1-\ell}(X_{C}|\F_{t-1})\big)}{|C|} \ .
\end{align}
Hence, for~\eqref{eq:connect}--\eqref{eq:diconnect22} to hold, we should have
\begin{align}
\frac{M_\ell^u(t,\bar\cS_t^u)}{|\bar\cS_t^u|} 
& < \max_{j\in J}\frac{ D_{\rm KL}\big(f_\ell(X_{C_j}|X_{\M_j},\F_{t-1})\;\|\;f_{1-\ell}(X_{C_j}|X_{\M_j},\F_{t-1})\big)}{|C_j|} \ ,
\end{align}
indicating that there exists at least one $j\in J$ such that
\begin{align}\label{contr}
\frac{M_\ell^u(t,\bar\cS_t^u)}{|\bar\cS_t^u|} 
& < \frac{ D_{\rm KL}\big(f_\ell(X_{C_j}|X_{\M_j},\F_{t-1})\;\|\;f_{1-\ell}(X_{C_j}|X_{\M_j},\F_{t-1})\big)}{|C_j|} \ .
\end{align}
Next, by defining $\bar C_j\dff C_j\cup\{j\}$, we consider the following two different expansions for 
\begin{align}
D_{\rm KL} \big(f_\ell(X_{\bar C_j}|X_{\M_j},\F_{t-1})\;\|\;f_{1-\ell}(X_{\bar C_j}|X_{\M_j},\F_{t-1})\big)\ .
\end{align}
Specifically, on one hand we have
\begin{align}\label{eq:LHS:11}
D_{\rm KL} \big(f_\ell(X_{\bar C_j}|X_{\M_j},\F_{t-1})&\;\|\;f_{1-\ell}(X_{\bar C_j}|X_{\M_j},\F_{t-1})\big) \\
& = D_{\rm KL} \big(f_\ell(X_j|X_{\M_j},\F_{t-1})\;\|\;f_{1-\ell}(X_j|X_{\M_j},\F_{t-1})\big) \\
& + D_{\rm KL}\big(f_\ell(X_{C_j}|X_{j},\F_{t-1})\;\|\;f_{1-\ell}(X_{C_j}|X_{j},\F_{t-1})\big)\ ,
\end{align}
and on the other hand we have
\begin{align}
D_{\rm KL} \big(f_\ell(X_{\bar C_j}|X_{\M_j},\F_{t-1})&\;\|\;f_{1-\ell}(X_{\bar C_j}|X_{\M_j},\F_{t-1})\big) \\
& = D_{\rm KL} \big(f_\ell(X_{C_j}|X_{\M_j},\F_{t-1})\;\|\;f_{1-\ell}(X_{C_j}|X_{\M_j},\F_{t-1})\big) \\
& + D_{\rm KL}\big(f_\ell(X_j|X_{\M_j},X_{C_j},\F_{t-1})\;\|\;f_{1-\ell}(X_{j}|X_{\M_j},X_{C_j},\F_{t-1})\big)\ .
\end{align}
Since the KL divergence is a convex function in both of its arguments and $f_\ell(X_j|X_{\M_j},\F_{t-1})$  is the average of $f_\ell(X_j|X_{\M_j},X_{C_j},\F_{t-1})$, by applying Jensen's inequality we obtain
\begin{align}
D_{\rm KL}\big(f_\ell(X_j|X_{\M_j},X_{C_j},\F_{t-1})\;\|\;&f_{1-\ell}(X_{j}|X_{\M_j},X_{C_j},\F_{t-1})\big) \geq \\
\label{eq:jensen:1}
&D_{\rm KL} \big(f_\ell(X_{j}|X_{\M_j},\F_{t-1})\;\|\;f_{1-\ell}(X_{j}|X_{\M_j},\F_{t-1})\big)\ .
\end{align}
By combining~\eqref{eq:LHS:11}--\eqref{eq:jensen:1} we get
\begin{align}
D_{\rm KL}\big(f_\ell(X_{C_j}|X_{j},\F_{t-1})\;\|\;&f_{1-\ell}(X_{C_j}|X_{j},\F_{t-1})\big) \geq \\
& D_{\rm KL} \big(f_\ell(X_{C_j}|X_{\M_j},\F_{t-1})\;\|\;f_{1-\ell}(X_{C_j}|X_{\M_j},\F_{t-1})\big)\ ,
\end{align}
which in conjunction with~\eqref{contr} yields
\begin{align}
\frac{M_\ell^u(t,\bar\cS_t^u)}{|\bar\cS_t^u|} 
< \frac{D_{\rm KL}\big(f_\ell(X_{C_j}|X_{j},\F_{t-1})\;\|\;f_{1-\ell}(X_{C_j}|X_{j},\F_{t-1})\big)}{|C_j|} \leq \frac{M_\ell^u(t,C_j)}{|C_j|} \ .
\end{align}
This identity, however, contradicts the optimality of $u$, that is $u$ is the node with the largest information measure.  \vspace{.1 in}}\\
\textcolor{black}{
\noindent {\bf Step 4:} The first three steps, collectively, establish that based on the definition of $\bar\cS_t^u$ (being the smallest set that maximizes the information measure), the graph formed by the set of nodes in $\bar\cS_t^u$ is connected and is not separated by any subset of nodes in $\V\setminus\bar\cS_t^u$. This indicates that so far we have shown that $\bar\cS_t^u$ should contain only neighbors of $u$ or other nodes that are connected to $u$ via a neighbor of $u$. In the final stage we show cannot contain any node other than the neighbors of $u$. By contradiction, suppose that $\bar\cS_t^u$ contains at least one node that is not a neighbor of $u$. We denote this node by $k$. By defining
\begin{align}
\cS_t^u \dff\bar\cS_t^u\setminus\{k\}\ ,
\end{align}
we have
\begin{align}\label{eq:ffff}
\frac{M_\ell^u(t,\bar\cS_t^u)}{|\bar\cS_t^u|} =  
\frac{M_\ell^u(t,\cS_t^u) + D_{\rm KL}\big(f_\ell(X_{k}|X_{\M_k},\F_{t-1})\;\|\;f_{1-\ell}(X_{k}|X_{\M_k},\F_{t-1})\big)}{|\bar\cS_t^u|}\ ,
\end{align}
where we have defined
\begin{align}\label{eq:n_k}
\M_k \dff \N_k \cap \cS_t^u \ .
\end{align}
Since $\bar\cS_t^u$ maximizes the normalized information content of $u$, we have
\begin{align}
\frac{M_\ell^u(t,\bar\cS_t^u)}{|\bar\cS_t^u|} >  
\frac{M_\ell^u(t,\cS_t^u)}{|\cS_t^u|}\ ,
\end{align}
and, consequently, in order for~\eqref{eq:ffff} to hold we should have
\begin{align}\label{qwer}
D_{\rm KL}\big(f_\ell(X_{k}|X_{\M_k},\F_{t-1})\;\|\;f_{1-\ell}(X_{k}|X_{\M_k},\F_{t-1})\big) > \frac{M_\ell^u(t,\bar\cS_t^u)}{|\bar\cS_t^u|}\ .
\end{align}
On the other hand, we have
\begin{align}
D_{\rm KL}\big(f_\ell(X_{k}|X_{\M_k},\F_{t-1})\;\|\;f_{1-\ell}(X_{k}|X_{\M_k},\F_{t-1})\big) < \frac{M_\ell^k(t,\bar\cS_t^k)}{|\bar\cS_t^k|}\ ,
\end{align}
which combined with~\eqref{qwer} indicates
\begin{align}
\frac{M_\ell^u(t,\bar\cS_t^u)}{|\bar\cS_t^u|} < \frac{M_\ell^k(t,\bar\cS_t^k)}{|\bar\cS_t^k|}\ .
\end{align}
This contradicts the optimality of of $u$, and as a result  $\bar\cS_t^u$ cannot contain any node that is not a neighbor of $u$. This completes the proof.}

\section{Proof of Theorem \ref{thm:per3}} 
\label{App:thm:per3}

For a GMRF with an underlying line dependency graph, when $\sigma_{ij}=\sigma$ \textcolor{black}{among the neighboring nodes, we have a homogeneous networks} in which
\begin{align}
 I_0 &= \ln(1-\sigma^2)+\frac{2\sigma^2}{1-\sigma^2}\ ,  \qquad  \mbox{and}\qquad I_1 = \ln\frac{1}{1-\sigma^2} \ .
\end{align}
By applying these identities \textcolor{black}{to sets $A$ and $B$ in which $\sigma>a$ and $\sigma<b$, respectively,} and noting that $I_0$ and $I_1$ are monotonically increasing functions of $|\sigma|$ we have
\begin{align}
\frac{ I_0(A)}{I_0(B)} & \geq \frac{\ln(1-a^2)+\frac{2a^2}{1-a^2}}{\ln(1-b^2)+\frac{2b^2}{1-b^2}}   \\
& = \frac{-a^2-\frac{a^4}{2}-\frac{a^6}{3}-o(a^6)+2a^2\big(1+a^2+a^4+o(a^4)\big)}{-b^2-\frac{b^4}{2}-\frac{b^6}{3}-o(b^6)+2b^2\big(1+b^2+b^4+o(b^4)\big)}   \\
& = \frac{a^2+\frac{3}{2}a^4+\frac{5}{6}a^6+o(a^6)}{b^2+\frac{3}{2}b^4+\frac{5}{6}b^6+o(b^6)}   \\
& \geq \frac{a^2}{b^2} \ ,
\end{align}  
where the last inequality holds since $a>b$. Similarly, for the expected delays under $\H_1$ we have
\begin{align}
\frac{ I_1(A)}{I_1(B)}  & \geq \frac{-\ln(1-a^2)}{-\ln(1-b^2)}   \\
& = \frac{a^2+\frac{a^4}{2}+\frac{a^6}{3}o(a^6)}{b^2+\frac{b^4}{2}+\frac{b^6}{3}+o(b^6)}   \\
& = \frac{a^2(1+\frac{1}{2}a^2+\frac{1}{3}a^4+o(a^4)}{b^2(1+\frac{1}{2}b^2+\frac{1}{3}b^4+o(b^4)}   \\
& \geq \frac{a^2}{b^2} \ .
\end{align}  
When $|A|=o(n)$, the Chernoff rule starts the sampling process from set $B$ with probability $1$ and since the graph is connected stays in set $B$ until it exhaust all its nodes. By invoking the results of Theorem~\ref{thm:asymp2}, we can conclude that the expected delay of the Chernoff rule under $\H_\ell$ is inversely proportional to $ I_\ell(B)$. Furthermore, from Corollary~\ref{thm:opt:L} and Theorem~\ref{thm:opt:U} the expected delay of our strategy under $\H_\ell$ is inversely proportional to $I_\ell(A)$, which concludes the proof

\section{Proof of Theorem \ref{thm:mn}} 
\label{App:thm:mn}

\textcolor{black}{
We define $\tau_d\triangleq \tau_{\rm c}-\tau^*_n$. The optimal sampling strategy starts by directly sampling from set $A$. For the Chernoff rule, however, there is a chance that it starts sampling from $B$ before entering $A$. We define $\tau_{\rm c}^A$ and $\tau_{\rm c}^B$ as the number of samples that the Chernoff rule spends on sets $A$ and $B$, respectively. We show that
\begin{align}\label{eq:diff}
\bbe_\ell\{\tau_{\rm c}^A\} \geq \bbe_\ell\{\tau^*_n\}\ ,\qquad \mbox{and} \qquad \bbe_\ell\{\tau_{\rm c}^B\} = \Theta\left(\frac{n}{p}\right)\ ,
\end{align}
which indicates the desires result, i.e., 
\begin{align}
0 \leq \bbe_\ell\{\tau_d\} = \bbe_\ell\{\tau_{\rm c}^A\} + \bbe_\ell\{\tau_{\rm c}^B\} -  \bbe_\ell\{\tau^*_n\}\ \geq  \Theta\left(\frac{n}{p}\right)\ .
\end{align}
The first identity in  \eqref{eq:diff}  follows the optimality of $\tau^*_n$. Specifically, the optimal rule starts by sampling from $A$ and stays inside $A$ until the stopping time $\tau^*_n$. On the other hand, the Chernoff rule might start from sampling $B$, but once it enters $A$ it remains there until it takes $\tau_{\rm c}^A$ samples. By noting the optimality of $\tau^*_n$, we immediately have the first identity in \eqref{eq:diff}. }\textcolor{black}{
In order to establish the second identity in \eqref{eq:diff}, we provide lower and upper bounds on the asymptotic value of $\bbe_\ell\{\tau_{\rm c}^B\}$. By definition, any sampling rule can take at most $(n-p)$ samples from set $B$. Hence, we obtain an upper bound as follows:
\begin{align}
\label{eq:diff1} \bbe_\ell\{\tau_{\rm c}^B\} & = \sum_{k=0}^{n-p} k\cdot \P_\ell(\tau_{\rm c}^B=k) \\
\label{eq:diff2} & = \sum_{k=0}^{n-p} k\cdot \frac{{n-p \choose k }}{{n \choose k}} \cdot \frac{p}{n-k} \\
\label{eq:diff3} & = \sum_{k=1}^{n-p} k\cdot\frac{p}{n} \cdot {\frac{(n-p)!}{(n-p-k)!}}\cdot {\frac{(n-k-1)!}{(n-1)!}}\\
\label{eq:diff4} & = \sum_{k=1}^{n-p} k\cdot\frac{p}{n} \prod_{i=0}^{k-1}\underbrace{\frac{n-p-i}{n-1-i}}_{\leq \frac{n-p}{n-1}}\\
\label{eq:diff5} & \leq  \frac{p}{n} \sum_{k=1}^{n-p} k\cdot \left(1-\frac{p-1}{n-1}\right)^k \ .
\end{align}
Hence, by noting that $p=o(n)$ we obtain
\begin{align}\label{eq:diff6}
\lim_{n\rightarrow \infty} \frac{\bbe_\ell\{\tau_{\rm c}^B\}}{\frac{n}{p}}  & \leq \lim_{n\rightarrow \infty} \left(\frac{p}{n}\right)^2 \sum_{k=1}^{n-p} k\cdot \left(1-\frac{p-1}{n-1}\right)^k  = \lim_{n\rightarrow \infty} \left(\frac{p}{n}\right)^2 \left(\frac{n-1}{p-1}\right)^2 = 1\ .
\end{align}
For the lower bound, from \eqref{eq:diff4}  we have
\begin{align}
\bbe\{\tau_d\} & = \sum_{k=1}^{n-p} k\cdot\frac{p}{n} \prod_{i=0}^{k-1}\frac{n-p-i}{n-1-i}\\
& \geq  \sum_{k=1}^{\lfloor\frac{n-p}{2}\rfloor} k\cdot\frac{p}{n} \prod_{i=0}^{k-1} \frac{n-p-i}{n-1-i}\\
& \geq  \sum_{k=1}^{\lfloor\frac{n-p}{2}\rfloor} k\cdot\frac{p}{n} \prod_{i=0}^{k-1} \frac{n-p-{\lfloor\frac{n-p}{2}\rfloor}}{n-1-{\lfloor\frac{n-p}{2}\rfloor}}\\
& \geq  \sum_{k=1}^{\lfloor\frac{n-p}{2}\rfloor} k\cdot\frac{p}{n} \prod_{i=0}^{k-1} \frac{\frac{n-p}{2}}{\frac{n+p}{2}}\\
& = \frac{p}{n}  \sum_{k=1}^{\lfloor\frac{n-p}{2}\rfloor} k\left(\frac{n-p}{n+p}\right)^k\\
& = \frac{p}{n}  \sum_{k=1}^{\lfloor\frac{n-p}{2}\rfloor} k\left(1-\frac{2p}{n+p}\right)^k\ .
\end{align}
Hence, by noting that $p=o(n)$ we obtain
\begin{align}
\lim_{n\rightarrow \infty} \frac{\bbe_\ell\{\tau_{\rm c}^B\}}{\frac{n}{p}}  & \geq \lim_{n\rightarrow \infty} \left(\frac{p}{n}\right)^2\sum_{k=1}^{\lfloor\frac{n-p}{2}\rfloor} k\left(1-\frac{2p}{n+p}\right)^k\ \\
& = \lim_{n\rightarrow \infty} \left(\frac{p}{n}\right)^2 \left(\frac{n+p}{2p}\right)^2\\
\label{eq:diff7}  & = \frac{1}{4}\ .
\end{align}
Hence, from \eqref{eq:diff6} and \eqref{eq:diff7} we have
\begin{align}
\bbe_\ell\{\tau_{\rm c}^B\} =\Theta\left(\frac{n}{p}\right)\ ,
\end{align}
which completes the proof.
}

\bibliographystyle{unsrt}

\bibliography{CS_20200720}

\end{document}